\def\paragraph{\@startsection{paragraph}{4}%
  \z@\z@{-\fontdimen2\font}%
  {\normalfont\bfseries}}
\tikzset{>={Latex}}
\newcommand{\N}{\mathbb{N}}
\newcommand{\R}{\mathbb{R}}
\newcommand{\mc}{\mathcal}
\newcommand{\mf}{\mathfrak}
\renewcommand{\a}{\alpha}
\renewcommand{\b}{\beta}
\newcommand{\g}{\gamma}
\renewcommand{\d}{\delta}
\newcommand{\e}{\varepsilon}
\newcommand{\w}{\omega}
\newcommand{\s}{\sigma}
\newcommand{\ph}{\varphi}
\renewcommand{\t}{\tau}
\newcounter{desccount}
\newcommand{\descref}[1]{\hyperref[#1]{#1}}
\newcommand{\leb}{\mathscr{L}}
\DeclareMathOperator*{\esssup}{ess\,sup}
\newcommand{\lp}{\left(}
\newcommand{\rp}{\right)}
\newcommand{\lbar}{\left|}
\newcommand{\rbar}{\right|}
\newcommand{\lnorm}{\left\|}
\newcommand{\rnorm}{\right\|}
\newcommand\mattwo[4]{\left(\begin{smallmatrix}
			{#1} & {#2}\\
     			{#3} & {#4}
                     \end{smallmatrix}\right)}
\newcommand\mattres[9]{\left(\begin{smallmatrix}
			{#1} & {#2}&{#3}\\
     			{#4} & {#5}&{#6}\\
     				{#7} &{#8} &{#9}
                     \end{smallmatrix}\right)}
\newcommand\vecthree[3]{\left(\begin{smallmatrix}
			{#1} \\
     			{#2} \\
			{#3}
                     \end{smallmatrix}\right)}
\newcommand\vecfour[4]{\left(\begin{smallmatrix}
			{#1} \\
     			{#2} \\
			{#3} \\
			{#4}
                     \end{smallmatrix}\right)}
\newcommand{\borel}{\operatorname{Borel}}	
\newcommand{\set}[1]{\left\{#1\right\}}
\newcommand{\la}{\langle}
\newcommand{\ra}{\rangle}
\renewcommand{\r}{\rightarrow}
\newcommand{\norm}[1]{\|#1\|} 
\newcommand{\supp}{\operatorname{supp}}
\newcommand{\inv}{^{-1}}
\newcommand{\diff}{\, d}
\newcommand{\myint}{\int\!}
\newcommand{\Nm}{\Ngen}
\newcommand{\Ngen}{\mathcal{N}}
\newcommand{\dis}{\operatorname{dis}}		
\newcommand{\prob}{\operatorname{Prob}}
\newcommand{\dn}{d_{\mathcal{N}}}	
\newcommand{\dnp}{d_{\mathcal{N} \!,p}} 
\newcommand{\dgp}{d_{\mathcal{N} \!,\a}^{\, \mc{G}\mc{P}}} 
\newcommand{\dgpzero}{d_{\mathcal{N} \!,0}^{\, \mc{G}\mc{P}}} 
\newcommand{\dnt}{d_{\mathcal{N} \!,2}} 
\newcommand{\dninf}{d_{\mathcal{N} \!,\infty}}
\newcommand{\coup}{\mathscr{C}}
\newcommand{\di}{d_{\operatorname{I}}}
\newcommand{\sph}{\mathbb{S}}
\newcommand{\1}{\mathbf{1}}
\newcommand{\size}{\operatorname{size}}
\newcommand{\diag}{\operatorname{diag}}
\newcommand{\subw}{\operatorname{subSize}}
\newcommand{\supw}{\operatorname{supSize}}
\newcommand{\eccout}{\operatorname{ecc}^{\operatorname{out}}}
\newcommand{\eccin}{\operatorname{ecc}^{\operatorname{in}}}
\newtheorem{theorem}{Theorem}
\newtheorem{proposition}[theorem]{Proposition}
\newtheorem{lemma}[theorem]{Lemma}
\theoremstyle{definition}
\newtheorem{example}[theorem]{Example}
\newtheorem{remark}[theorem]{Remark}
\newtheorem{definition}{Definition}
\newcommand{\pushright}[1]{\ifmeasuring@#1\else\omit\hfill$\displaystyle#1$\fi\ignorespaces}
\newcommand{\pushleft}[1]{\ifmeasuring@#1\else\omit$\displaystyle#1$\hfill\fi\ignorespaces}
\title{The Gromov-Wasserstein distance between networks and stable network invariants}
\author{
  Samir Chowdhury\\
  Department of Mathematics\\
  The Ohio State University\\
  Columbus, Ohio 43210 \\
  \texttt{chowdhury.57@osu.edu} \\
   \And
 Facundo M\'{e}moli \\
  Department of Mathematics\\ 
  Department of Computer Science and Engineering\\
  The Ohio State University\\
  Columbus, Ohio 43210 \\  
  \texttt{memoli@math.osu.edu} \\
}
\begin{document}
\maketitle

\newif\ifdone 

\begin{abstract} 
We define a metric---the network Gromov-Wasserstein distance---on weighted, directed networks that is sensitive to the presence of outliers. In addition to proving its theoretical properties, we supply network invariants based on optimal transport that approximate this distance by means of lower bounds. We test these methods on a range of simulated network datasets and on a dataset of real-world global bilateral migration. For our simulations, we define a network generative model based on the stochastic block model. This may be of independent interest for benchmarking purposes.

\end{abstract}


\setcounter{tocdepth}{3}
\tableofcontents

\section{Introduction}

\subsection{Motivation and related literature}

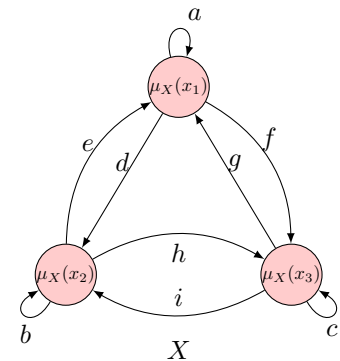
\begin{wrapfigure}{r}{0.3\textwidth} 
\begin{center}
\begin{tikzpicture}
\tikzset{>=latex}
\node[circle,draw, inner sep=0pt, scale=0.7, fill=red!20](1) at (0,2.5){$\mu_X(x_1)$};
\node[circle,draw, inner sep=0pt, scale=0.7, fill=red!20](2) at (-1.5,0){$\mu_X(x_2)$};
\node[circle,draw, inner sep=0pt, scale=0.7, fill=red!20](3) at (1.5,0){$\mu_X(x_3)$};

\path[->] (1) edge [loop above, min distance = 5mm] node[above right]{$a$}(1);
\path[->] (2) edge [loop, out=240, in = 210, min distance = 5mm] node[below]{$b$}(2);
\path[->] (3) edge [loop, out=300,in=330, min distance = 5mm] node[below]{$c$}(3);
\path[->] (1) edge [bend left,in=180,out=0] node[above,pos=0.5]{$d$} (2);
\path[->] (2) edge [bend left] node[above,pos=0.5]{$e$} (1);
\path[->] (1) edge [bend left] node[above,pos=0.5]{$f$} (3);
\path[->] (3) edge [bend left,out=0,in=180] node[above,pos=0.5]{$g$} (1);
\path[->] (2) edge [ bend left] node[below,pos=0.5]{$h$} (3);
\path[->] (3) edge [bend left] node[above,pos=0.5]{$i$} (2);

\node at (0,-1){$X$};
\end{tikzpicture}
\end{center}
\captionsetup{width=\linewidth}
\caption{The networks in this work have asymmetric pairwise weights and a significance value for each node.}
\label{fig:nets-header}
\end{wrapfigure}

Advances in data mining are beginning to lead to the acquisition of large networks that are directed, weighted, and possibly even signed \cite{newman2010networks}. In light of the ready availability of such data, a natural problem is to devise methods for comparing network datasets. These methods in turn lead to a wide range of applications. An example is the \emph{network retrieval task}: given a database of networks and a query network, return an ordered list of the networks in the database that are most similar to the query. Additionally, because there may be redundant data in the networks that are not relevant to the query, one may wish to impose a notion of significance to certain substructures of the query network. The task then is to retrieve networks which are similar to the query network both globally and also at the scale of relevant substructures.

While there has been some work in devising directed, weighted analogues of conventional network analysis tools such as edge overlap and clustering coefficients, we are more interested in pairwise comparison of individual networks. The intuitive idea behind this comparison is to search for the best possible alignment of edges (according to weights) while simultaneously aligning nodes with similar significance. 

Techniques based on optimal transport (OT) provide an elegant solution to this problem by endowing a network with a probability measure. The user adjusts the measure to signify important network substructures and to smooth out the effect of outliers. This approach was adopted in \cite{hendrikson} to compare various real-world network datasets modeled as \emph{metric measure (mm) spaces}---metric spaces equipped with a probability measure. This work was based in turn on the formulation of the \emph{Gromov-Wasserstein (GW) distance} between mm spaces presented in \cite{dgh-sm, dghlp}. Specifically, this setting considered triples $(X,d_X,\mu_X)$ where $(X,d_X)$ is a compact metric space and $\mu_X$ is a Borel probability measure.

Exact computation of GW distances amounts to solving a nonconvex quadratic program. Towards this end, the computational techniques presented in \cite{dgh-sm, dghlp} included both readily-computable lower bounds and an alternate minimization scheme for reaching a local minimum of the GW objection function. This alternate minimization scheme involved solving successive linear optimization problems, and was 
used for the computations in \cite{hendrikson}.

An alternative definition of the GW distance due to Sturm (the \emph{transportation} formulation) appeared in \cite{sturm2006geometry}, although this formulation is less amenable to practical computations than the one in \cite{dgh-sm} (the \emph{distortion} formulation). Both the transportation and distortion formulations were studied carefully in \cite{dgh-sm, dghlp,sturm2012space}. 
It was further observed by Sturm in \cite{sturm2012space} that the definition of the (distortion) GW distance can be extended to \emph{gauged measure spaces} of the form $(X,\hat{d}_X,\mu_X)$. Here $X$ is a Polish space, $\hat{d}_X$ is a symmetric $L^2$ function on $X\times X$ (that does not necessarily satisfy the triangle inequality), and $\mu_X$ is a Borel probability measure on $X$. These results are particularly important in the context of the current paper. From here on, we always refer to the distortion formulation of the GW distance.

Sturm's work in \cite{sturm2012space} showed that while the collection of isomorphism classes of metric measure spaces is not complete, elements in its completion can be represented by triples $(X,\tilde{d}_X,\mu_X)$ where $X,\mu_X$ are as above, and $\tilde{d}_X$ is a symmetric, measurable, square integrable function satisfying the triangle inequality almost everywhere. He further showed that the ambient space of gauged measure spaces, which is interpreted as being ``more linear" due to giving up the triangle inequality, admits explicit descriptions of geometric properties. 

In Sturm's work, symmetry is desirable because, for example, it allows an easy definition of open balls, whose volume growth is of theoretical interest (the asymmetric case would require ``forward-open" and ``backward-open" balls). However, practical data is often characterized by lack of symmetry, e.g. inhibitory/excitatory effects in neurons, unidirectional gene regulation in cell signaling pathways, and human migration between countries. The asymmetric case is of primary interest in the current work.  

From now on, we reserve the term \emph{network} for network datasets that cannot necessarily be represented as metric spaces, unless qualified otherwise. An illustration is provided in Figure \ref{fig:nets-header}. Already in \cite{hendrikson}, it was observed that numerical computation of GW distances worked well for comparing graph-structured data even when the underlying datasets failed to be metric. This observation was further developed in \cite{pcs16}, where the focus from the outset was to compute the GW distance (and related discrepancies) between arbitrary matrices, i.e. what we refer to as finite networks. While the experiments of \cite{pcs16} were on symmetric datasets, their implementations remain valid and theoretically justified even on matrices that do not satisfy symmetry. We emphasize this point in the current work, and extend from matrices to the continuous setting. Thus this work should be viewed as a theoretical complement to \cite{pcs16}.

On the computational front, the authors of \cite{pcs16} directly attacked the nonconvex optimization problem by considering an \emph{entropy-regularized} form of the GW distance (ERGW) following \cite{s16}, and using a projected gradient descent algorithm based on results in \cite{benamou2015iterative,s16}. This approach was also used (for a generalized GW distance) on graph-structured datasets in \cite{vayer2019optimal}. It was pointed out in \cite{vayer2019optimal} that the ERGW approach occasionally requires a large amount of regularization to obtain convergence, and that this could possibly lead to over-regularized solutions. A different approach, developed in \cite{dgh-sm,dghlp}, considers the use of lower bounds on the GW distance as opposed to solving the full GW optimization problem. This is a practical approach for many use cases, in which it may be sufficient to simply obtain lower bounds for the GW distance. One of the lower bounds in \cite{dgh-sm} involved linearizing the GW objective by decoupling the alignment term into two separate terms (thus removing the quadratic dependence), and optimizing over each term separately (referred to as the \emph{Third Lower Bound} (\ref{inv:TLB})). This approach was also used in \cite{schmitzer2013modelling}, with a further relaxation of one of the marginal terms. 

As a complement to the alternate minimization scheme of \cite{dghlp} and the ERGW scheme of \cite{pcs16}, our numerical experiments are carried out using the lower bound approach, specifically the (\ref{inv:TLB}). This is certainly faster than alternate minimization (see \cite{hendrikson} for computational aspects), but potentially slower than the ERGW scheme of \cite{pcs16}. However, it has the benefit of not needing any parameter tuning, which is an issue with entropic regularization. This makes it useful for exploratory network data analysis.

\subsection{Contributions} We adopt the setting of networks $(X,\w_X,\mu_X)$, where $X$ is a Polish space, $\mu_X$ is a Borel probability measure, and $\w_X$ is any measurable, integrable function on $X\times X$ (decoupled from the topology of $X$ beyond Borel measurability). Using the GW distance formulation, we define and develop a metric structure on the ``space of networks". The crux of this construction is that many of the critical theoretical developments in \cite{dgh-sm, dghlp, sturm2012space} rely on measure-theoretic properties and not metric properties, hence they extend to the ambient space of networks. Certain interpretations and results cannot carry over: typically these are the statements involving volumes of open balls, which are hard to define in the asymmetric setting. The main algorithms of \cite{pcs16, s16} for computing local minima of the ERGW objective do carry over to the network setting. 

To complement these algorithms, we adapt ideas from \cite{dgh-sm, dghlp} to obtain network invariants/features that yield a hierarchy of lower bounds on the network GW distance. The lower bounds arise from satisfying a certain stability property, and are computed by solving (at most) a linear program. In experiments, we focus particularly on the (\ref{inv:TLB}) from \cite{dgh-sm}.

We strengthen some of the inequalities in the lower bound hierarchy to equalities (Theorem \ref{thm:stab-push}). As a consequence, we see that the (\ref{inv:TLB}), which involves solving an ensemble of OT problems over a Polish space $X\times Y$, can be computed by solving OT problems over $\R$ (\ref{inv:RTLB}). These can be directly computed via closed-form solutions.  

We also define a network Gromov-Prokhorov (GP) distance, propose a new class of invariants (the ``sublevel/superlevel size functions"), and use the GP distance to show that these new invariants satisfy a notion of \emph{interleaving} stability typically arising in the field of applied topology. We exhibit the theoretical utility of these invariants by using them to distinguish between spheres of different dimensions.

Finally, we illustrate our constructions on some highly asymmetric networks (both simulated and real). Our code and datasets are available on \url{https://github.com/samirchowdhury/GWnets}.

\subsection{Organization of the paper} 
In the following section, we define some notation and terms that will be used throughout the paper. \S\ref{sec:dist-net} contains details about couplings and the network Gromov-Wasserstein and Gromov-Prokhorov distances. In \S\ref{sec:lbounds} we present network invariants along with stability results. We conclude with experiments in \S\ref{sec:exp}. Appendix \ref{sec:computation} contains additional notes on computations.

\subsection{Notation and basic terminology}

We write $\R_+$ to denote the nonnegative reals.
The indicator function of a set $S$ is denoted $\mathbf{1}_S$.
Given a topological space $X$ (always a Polish space in this paper, and always equipped with the Borel $\s$-field $\borel(X)$), 
we will write $\prob(X)$ to denote the collection of Borel probability measures on $X$. 
The \emph{support} of $\mu_X \in \prob(X)$, denoted $\supp(\mu_X)$ (or $\supp(X)$ when the context is clear) is the set of $x\in X$ such that every open neighborhood of $x$ has positive measure. Unless specified otherwise, we will always deal with fully supported measures. 
The Lebesgue measure on the reals will be denoted by $\leb$.

The product $\s$-field on $X \times Y$, denoted $\borel(X\times Y)$, is defined as the $\s$-field generated by the measurable rectangles $A \times B$, where $A \in \borel(X)$ and $B \in \borel(Y)$. 
The product measure $\mu_X \otimes \mu_Y$ is defined on the measurable rectangles by writing 
\[\mu_X \otimes \mu_Y(A \times B) := \mu_X(A)\mu_X(B), \text{
for all $A \in \borel(X)$ and for all $B \in \borel(Y)$.}\]

Given a Borel space $(X,\mu_X)$ and a Borel measurable function $f:X \r \R$, we write $\norm{f}_p:= (\myint |f|^p \diff \mu_X)^{1/p}$ for $p \in [1,\infty)$, and $\norm{f}_\infty:=\inf\{M \in [0,\infty] : \mu_X(|f|> M) = 0\}$ for $p = \infty$. For each $p \in [1,\infty]$, $L^p = L^p(\mu_X)$ consists of the Borel measurable functions $f$ with $\lnorm f \rnorm_p < \infty$.

Given a measurable real-valued function $f: X \r \R$ and $t \in \R$, we will occasionally write $\{f \leq t \}$ to denote the set $\{ x \in X : f(x) \leq t \}$. 

Given $(X,\mu_X)$, $Y$, and a Borel-measurable map $f: X \r Y$, the pushforward of $\mu_X$ via $f$ is the measure defined by $f_*\mu(A):= \mu(f\inv[A])$ for any measurable subset of $Y$.

\section{The structure of measure networks} \label{sec:dist-net}

We will always assume that our measures are fully supported, unless explicitly said otherwise. 

\subsection{Networks and isomorphism}

\begin{definition}
A \emph{(measure) network} is a triple $(X,\w_X,\mu_X)$ where $X$ is Polish, $\mu_X$ is a fully supported Borel probability measure, and $\w_X$ is a bounded measurable function on $X^2$. The naming convention arises from the case when $X$ is finite; in such a case, we can view the pair $(X,\w_X)$ as a complete directed graph with asymmetric real-valued edge weights that is further equipped with node significance values given by $\mu_X$, cf. Figure \ref{fig:nets-header}. Accordingly, the points of $X$ are called \emph{nodes}, pairs of nodes are called \emph{edges}, and $\w_X$ is called the \emph{edge weight function} of $X$. The collection of all measure networks will be denoted $\Nm$.
\end{definition}

\begin{remark}[Network data] 
A large class of objects---including metric spaces, manifolds (Riemannian or Finslerian), and similarity/kernel matrices \cite{pcs16}---can be viewed as networks. Network datasets arising in the sciences typically satisfy the regularity assumptions needed to fit the preceding definition.

We point out one caveat: network datasets in the wild are often incomplete, i.e. $\w_X$ is not fully defined on $X\times X$. In such cases, one needs to preprocess the data (see e.g. \cite{kumar2016edge}) to make it fit within our framework. In many other use cases, however, network datasets are complete by construction. For example, in gene regulatory network inference \cite{sanguinetti2019gene}, the only data that can be measured are gene expression levels. In the corresponding network, the nodes are genes and the edge weights are gene dependencies that are \emph{estimated} from the expression levels. The resulting edge weight function is thus completely determined. 
\end{remark}

\begin{remark} Sturm has studied symmetric, $L^2$ versions of measure networks (called \emph{gauged measure spaces}) in \cite{sturm2012space}, and we point to his work as an excellent reference on the geometry of such spaces. Our motivation comes from studying network datasets, hence the difference in our naming conventions. 

\end{remark}

When defining any type of distance between networks, as we eventually will, it is necessary to first decide which networks should be viewed as being at 0-distance. We make these choices now.
The information contained in a network should be preserved under relabeling. Additionally, if a node is split into multiple nodes, all with the same incoming and outgoing edge weights, the information in the network remains unchanged. Conversely, if multiple nodes have the same incoming/outgoing edge weights, then they can be merged together without information loss. We formalize these ideas via the following notions of \emph{isomorphism}.

\begin{definition}[Strong isomorphism]
To say $(X,\w_X,\mu_X), (Y,\w_Y,\mu_Y) \in \Nm$ are \emph{strongly isomorphic} means that there exists a Borel measurable bijection $\ph: X \r Y$ (with Borel measurable inverse $\ph\inv$) such that 
\begin{itemize}
\item $\w_X(x,x') = \w_Y(\ph(x),\ph(x'))$ for all $x, x' \in X$, and
\item $\ph_*\mu_X = \mu_Y$.
\end{itemize}
We will denote a strong isomorphism between measure networks by $X \cong^s Y$.
\end{definition}
 
 The following definition is a relaxation of strong isomorphism.

\begin{definition}[Weak isomorphism]
\label{defn:weak-isom}
$(X,\w_X,\mu_X), (Y,\w_Y,\mu_Y) \in \Ngen$ are \emph{weakly isomorphic}, denoted $X\cong^w Y$, if there is a Borel probability space $(Z,\mu_Z)$ with measurable maps $f:Z \r X$ and $g:Z \r Y$ such that 
\begin{itemize}
\item $f_*\mu_Z = \mu_X$, $g_*\mu_Z = \mu_Y$, and 
\item $\norm{f^*\w_X - g^*\w_Y}_\infty = 0$.
\end{itemize}
Here $f^*\w_X:Z \times Z \r \R$ is the pullback weight function given by the map $(z,z') \mapsto \w_X(f(z),f(z'))$. The map $g^*\w_Y$ is defined analogously. Note that these pullbacks are measurable. Figure \ref{fig:wisom-nets} provides an illustration.

\end{definition}

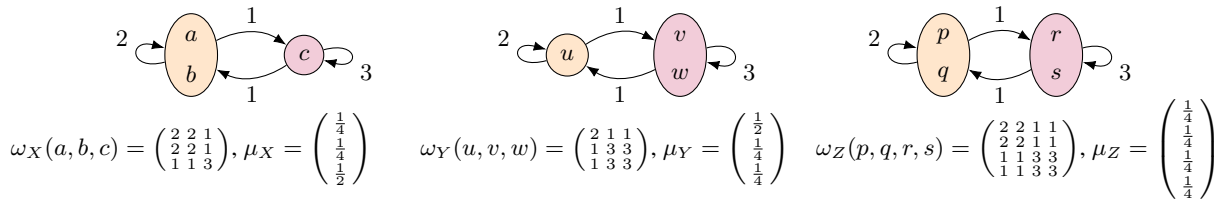
\begin{figure}[b]
\begin{center}
\begin{tikzpicture}[every node/.style={font=\footnotesize}]
\node[draw,ellipse, fill=orange!20, minimum width= 7mm,minimum height= 11 mm] (4) at (-1,-0.25) {};
\node[inner sep = 0pt, minimum size = 3mm] (1) at (-1,0){$a$};
\node[inner sep = 0pt, minimum size = 3mm] (2) at (-1,-0.5){$b$};
\node[draw,circle,fill=purple!20] (3) at (0.5,-0.25){$c$};


\node[draw,circle,fill=orange!20] (5) at (4,-0.25){$u$};
\node[draw,fill=purple!20, ellipse, minimum width= 7mm,minimum height= 11 mm] (8) at (5.5,-0.25) {};
\node[inner sep = 0pt, minimum size = 3mm,] (6) at (5.5,0){$v$};
\node[inner sep = 0pt, minimum size = 3mm] (7) at (5.5,-0.5){$w$};


\node[draw,fill=orange!20,ellipse, minimum width= 7mm,minimum height= 11 mm] (9) at (9,-0.25){};
\node[draw,fill=purple!20,ellipse, minimum width= 7mm,minimum height= 11 mm] (10) at (10.5,-0.25){};
\node[inner sep = 0pt, minimum size = 3mm] (11) at (9,0){$p$};
\node[inner sep = 0pt, minimum size = 3mm] (12) at (9,-0.5){$q$};
\node[inner sep = 0pt, minimum size = 3mm] (13) at (10.5,0){$r$};
\node[inner sep = 0pt, minimum size = 3mm] (14) at (10.5,-0.5){$s$};

\path[->] (4) edge [loop left, min distance = 5mm] node[above left]{$2$}(4);
\path[->] (3) edge [loop right, min distance = 5mm] node[below right]{$3$}(3);
\path[->] (4) edge [bend left] node[above]{$1$} (3);
\path[->] (3) edge [bend left] node[below]{$1$} (4);

\path[->] (5) edge [loop left, min distance = 5mm] node[above left]{$2$}(5);
\path[->] (8) edge [loop right, min distance = 5mm] node[below right]{$3$}(8);
\path[->] (5) edge [bend left] node[above]{$1$} (8);
\path[->] (8) edge [bend left] node[below]{$1$} (5);

\path[->] (9) edge [loop left, min distance = 5mm] node[above left]{$2$}(9);
\path[->] (10) edge [loop right, min distance = 5mm] node[below right]{$3$}(10);
\path[->] (9) edge [bend left] node[above]{$1$} (10);
\path[->] (10) edge [bend left] node[below]{$1$} (9);

\node at (-1,-1.5) {$\w_X(a,b,c) = \mattres{2}{2}{1}{2}{2}{1}{1}{1}{3}$, 
$\mu_X = \vecthree{\frac{1}{4}}{\frac{1}{4}}{\frac{1}{2}}$};
\node at (4.5,-1.5) {$\w_Y(u,v,w) = \mattres{2}{1}{1}{1}{3}{3}{1}{3}{3}$, 
$\mu_Y = \vecthree{\frac{1}{2}}{\frac{1}{4}}{\frac{1}{4}}$};

\node at (10,-1.5) {$\w_Z(p,q,r,s) = 
\left(
\begin{smallmatrix}
2 & 2 & 1 & 1\\
2 & 2 & 1 & 1\\
1 & 1 & 3 & 3\\
1 & 1 & 3 & 3\\
\end{smallmatrix}
\right)
$, 
$\mu_Z = \vecfour{\frac{1}{4}}{\frac{1}{4}}{\frac{1}{4}}{\frac{1}{4}}$};

\end{tikzpicture}
\caption{Weakly isomorphic networks $X,Y,Z$. Note that $Z$ maps surjectively onto $X$ and $Y$, and this surjection induces compatible pushforward measures and pullback weight functions, as required by Definition \ref{defn:weak-isom}.}
\label{fig:wisom-nets}
\end{center}
\vspace{-.2in}
\end{figure}

\begin{remark}[Interpretation for real data] 
According to the notion of weak isomorphism, two nodes $x,x'$ are informally the same if they have the same ``internal perception", i.e. $\w_X(x,x) = \w_X(x,x') = \w_X(x',x) = \w_X(x',x')$, and the same external perception, i.e. all the incoming and outgoing edge weights are the same. A relaxation would be to say that $x,x'$ are $\e$-similar if, for $\e > 0$,  
\[\max(\lnorm \w_X(x,\cdot) - \w_X(x',\cdot) \rnorm_\infty, \lnorm \w_X(\cdot,x) - \w_X(\cdot,x') \rnorm_\infty) < \e.\] 
The network stochastic block model in \S\ref{sec:sbm-net} describes networks that admit partitions into $\e$-similar blocks. 
\end{remark}

\begin{example}\label{ex:simple-networks} Networks with one or two nodes will be very instructive in providing examples and counterexamples, so we introduce them now with some special terminology.
\begin{itemize}
\item By $N_1(a)$ we will refer to the network with one node $X = \set{p}$, a weight $\w_X(p,p) = a$, and the Dirac measure $\mu_X = \d_p$.

\item By $N_2(\mattwo{a}{b}{c}{d}, \a, \b)$ we will mean a two-node network with node set $X = \set{p,q}$, and weights and measures given as follows:
\begin{align*}
\w_X(p,p) = a & \qquad \mu_X(\set{p}) = \a\\
\w_X(p,q) = b & \qquad \mu_X(\set{q}) = \b\\
\w_X(q,p) = c \\
\w_X(q,q) = d 
\end{align*}

\item Given a $k$-by-$k$ matrix $\Sigma\in \R^{k\times k}$ and a $k\times 1$ vector $v\in\R^{k}_+$ with sum $1$, we automatically obtain a network on $k$ nodes that we denote as $N_k(\Sigma,v)$. Notice that $N_k(\Sigma,v)\cong^s N_\ell(\Sigma',v')$ if and only if $k=\ell$ and there exists a permutation matrix $P$ of size $k$ such that $\Sigma'=P\,\Sigma\,P^T$ and $Pv = v'$.
\end{itemize}
\end{example}

\noindent
\textbf{Notation.} Even though $\mu_X$ takes sets as its argument, we will often omit the curly braces and use $\mu_X(p,q,r)$ to mean $\mu_X(\set{p,q,r})$. 

We wish to define a notion of distance on $\Nm$ that is compatible with isomorphism. A natural analog is the Gromov-Wasserstein distance defined between metric measure spaces \cite{dgh-sm}. To adapt that definition for our needs, we first recall the definition of a measure coupling. 

\subsection{Couplings and the distortion functional}

Let $(X,\w_X,\mu_X), (Y,\w_Y,\mu_Y)$ be two measure networks. A \emph{coupling} between these two networks is a probability  measure $\mu$ on $X\times Y$ with marginals $\mu_X$ and $\mu_Y$, respectively. Stated differently, couplings satisfy the following property:
\[\mu(A \times Y) = \mu_X(A) \; \text{and} \;
\mu(X \times B) = \mu_Y(B), \text{
for all $A \in \borel(X)$ and for all $B \in \borel(Y)$.}\]
The collection of all couplings between $(X,\w_X,\mu_X)$ and $(Y,\w_Y,\mu_Y)$ will be denoted $\coup(\mu_X,\mu_Y)$, abbreviated to $\coup$ when the context is clear. Couplings are also referred to as \emph{transport plans}.

\begin{example}[Product coupling] 
\label{ex:product-coupling}
Let $(X,\w_X,\mu_X), \ (Y,\w_Y,\mu_Y) \in \Nm$. The set $\coup(\mu_X,\mu_Y)$ is always nonempty, because the \emph{product measure} $\mu:= \mu_X \otimes \mu_Y$ is always a coupling between $\mu_X$ and $\mu_Y$. 
\end{example}

\begin{example}[1-point coupling]\label{ex:1-pt-coupling} Let $(X,\w_X,\mu_X) \in \Nm$, and let $Y = N_1(a)$ be a network on a single point $\{p\}$. Then there exists a unique coupling $\mu = \mu_X \otimes \d_p$ between $\mu_X$ and $\delta_p$.

\end{example} 

\begin{example}[Diagonal coupling]
\label{ex:diag-coupling}
Let $(X,\w_X,\mu_X) \in \Nm$. The \emph{diagonal coupling $\Delta$} between $\mu_X$ and itself is the transport plan that sends each point to itself, and is defined by writing 
\[\Delta(A\times B) := \int_{X} \mathbf{1}_{A\times B}(x,x) \, d\mu_X(x) \qquad
\text{ for all } A, B\in \borel(X).\]
To see that this is a coupling, let $A\in \borel(X)$. Then,
\[\Delta(A\times X) = \int_{X} \mathbf{1}_{A\times X}(x,x) \, d\mu_X(x) = \int_{X}\1_{A}(x) \, d\mu_X(x) = \mu_X(A),\]
and similarly $\Delta(X\times A) = \mu_X(A)$. Thus $\Delta \in \coup(\mu_X,\mu_X)$. 
\end{example}

Now we turn to the notion of the \emph{distortion} of a coupling.
Let $(X,\w_X,\mu_X),(Y,\w_Y,\mu_Y)$ be two measure networks. 
Next let $\mu\in\coup(\mu_X,\mu_Y)$, and consider the probability space $(X\times Y)^2$ equipped with the product measure $\mu\otimes \mu$. For each $p \in [1,\infty]$ the $p$-distortion of $\mu$ is defined as $\dis_p(\mu) := \lnorm \w_X - \w_Y \rnorm_p$. For $p \in [1,\infty)$, this is written as:
\begin{align*}
\dis_p(\mu) 
& = \left( \int_{X\times Y}\int_{X\times Y}|\w_X(x,x') - \w_Y(y,y')|^p \diff \mu(x,y) \diff\mu(x',y') \right)^{1/p}.
\end{align*}
For $p = \infty$, this becomes:
\[\dis_p(\mu) := \esssup  |\w_X - \w_Y|.\]

We end by introducing the Wasserstein distance \cite[\S7]{ambrosio2008gradient}, which metrizes the topology of narrow convergence that we introduce below. Let $(X,d_X)$ be a Polish space, let $p \in [1,\infty]$, and let $\mu,\nu \in \prob(X)$ be such that $\lnorm d_X(x_0,\cdot) \rnorm_{L^p(\t)} < \infty$ for $\t \in \{ \mu,\nu \}$ and some $x_0 \in X$. The \emph{pth Wasserstein distance} between $\mu,\nu$ is defined to be: 
\[W_p(\mu,\nu) := \inf_{\t \in \coup(\mu,\nu)} \lnorm d_X \rnorm_{L^p(\t)}.\]

\subsection{Optimality of couplings in the network setting}

We now collect some results about probability spaces.
Let $X$ be a Polish space. A subset $P \subseteq \prob(X)$ is said to be \emph{tight} if for all $\e > 0$, there is a compact subset $K_\e \subseteq X$ such that $\mu_X(X\setminus K_\e) \leq \e$ for all $\mu_X \in P$. 

A sequence $(\mu_n)_{n\in \N} \in \prob(X)^\N$ is said to \emph{converge narrowly} to $\mu_X \in \prob(X)$ if 
\[\lim_{n \r \infty} \int_X f \, d\mu_n = \int_X f \,d\mu_X
\qquad \text{ for all } f \in C_b(X),\]
the space of continuous, bounded, real-valued functions on $X$. Narrow convergence is induced by a distance \cite[Remark 5.1.1]{ambrosio2008gradient}, in particular by $W_p$ when $X$ is bounded, hence the convergent sequences in $\prob(X)$ completely determine a topology on $\prob(X)$. This topology on $\prob(X)$ is called the \emph{narrow topology}. In some references \cite{sturm2012space}, narrow convergence (resp. narrow topology) is called \emph{weak convergence} (resp. \emph{weak topology}).

A further consequence of having a metric on $\prob(X)$ \cite[Remark 5.1.1]{ambrosio2008gradient} is that singletons are closed. This simple fact will be used below.

\begin{theorem}[Prokhorov, \cite{ambrosio2008gradient} Theorem 5.1.3] 
Let $X$ be a Polish space. Then $P\subseteq \prob(X)$ is tight if and only if it is relatively compact, i.e. its closure is compact in $\prob(X)$.
\end{theorem}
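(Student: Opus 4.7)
The plan is to prove the two implications separately, beginning with the easier direction that relative compactness implies tightness. Since $X$ is Polish, I fix a countable dense set $\{x_i\}_{i \in \N}$ and, for each $n\in \N$, define $U_N^n := \bigcup_{i=1}^N B(x_i, 1/n)$. For every $\mu \in \prob(X)$ one has $\mu(U_N^n) \uparrow 1$ as $N\r\infty$, and because $U_N^n$ is open the map $\mu \mapsto \mu(U_N^n)$ is lower semicontinuous in the narrow topology. Hence, given $\e>0$, the sets $\{\mu : \mu(U_N^n) > 1 - \e/2^n\}$ form an ascending open cover of $\prob(X)$ as $N$ varies; compactness of $\overline P$ lets me pick, for each $n$, a single index $N_n$ that works uniformly for all $\mu \in P$. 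The set $K := \bigcap_n \overline{U_{N_n}^n}$ is closed and totally bounded, hence compact because $X$ is a complete metric space. A union bound over $n$ gives $\mu(X\setminus K) \leq \e$ for every $\mu \in P$, establishing tightness.

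For the converse direction, I would exploit that $\prob(X)$ is metrizable under narrow convergence, so relative compactness coincides with relative sequential compactness. Given a sequence $(\mu_n) \subseteq P$, tightness provides an increasing chain of compact sets $K_1 \subseteq K_2 \subseteq \cdots$ in $X$ with $\mu_n(X \setminus K_k) \leq 1/k$ for all $n,k$. On each compact metric space $K_k$ the set of Borel sub-probability measures is sequentially compact; this is the classical Banach--Alaoglu statement, obtained by applying the Riesz representation theorem to $C(K_k)^*$. A diagonal extraction therefore produces a subsequence $(\mu_{n_j})$ whose restrictions to each $K_k$ converge narrowly to sub-probability measures $\nu_k$ on $K_k$. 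The $\nu_k$ are consistent in the sense that $\nu_{k+1}|_{K_k} = \nu_k$, so the set function $\mu(A) := \lim_{k\r\infty} \nu_k(A \cap K_k)$ defines a Borel measure on $X$ with total mass at least $1 - 1/k$ for every $k$, hence a probability measure.

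It remains to upgrade this to narrow convergence $\mu_{n_j} \r \mu$ on all of $X$. For $f \in C_b(X)$, splitting the integrals as $\int f\, d\mu_{n_j} = \int_{K_k} f\,d\mu_{n_j} + \int_{X\setminus K_k} f\,d\mu_{n_j}$ and comparing to the analogous decomposition of $\int f\,d\mu$ gives a tail error bounded by $2\lnorm f \rnorm_\infty/k$, while the contribution on $K_k$ vanishes in the limit in $j$ by narrow convergence of the restrictions on $K_k$. Letting $j\r\infty$ followed by $k\r\infty$ yields the conclusion. The main obstacle is the careful handling of consistency and extension of the $\nu_k$ along with this final $\e/3$-type estimate; the diagonal extraction and the compact case itself are both standard.
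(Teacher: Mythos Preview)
The paper does not provide its own proof of this statement: Prokhorov's theorem is quoted as Theorem~5.1.3 of \cite{ambrosio2008gradient} and used as a black box (specifically, to establish compactness of the set of couplings). So there is nothing in the paper to compare your argument against.

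On its own merits, your sketch follows the standard textbook route and is essentially sound. The direction ``relatively compact $\Rightarrow$ tight'' is clean. For the converse, you correctly identify the consistency of the limits $\nu_k$ as the delicate point, and it is worth being explicit that this is a genuine issue rather than a formality: narrow convergence of $\mu_{n_j}|_{K_{k+1}} \to \nu_{k+1}$ on $K_{k+1}$ does \emph{not} automatically yield $\nu_{k+1}|_{K_k} = \nu_k$, because $K_k$ need not be a $\nu_{k+1}$-continuity set inside $K_{k+1}$. The usual ways around this are either to embed $X$ homeomorphically into a compact metrizable space (e.g.\ $[0,1]^{\N}$), extract a narrow limit there, and then use tightness to show the limit is concentrated on the image of $X$; or to run the diagonal argument against a countable determining family in $C_b(X)$ rather than against set restrictions. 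Either variant sidesteps the consistency problem entirely, so the obstacle you flag is real but readily circumvented.
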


\begin{lemma}[Lemma 4.4, \cite{villani2008optimal}]
\label{lem:tightness-couplings}
Let $X,Y$ be two Polish spaces, and let $P_X \subseteq \prob(X)$, $P_Y \subseteq \prob(Y)$ be tight in their respective spaces. Then the set $\coup(P_X,P_Y) \subseteq \prob(X\times Y)$ of couplings with marginals in $P_X$ and $P_Y$ is tight in $\prob(X\times Y)$.
\end{lemma}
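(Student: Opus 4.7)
The plan is to prove the lemma directly from the definition of tightness, by producing, for each $\e > 0$, a single compact set $K \subseteq X \times Y$ such that every coupling $\mu \in \coup(P_X, P_Y)$ assigns mass at least $1-\e$ to $K$. The natural candidate is a product $K = K_X^\e \times K_Y^\e$, where $K_X^\e$ and $K_Y^\e$ are compact witnesses of tightness for $P_X$ and $P_Y$, with small enough complement masses.

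Concretely, I would proceed as follows. Fix $\e > 0$. By tightness of $P_X$, choose a compact $K_X^\e \subseteq X$ so that $\mu_X(X \setminus K_X^\e) \leq \e/2$ for every $\mu_X \in P_X$; analogously choose $K_Y^\e \subseteq Y$ compact with $\mu_Y(Y \setminus K_Y^\e) \leq \e/2$ for every $\mu_Y \in P_Y$. The product $K := K_X^\e \times K_Y^\e$ is then compact in $X\times Y$ (by Tychonoff's theorem, or since the product of two compact metric spaces is compact). Now let $\mu \in \coup(P_X, P_Y)$ with marginals $\mu_X \in P_X$ and $\mu_Y \in P_Y$. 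Decomposing the complement of $K$ as
\[
(X\times Y)\setminus K \;\subseteq\; \bigl((X\setminus K_X^\e)\times Y\bigr) \,\cup\, \bigl(X\times (Y\setminus K_Y^\e)\bigr),
\]
subadditivity of $\mu$ together with the marginal conditions yields
\[
\mu\bigl((X\times Y)\setminus K\bigr) \;\leq\; \mu_X(X\setminus K_X^\e) + \mu_Y(Y\setminus K_Y^\e) \;\leq\; \e.
\]
Since $\e > 0$ was arbitrary, $\coup(P_X, P_Y)$ is tight.

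There is essentially no obstacle here: the argument is a routine application of the definition of tightness, the compactness of products of compacts in Polish spaces, and the fact that couplings have prescribed marginals. The only mildly delicate point is making sure that $K_X^\e\times K_Y^\e$ is Borel (immediate, since it is closed and hence Borel in the product topology generated by $\borel(X)\otimes \borel(Y)$) so that the inequality above makes sense as an application of the marginal conditions to the measurable rectangles $(X\setminus K_X^\e)\times Y$ and $X\times (Y\setminus K_Y^\e)$.
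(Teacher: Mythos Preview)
Your argument is correct and is exactly the standard proof. Note that the paper does not supply its own proof of this lemma: it simply cites \cite[Lemma 4.4]{villani2008optimal}, where precisely this product-of-compacts argument is given.
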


\begin{lemma}[Compactness of couplings; Lemma 1.2, \cite{sturm2012space}]
\label{lem:compactness-couplings}
Let $X,Y$ be two Polish spaces. Let $\mu_X \in \prob(X)$, $\mu_Y \in \prob(Y)$. Then $\coup(\mu_X,\mu_Y)$ is compact in $\prob(X\times Y)$.
\end{lemma}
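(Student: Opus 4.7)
The plan is to apply Prokhorov's theorem and reduce the claim to two independent sub-problems: tightness of $\coup(\mu_X,\mu_Y)$ in $\prob(X\times Y)$ and closedness of $\coup(\mu_X,\mu_Y)$ with respect to narrow convergence. Since Prokhorov gives that tight sets are relatively compact, combining the two yields compactness. Because narrow convergence on $\prob(X\times Y)$ is metrizable (as noted in the excerpt), it is enough to argue sequentially throughout.

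For tightness, I would observe that every singleton in $\prob(X)$ is a tight subset, since every Borel probability measure on a Polish space is inner regular by compact sets. In particular, $\{\mu_X\} \subseteq \prob(X)$ and $\{\mu_Y\} \subseteq \prob(Y)$ are tight. Then Lemma \ref{lem:tightness-couplings}, applied with $P_X := \{\mu_X\}$ and $P_Y := \{\mu_Y\}$, gives that $\coup(\mu_X,\mu_Y) = \coup(\{\mu_X\},\{\mu_Y\})$ is tight in $\prob(X\times Y)$. Prokhorov then yields relative compactness of $\coup(\mu_X,\mu_Y)$.

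For closedness, I would take an arbitrary sequence $(\mu_n)_{n \in \N} \subseteq \coup(\mu_X,\mu_Y)$ converging narrowly to some $\mu \in \prob(X\times Y)$, and verify that $\mu \in \coup(\mu_X,\mu_Y)$. The key observation is that pushforwards along continuous maps are continuous for narrow convergence: for the projection $\pi_X : X \times Y \to X$ and any test function $\ph \in C_b(X)$, the composition $\ph \circ \pi_X$ lies in $C_b(X\times Y)$, so
\[\int_X \ph\, d(\pi_X)_*\mu_n = \int_{X\times Y} \ph\circ \pi_X \, d\mu_n \xrightarrow{n\r\infty} \int_{X\times Y} \ph\circ \pi_X \, d\mu = \int_X \ph\, d(\pi_X)_*\mu.\]
Hence $(\pi_X)_*\mu_n \r (\pi_X)_*\mu$ narrowly in $\prob(X)$. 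But $(\pi_X)_*\mu_n = \mu_X$ for every $n$, and singletons in $\prob(X)$ are closed (as noted in the excerpt, this follows from metrizability of the narrow topology), so the limit must be $\mu_X$. The symmetric argument with $\pi_Y$ yields $(\pi_Y)_*\mu = \mu_Y$, so $\mu$ has the correct marginals and thus belongs to $\coup(\mu_X,\mu_Y)$.

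There is no substantive obstacle: all the heavy lifting has been done by Prokhorov's theorem and Lemma \ref{lem:tightness-couplings}, and the only subtle point is the passage of the marginal constraints through the narrow limit, which is handled by the standard continuity of pushforwards along the projection maps combined with the fact that $\prob(X)$ and $\prob(Y)$ are Hausdorff.
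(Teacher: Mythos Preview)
Your proof is correct and follows essentially the same approach as the paper: both establish tightness via Lemma \ref{lem:tightness-couplings} applied to the singletons $\{\mu_X\},\{\mu_Y\}$, and closedness via continuity of the pushforward maps $(\pi_X)_*,(\pi_Y)_*$ together with closedness of singletons in $\prob(X),\prob(Y)$. The paper phrases closedness more tersely as ``intersecting the preimages of the continuous projections onto the marginals,'' whereas you spell out the sequential version, but the underlying argument is identical.
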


\begin{proof}
The singletons $\{\mu_X\}$, $\{ \mu_Y\}$ are closed and of course compact in $\prob(X)$, $\prob(Y)$. Hence by Prokhorov's theorem, they are tight. Now consider $\coup(\mu_X,\mu_Y) \subseteq \prob(X\times Y)$. Since this is obtained by intersecting the preimages of the continuous projections onto the marginals $\mu_X$ and $\mu_Y$, we know that it is closed. Furthermore, $\coup(\mu_X,\mu_Y)$ is tight by Lemma \ref{lem:tightness-couplings}. Then by another application of Prokhorov's theorem, it is compact.  \qedhere
\end{proof}

The following lemma appeared for symmetric weight functions in the $L^2$ case in \cite{sturm2012space}, along with a slightly different proof using parametrizations by the unit interval. The proof is actually simpler in the network setting because we do not need to enforce symmetry of the approximating functions.

\begin{lemma}[Continuity of the distortion functional]
\label{lem:cont-distortion-intervals}
Let $1\leq p < \infty$, and let $(X,\w_X,\mu_X), (Y,\w_Y,\mu_Y) \in \Nm$. The distortion functional $\dis_p$ is continuous on $\coup(\mu_X,\mu_Y)$. For $p = \infty$, $\dis_\infty$ is lower semicontinuous. 
\end{lemma}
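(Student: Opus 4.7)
The plan is to handle the case $1 \leq p < \infty$ directly and then deduce the $p = \infty$ case as a short corollary. The key structural observation underlying everything is that, since every coupling $\mu \in \coup(\mu_X,\mu_Y)$ has fixed marginals $\mu_X$ and $\mu_Y$, the product measure $\mu \otimes \mu$ on $(X\times Y)^2$ has the \emph{same} $X^2$-marginal (namely $\mu_X \otimes \mu_X$, via the projection $(x,y,x',y') \mapsto (x,x')$) and the \emph{same} $Y^2$-marginal $\mu_Y \otimes \mu_Y$, regardless of which coupling $\mu$ is chosen. Consequently, integrals of any function depending only on the $X$-pair $(x,x')$ or only on the $Y$-pair $(y,y')$ are independent of $\mu$; this is exactly what will let us absorb approximation errors uniformly in a sequence of couplings.

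Fixing $1 \leq p < \infty$ and a narrowly convergent sequence $\mu_n \to \mu$ in $\coup(\mu_X,\mu_Y)$, I would first note that $\mu_n \otimes \mu_n \to \mu \otimes \mu$ narrowly on $(X\times Y)^2$, a standard fact proved e.g.\ by Stone--Weierstrass applied to products of test functions on $X \times Y$. Next, by density of $C_b$ in $L^p$ on Polish probability spaces (via Lusin's theorem combined with Tietze extension, which applies because Polish spaces are metrizable and therefore normal), for each $\e > 0$ I can pick $\tilde{\w}_X \in C_b(X^2)$ and $\tilde{\w}_Y \in C_b(Y^2)$ with
\[\|\w_X - \tilde{\w}_X\|_{L^p(\mu_X \otimes \mu_X)} < \e, \qquad \|\w_Y - \tilde{\w}_Y\|_{L^p(\mu_Y \otimes \mu_Y)} < \e,\]
and with $\|\tilde{\w}_X\|_\infty, \|\tilde{\w}_Y\|_\infty$ bounded independently of $\e$ (truncating at a level slightly above $\|\w_X\|_\infty, \|\w_Y\|_\infty$ if necessary). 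Writing $f := \w_X - \w_Y$ and $\tilde f := \tilde{\w}_X - \tilde{\w}_Y$, viewed as functions on $(X\times Y)^2$, the reverse triangle inequality on $L^p$ gives
\[|\dis_p(\mu_n) - \dis_p(\mu)| \leq \|f - \tilde f\|_{L^p(\mu_n \otimes \mu_n)} + \bigl|\|\tilde f\|_{L^p(\mu_n \otimes \mu_n)} - \|\tilde f\|_{L^p(\mu \otimes \mu)}\bigr| + \|f - \tilde f\|_{L^p(\mu \otimes \mu)}.\]
By the marginal-preservation observation, the first and third summands are each bounded by $2\e$ \emph{uniformly} in $n$. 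The middle summand tends to $0$ as $n \to \infty$ because $|\tilde f|^p \in C_b((X\times Y)^2)$ and $\mu_n \otimes \mu_n \to \mu \otimes \mu$ narrowly. Taking $n \to \infty$ and then $\e \to 0$ yields continuity.

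For $p = \infty$, I would exploit that on a probability space $\|g\|_{L^q}$ is nondecreasing in $q$ and $\|g\|_{L^q} \to \|g\|_{L^\infty}$ as $q \to \infty$ for bounded measurable $g$. Hence $\dis_\infty = \sup_{q \in [1,\infty)} \dis_q$ pointwise on $\coup(\mu_X,\mu_Y)$, and a pointwise supremum of continuous functions is lower semicontinuous. The principal obstacle throughout is the mismatch between the narrow topology on $\coup(\mu_X,\mu_Y)$ (which only tests against continuous bounded functions) and the mere boundedness and measurability of $\w_X,\w_Y$; the marginal-preservation trick is the device that reduces the problem to the continuous case without requiring the weight functions to be regular.
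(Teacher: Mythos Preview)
Your proof is correct and follows essentially the same approach as the paper: both approximate $\w_X,\w_Y$ by bounded continuous functions, invoke narrow convergence of the product measures $\mu_n\otimes\mu_n\to\mu\otimes\mu$, and crucially use the marginal-preservation observation to control the approximation error uniformly in the coupling; the only cosmetic difference is that the paper packages this as ``$\dis_p$ is a uniform limit of continuous functionals $\dis_p^n$,'' whereas you give the equivalent direct $\varepsilon$-argument for sequential continuity. The $p=\infty$ case is handled identically in both.
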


\begin{proof}

First suppose $p \in [1,\infty)$.
We will construct a sequence of continuous functionals that converges uniformly to $\dis_p$. Since the uniform limit of continuous functions is continuous, this will show that $\dis_p$ is continuous.

Bounded continuous functions are dense in $L^p$ (in our setting of Polish spaces with finite measures, see e.g. \cite[\S7.2]{folland1999real}), so for each $n \in \N$, we pick continuous, bounded functions $\w_X^n \in L^p(\mu_X^{\otimes 2})$ and $\w_Y^n \in L^p(\mu_Y^{\otimes 2})$ such that 
\[\lnorm \w_X - \w_X^n \rnorm_{L^p(\mu_X \otimes \mu_X)} \leq 1/n, \qquad
\lnorm \w_Y - \w_Y^n \rnorm_{L^p(\mu_Y \otimes \mu_Y)} \leq 1/n.\]

For each $n \in \N$, define the functional $\dis_p^n:\coup(\mu_X,\mu_Y) \r \R_+$ by $\dis_p^n(\nu):= \lnorm \w_X^n - \w_Y^n \rnorm_{L^p(\nu \otimes \nu)}$.
Note that $\lbar \w_X^n - \w_Y^n \rbar^p \in C_b((X\times Y)^2)$. 

We claim that $\dis_p^n$ is continuous. Since the narrow topology on $\prob(X \times Y)$ is induced by a distance \cite[Remark 5.1.1]{ambrosio2008gradient}, it suffices to show sequential continuity. Let $\nu \in \coup(\mu_X,\mu_Y)$, and let $(\nu_m)_{m\in \N}$ be a sequence in $\coup(\mu_X,\mu_Y)$ converging narrowly to $\nu$. Then in fact $\nu_m \otimes \nu_m$ converges narrowly to $\nu\otimes \nu$ \cite[Theorem 2.8]{billingsley1999convergence}. Thus we have 
\begin{align*}
\lim_{m\r \infty} \dis_p^n(\nu_m) 
& = \lim_{m\r \infty} \lp \int_{ (X\times Y)^2 } \lbar \w^n_X- \w^n_Y\rbar^p d\nu_m \otimes d\nu_m\rp^{1/p}\\
&= \lp \int_{ (X\times Y)^2 } \lbar \w^n_X- \w^n_Y\rbar^p d\nu \otimes d\nu\rp^{1/p}
 = \dis_p^n(\nu).
\end{align*}
Here the second equality follows from the definition of convergence in the narrow topology and the fact that the integrand is bounded and continuous. This shows sequential continuity (hence continuity) of $\dis_p^n$. 

Finally, we show that $(\dis_p^n)_{n \in \N}$ converges to $\dis_p$ uniformly. Let $\mu \in \coup(\mu_X,\mu_Y)$. Then,

\begin{align*}
\lbar \dis_p(\mu) - \dis_p^n(\mu) \rbar &= \lbar \lnorm \w_X - \w_Y \rnorm_{L^p(\nu \otimes \nu)} - \lnorm \w_X^n - \w_Y^n \rnorm_{L^p(\nu \otimes \nu)} \rbar \\
& \leq \lnorm \w_X - \w_X^n \rnorm_{L^p(\mu_X \otimes \mu_X)} + \lnorm \w_Y - \w_Y^n \rnorm_{L^p(\mu_Y \otimes \mu_Y)} 
 \leq 2/n.
\end{align*}
But $\mu \in \coup(\mu_X,\mu_Y)$ was arbitrary. This shows that $\dis_p$ is the uniform limit of continuous functions, hence is continuous. Here the first inequality followed from Minkowski's inequality.  

Now suppose $p = \infty$. Let $\mu \in \coup(\mu_X,\mu_Y)$ be arbitrary. Recall that because we are working over probability spaces, Jensen's inequality can be used to show that for any $1 \leq q \leq r < \infty$, we have $\dis_q(\mu) \leq \dis_r(\mu)$. Moreover, we have $\lim_{q \r \infty}\dis_q(\mu) = \dis_\infty(\mu)$. The supremum of a family of continuous functions is lower semicontinuous. In our case, $\dis_\infty = \sup \{\dis_q : q \in [1,\infty)\}$, and we have shown above that all the functions in this family are continuous. Hence $\dis_\infty$ is lower semicontinuous. \end{proof}

\begin{definition}[Optimal couplings] Let $(X,\w_X,\mu_X),(Y,\w_Y,\mu_Y) \in \Nm$, and let $p \in [1,\infty]$. A coupling $\mu \in \coup(\mu_X,\mu_Y)$ is \emph{optimal} if $\dis_p(\mu) = \inf_{\nu \in \coup(\mu_X,\mu_Y)}\dis_p(\nu)$.
\end{definition}

\begin{theorem} 
\label{thm:opt-coup}
Let $(X,\w_X,\mu_X)$ and $(Y,\w_Y,\mu_Y)$ be two measure networks, and let $p \in [1,\infty]$. Then there exists an optimal coupling, i.e. a minimizer for $\dis_p(\cdot)$ in $\coup(\mu_X,\mu_Y)$.
\end{theorem}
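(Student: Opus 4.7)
My plan is to apply the direct method of the calculus of variations, which is standard in optimal transport and for which all the heavy lifting has already been done in the two preceding lemmas. The strategy is to observe that we are minimizing a (lower semi)continuous functional over a compact set, so a minimizer must be attained.

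In more detail, I would first let $I := \inf_{\nu \in \coup(\mu_X, \mu_Y)} \dis_p(\nu)$. Since $\dis_p$ is nonnegative, $I$ is a finite nonnegative number (in fact finite also because $\w_X, \w_Y$ are bounded measurable, so any coupling gives finite distortion for every $p$, including $p = \infty$). By definition of the infimum, I can choose a minimizing sequence $(\mu_n)_{n \in \N} \subseteq \coup(\mu_X, \mu_Y)$ with $\dis_p(\mu_n) \to I$ as $n \to \infty$.

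Next I invoke Lemma \ref{lem:compactness-couplings}: the set $\coup(\mu_X, \mu_Y)$ is compact in $\prob(X \times Y)$ under the narrow topology. Since the narrow topology on $\prob(X \times Y)$ is metrizable (as noted in the discussion preceding Prokhorov's theorem), compactness is equivalent to sequential compactness. Hence I can extract a subsequence $(\mu_{n_k})_{k \in \N}$ converging narrowly to some $\mu^* \in \coup(\mu_X, \mu_Y)$.

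Finally I apply Lemma \ref{lem:cont-distortion-intervals}. For $p \in [1, \infty)$, continuity of $\dis_p$ yields
\[\dis_p(\mu^*) = \lim_{k \to \infty} \dis_p(\mu_{n_k}) = I,\]
so $\mu^*$ is an optimal coupling. For $p = \infty$, lower semicontinuity of $\dis_\infty$ yields
\[\dis_\infty(\mu^*) \leq \liminf_{k \to \infty} \dis_\infty(\mu_{n_k}) = I,\]
and since $\mu^* \in \coup(\mu_X, \mu_Y)$ we also have $\dis_\infty(\mu^*) \geq I$, so equality holds. In either case $\mu^*$ attains the infimum, completing the proof. There is essentially no obstacle here beyond correctly citing the two preceding lemmas; the only minor point to be careful about is that the $p = \infty$ case requires the lower semicontinuity argument rather than continuity.
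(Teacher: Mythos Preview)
Your proof is correct and follows essentially the same approach as the paper: invoke compactness of $\coup(\mu_X,\mu_Y)$ (Lemma~\ref{lem:compactness-couplings}) together with (lower semi)continuity of $\dis_p$ (Lemma~\ref{lem:cont-distortion-intervals}) to conclude that the infimum is attained. The paper simply states this in one line, while you have spelled out the minimizing-sequence argument explicitly.
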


\begin{proof}
The result follows from Lemmas \ref{lem:compactness-couplings} and \ref{lem:cont-distortion-intervals}, because lower semicontinuity and compactness are sufficient to guarantee that $\dis_p$ achieves its infimum on $\coup(\mu_X,\mu_Y)$, for any $p \in [1,\infty]$. \qedhere

\end{proof}

\subsection{The network Gromov-Wasserstein distance}
For each $p \in [1,\infty]$, we define:
\[\dnp(X,Y):=\frac{1}{2}\min_{\mu\in\coup(\mu_X,\mu_Y)}\dis_p(\mu) 
\qquad \text{for each } (X,\w_X,\mu_X),(Y,\w_Y,\mu_Y) \in \Nm.
\]
Here we implicitly use Theorem \ref{thm:opt-coup} to write $\min$ instead of $\inf$. As we will see below, $\dnp$ is a legitimate pseudometric on $\Nm$. The structure of $\dnp$ is analogous to a formulation of the \emph{Gromov-Wasserstein distance} between metric measure spaces \cite{dghlp, sturm2012space}. 

\begin{remark}[Boundedness of $\dnp$]
Recall from Example \ref{ex:product-coupling} that for any $X,Y\in \Nm$, $\coup(\mu_X,\mu_Y)$ always contains the product coupling, and is thus nonempty. A consequence is that $\dnp(X,Y)$ is bounded for any $p \in [1,\infty]$. Indeed, by taking the product coupling $\mu:= \mu_X \otimes \mu_Y$ we have $\dnp(X,Y)\leq \frac{1}{2}\dis_p(\mu) < \infty$.

\end{remark}

In some simple cases, we obtain explicit formulas for computing $\dnp$.

\begin{example}[Easy examples of $\dnp$]
\label{ex:1-pt-dnp}
Let $a, b \in \R$ and consider the networks $N_1(a)$ and $N_1(b)$. The unique coupling between the two networks is the product measure $\mu = \d_x \otimes \d_y$, where we understand $x,y$ to be the nodes of the two networks. Then for any $p \in [1, \infty]$, we obtain:
\[ \dnp(N_1(a),N_1(b)) = \frac{1}{2}\dis_p(\mu) = |\w_{N_1(a)}(x,x) - \w_{N_1(b)}(y,y)| = |a-b|.\] 

Let $(X,\omega_X,\mu_X)\in \Nm$ be any network and let $N_1(a)=(\{y\},a)$ be a network with one node. Once again, there is a unique coupling $\mu = \mu_X\otimes \d_y$ between the two networks. For any $p \in [1,\infty)$, we obtain:
\[\dnp(X,N_1(a))=\frac{1}{2}\dis_p(\mu) = \frac{1}{2} \lp\int_{X}\int_X \lbar \w_X(x,x') - a \rbar^p d\mu_X(x)d\mu_X(x') \rp^{1/p}.\]
For $p = \infty$, we have $\dnp(X,N_1(a)) = \esssup \left( \frac{1}{2}|\w_X - a| \right).$
\end{example}

\begin{remark}\label{rem:weak-isom} $\dnp$ is not necessarily a metric modulo strong isomorphism. This can be seen from Figure \ref{fig:wisom-nets}.
\end{remark}

The definition of $\dnp$ is sensible in the sense that it captures the notion of a distance:

\begin{theorem}\label{thm:dnp}
For each $p \in [1,\infty]$, $\dnp$ is a pseudometric on $\Nm$.
\end{theorem}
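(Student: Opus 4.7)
The plan is to verify the three pseudometric axioms: that $\dnp(X,X)=0$, symmetry $\dnp(X,Y)=\dnp(Y,X)$, and the triangle inequality. Non-negativity is immediate from the definition of $\dis_p$.

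For $\dnp(X,X)=0$, I would use the diagonal coupling $\Delta \in \coup(\mu_X,\mu_X)$ from Example \ref{ex:diag-coupling}. A short Fubini-type calculation shows that for any $p \in [1,\infty)$,
\[
\dis_p(\Delta)^p = \int_X \int_X |\w_X(x,x') - \w_X(x,x')|^p \, d\mu_X(x)\, d\mu_X(x') = 0,
\]
and similarly $\dis_\infty(\Delta) = 0$. Thus $\dnp(X,X) \leq \tfrac{1}{2}\dis_p(\Delta) = 0$.

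For symmetry, given $\mu \in \coup(\mu_X,\mu_Y)$, let $\sigma: X \times Y \to Y \times X$ be the swap $(x,y) \mapsto (y,x)$. Then $\sigma_*\mu \in \coup(\mu_Y,\mu_X)$ and a change of variables shows $\dis_p(\mu) = \dis_p(\sigma_*\mu)$. Hence the infima defining $\dnp(X,Y)$ and $\dnp(Y,X)$ coincide.

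The main obstacle is the triangle inequality, and I would establish it via the standard gluing construction. Given networks $X, Y, Z$, let $\mu_{XY} \in \coup(\mu_X,\mu_Y)$ and $\mu_{YZ} \in \coup(\mu_Y,\mu_Z)$ be optimal couplings (which exist by Theorem \ref{thm:opt-coup}). By the disintegration theorem on the Polish spaces $X \times Y$ and $Y \times Z$ (with respect to the projection to $Y$), one obtains regular conditional probabilities and glues them into a measure $\mu$ on $X \times Y \times Z$ whose $(X,Y)$- and $(Y,Z)$-marginals are $\mu_{XY}$ and $\mu_{YZ}$ respectively. Let $\mu_{XZ} := (\pi_{XZ})_*\mu \in \coup(\mu_X,\mu_Z)$. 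Then, writing the integrand over $(X \times Y \times Z)^2$ and applying the pointwise triangle inequality $|\w_X(x,x') - \w_Z(z,z')| \leq |\w_X(x,x') - \w_Y(y,y')| + |\w_Y(y,y') - \w_Z(z,z')|$ followed by Minkowski's inequality in $L^p(\mu \otimes \mu)$, I get
\[
\dis_p(\mu_{XZ}) \leq \dis_p(\mu_{XY}) + \dis_p(\mu_{YZ}) = 2\dnp(X,Y) + 2\dnp(Y,Z).
\]
(For $p=\infty$, the essential supremum version of the triangle inequality gives the same bound.) Dividing by $2$ and taking the infimum over $\mu_{XZ}$ yields $\dnp(X,Z) \leq \dnp(X,Y) + \dnp(Y,Z)$. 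The delicate point is justifying the gluing in the Polish setting, which is standard but requires care; I would cite the gluing lemma (e.g., from Villani's \emph{Optimal Transport} or the version in \cite{sturm2012space}) to avoid reproducing the disintegration argument in full.
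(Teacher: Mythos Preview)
Your proposal is correct and follows essentially the same route as the paper: diagonal coupling for $\dnp(X,X)=0$, the swap pushforward for symmetry, and the gluing lemma (cited from \cite{villani2003topics} or \cite{sturm2012space}) together with Minkowski's inequality for the triangle inequality. The paper's proof is nearly identical, differing only in presentation.
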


\begin{proof}[Proof of Theorem \ref{thm:dnp}] 

Let $(X,\w_X,\mu_X), (Y,\w_Y,\mu_Y), (Z,\w_Y,\mu_Y) \in \Nm$.
It is clear that $\dnp(X,Y) \geq 0$. 
Taking the diagonal coupling (see Example \ref{ex:diag-coupling}) shows $\dnp(X,X) = 0$. 
For symmetry, notice that for any $\mu \in \coup(\mu_X,\mu_Y)$, 
we can define $\widetilde{\mu} := f_*\mu$, where $f:X\times Y \r Y\times X$ is the map $(x,y) \mapsto (y,x)$.
Then $\dis_p(\mu)=\dis_p(\widetilde{\mu})$, and this will show $\dnp(X,Y)=\dnp(Y,X)$. Note that we are overloading notation here: there are implicitly two $\dis_p$ functions, with different domains, for $\mu$ and $\widetilde{\mu}$, respectively. 

Finally, we need to check the triangle inequality. Let $\mu_{12} \in \coup(\mu_X,\mu_Y)$ and $\mu_{23} \in \coup(\mu_Y,\mu_Z)$ be couplings such that $2\dnp(X,Y) = \dis_p(\mu_{12}) $ and 
$2\dnp(Y,Z) = \dis_p(\mu_{23}) $ (using Theorem \ref{thm:opt-coup}). 
By the standard gluing lemma (Lemma 1.4 in \cite{sturm2012space}, also Lemma 7.6 in \cite{villani2003topics}), we obtain a probability measure $\mu \in \prob(X\times Y\times Z)$ with marginals $\mu_{12}, \mu_{23}$, and a marginal $\mu_{13}$ that is a coupling between $\mu_X$ and $\mu_Z$. This coupling is not necessarily optimal. Then we have:
\begin{align*}
2\dnp(X,Z) &\leq \dis_p(\mu_{13})\\
&=\lnorm \w_X - \w_Y + \w_Y - \w_Z\rnorm_{L^p(\mu \otimes \mu)}\\
&\leq \lnorm \w_X - \w_Y \rnorm_{L^p(\mu \otimes \mu)} + \lnorm \w_Y - \w_Z \rnorm_{L^p(\mu \otimes \mu)}\\
&= \lnorm \w_X - \w_Y \rnorm_{L^p(\mu_{12} \otimes \mu_{12})} + \lnorm \w_Y - \w_Z \rnorm_{L^p(\mu_{23} \otimes \mu_{23})}
 = 2 \dnp(X,Y)+2\dnp(Y,Z).
\end{align*}
The second inequality above follows from Minkowski's inequality. 
This proves the triangle inequality. \end{proof}

\begin{remark}
This result and its proof are analogous to the related results for gauged and metric measure spaces \cite{dghlp, sturm2012space}. The observation here is that the metric structure on $\Ngen$ is not inherited from its elements, but is rather enforced by the structure of $\dnp$. This is in contrast, for example, to the Wasserstein distance $W_p$, which inherits its metric structure from an underlying metric space.
\end{remark}

It remains to discuss the precise pseudometric structure of $\dnp$. The following result is analogous to a statement about \emph{homomorphisms} in \cite{sturm2012space}; again, the proof is purely measure-theoretic and hence applies to the asymmetric setting.

\begin{theorem}[Pseudometric structure of $\dnp$]
\label{thm:dnp-pseudo}
Let $(X,\w_X,\mu_X), (Y,\w_Y,\mu_Y) \in \Ngen$, and let $p \in [1,\infty]$. Then $\dnp(X,Y) = 0$ if and only if $X \cong^w Y$. 
\end{theorem}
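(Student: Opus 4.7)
The plan is to prove both directions constructively, essentially by identifying the witness space $(Z,\mu_Z)$ of Definition \ref{defn:weak-isom} with a coupling $\mu$ on $X \times Y$, and translating between the distortion condition $\dis_p(\mu)=0$ and the pullback condition $\lnorm f^*\w_X - g^*\w_Y\rnorm_\infty = 0$.

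\textbf{Sufficiency} ($X \cong^w Y \Rightarrow \dnp(X,Y) = 0$). Suppose there exist a Borel probability space $(Z,\mu_Z)$ and measurable maps $f:Z \to X$, $g:Z \to Y$ witnessing weak isomorphism. Define the measurable map $h:Z \to X\times Y$ by $h(z) := (f(z),g(z))$ and let $\mu := h_*\mu_Z$. Using the pushforward property $f_*\mu_Z = \mu_X$ (resp.\ $g_*\mu_Z = \mu_Y$), a direct check shows $\mu(A\times Y) = \mu_Z(f\inv(A)) = \mu_X(A)$ for $A \in \borel(X)$, and similarly for the $Y$-marginal; hence $\mu \in \coup(\mu_X,\mu_Y)$. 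For $p \in [1,\infty)$, the change-of-variables formula applied to $h\times h$ yields
\[
\dis_p(\mu)^p = \myint\myint_{Z^2} \lbar (f^*\w_X)(z,z') - (g^*\w_Y)(z,z')\rbar^p \diff \mu_Z(z)\diff\mu_Z(z'),
\]
which vanishes since $f^*\w_X = g^*\w_Y$ holds $\mu_Z \otimes \mu_Z$-a.e.\ by hypothesis. The case $p = \infty$ is analogous using the pushforward behavior of the essential supremum. Either way $\dnp(X,Y) \leq \frac{1}{2}\dis_p(\mu) = 0$.

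\textbf{Necessity} ($\dnp(X,Y) = 0 \Rightarrow X \cong^w Y$). By Theorem \ref{thm:opt-coup} there exists an optimal coupling $\mu \in \coup(\mu_X,\mu_Y)$ with $\dis_p(\mu) = 0$. Set $Z := X\times Y$, $\mu_Z := \mu$, and take $f,g$ to be the canonical projections onto $X$ and $Y$. Since $X,Y$ are Polish, $Z$ is Polish, so $(Z,\mu_Z)$ is a Borel probability space, and the marginal property of $\mu$ immediately gives $f_*\mu_Z = \mu_X$ and $g_*\mu_Z = \mu_Y$. It remains to show $\lnorm f^*\w_X - g^*\w_Y\rnorm_\infty = 0$ in $L^\infty(\mu_Z \otimes \mu_Z)$. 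For $p\in[1,\infty)$ this follows because $\dis_p(\mu)^p = 0$ forces the nonnegative integrand $|\w_X(x,x') - \w_Y(y,y')|^p$ to vanish $\mu\otimes\mu$-a.e., which under the identification $Z = X\times Y$ is precisely the statement $f^*\w_X = g^*\w_Y$ a.e.\ with respect to $\mu_Z \otimes \mu_Z$. For $p = \infty$, $\dis_\infty(\mu) = \esssup |\w_X - \w_Y| = 0$ already gives this directly.

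\textbf{Main obstacle.} There is no deep obstacle, but two small points require care. First, one must verify that $(h\times h)_*(\mu_Z \otimes \mu_Z) = \mu \otimes \mu$, which is a standard property of pushforward and product measures. Second, one must remember that the $L^\infty$ norm in the definition of weak isomorphism is taken with respect to the product measure $\mu_Z \otimes \mu_Z$ on $Z\times Z$, not any ambient topology; once this is kept straight, the equivalence between vanishing $p$-distortion and vanishing essential supremum of the pullback difference is just the standard fact that $\myint |h|^p \diff \nu = 0$ iff $h = 0$ $\nu$-a.e. The uniformity of the conclusion over all $p \in [1,\infty]$ follows because both conditions reduce to the a.e.\ equality $f^*\w_X = g^*\w_Y$.
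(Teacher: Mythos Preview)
Your proof is correct and follows essentially the same approach as the paper: for sufficiency you push forward $\mu_Z$ via $(f,g)$ to obtain a coupling with zero distortion, and for necessity you take $Z = X\times Y$ with the optimal coupling (from Theorem~\ref{thm:opt-coup}) and the canonical projections. The only difference is that you spell out a few routine verifications (the marginal check, the $(h\times h)_*$ identity, the a.e.\ equivalence) that the paper leaves implicit.
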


\begin{proof}[Proof of Theorem \ref{thm:dnp-pseudo}]
Fix $p \in [1,\infty)$. For the backward direction, suppose there exist $(Z,\mu_Z)$ and measurable maps $f:Z \r X$ and $g:Z \r Y$ satisfying the conditions of Definition \ref{defn:weak-isom}. Let $\mu:= (f,g)_*\mu_Z$. Then $\mu \in \coup(\mu_X,\mu_Y)$, and we have:
\begin{align*}
2\dnp(X,Y) \leq \lp \int_{(X\times Y)^2}  \lbar \w_X - \w_Y \rbar^p \diff\mu\diff\mu \rp^{1/p}
= \lp \int_{Z^2} \lbar f^*\w_X - g^*\w_Y\rbar^p \diff \mu_Z\diff\mu_Z\rp^{1/p} = 0.
\end{align*}
Here the first equality is by the change of variables formula. The case $p = \infty$ is similar.

For the forward direction, let $\mu \in \coup(\mu_X,\mu_Y)$ be an optimal coupling with $\dis_p(\mu) = 0$ (Theorem \ref{thm:opt-coup}). Define $Z:= X\times Y$, $\mu_Z:= \mu$. Then the projection maps $\pi_X:Z \r X$ and $\pi_Y:Z \r Y$ are measurable. We also have $(\pi_X)_*\mu = \mu_X$ and $(\pi_Y)_*\mu =\mu_Y$. Since $\dis_p(\mu) = 0$, we also have $\norm{(\pi_X)^*\w_X - (\pi_Y)^*\w_Y}_\infty = \norm{\w_X - \w_Y}_\infty = 0$. 

The $p=\infty$ case is proved analogously. This concludes the proof. \qedhere
\end{proof}

\begin{remark} 
A result analogous to Theorem \ref{thm:dnp-pseudo} holds for networks without measure equipped with a Gromov-Hausdorff-type network distance \cite{dn-part1}. The ``tripod structure" $X \leftarrow Z \r Y$ described above is much more difficult to obtain in the setting of \cite{dn-part1}. This highlights an advantage of the measure-theoretic setting of the current paper. 
\end{remark}

\subsection{Additional constructions}

We briefly digress to discuss some theoretical connections to the framework presented above. The first of these is the notion of parametrization, which is used in the setting of mm-spaces to define Gromov's box distance \cite{gromov-book}. The second is an explicit development of an alternative distance between networks based on the Gromov-Prokhorov distance between mm-spaces \cite{greven2009convergence}. This in turn leads to interesting and novel lower bounds on the $\dninf$-distance between spheres (see \S\ref{sec:sphere-lbounds}).

\subsubsection{Interval representation}
\label{sec:interval-rep}

We now record a standard result about mm-spaces that remains valid in the network setting. Let $(X,\w_X,\mu_X) \in \Nm$. Because $X$ is Polish and $\mu_X(X) = 1$, the pair $(X,\mu_X)$ admits a \emph{parameter}, i.e. a (not necessarily unique) surjective Borel-measurable map $\rho:I=[0,1] \r X$ such that $\rho_*\leb = \mu_X$ \cite[Lemma 4.2]{shioya2016metric}. Here $\leb$ denotes Lebesgue measure. By pulling back $\w_X$, we get a triple $(I, \rho^*\w_X, \leb) \in \Nm$. Note that by construction, $(X,\w_X,\mu_X)$ is weakly isomorphic to its parameter. 

Parametrizations allow one to define a version of Gromov's box distance \cite{gromov-book} for networks. Computing the box distance leads to difficult combinatorial problems and is not the focus of this paper, but we point to it as a source of interesting theoretical problems.

In parametrized form, a network is a measurable, integrable function on the unit square. If the edge weight function is normalized and centered to be in $[0,1]$, then a network corresponds to a \emph{graphon} \cite{lovasz2012large}.

\subsubsection{The network Gromov-Prokhorov distance}

We now formulate a network distance analogous to the \emph{Gromov-Prokhorov} distance between mm-spaces \cite{greven2009convergence}. This will be used to prove subsequent results. 

Let $\a \in [0,\infty)$. For any $(X,\w_X,\mu_X)$, $(Y,\w_Y,\mu_Y) \in \Nm$, we write $\coup := \coup(\mu_X,\mu_Y)$ and define:
\[ \dgp(X,Y) :=\frac{1}{2} \inf_{\mu \in \coup} \inf \{ \e > 0 : \mu\otimes\mu \lp \{x,y,x',y' \in (X\times Y)^2 : \lbar \w_X(x,x') - \w_Y(y,y') \rbar \geq \e \} \rp \leq \a\e\}.  \]

\begin{theorem}
\label{thm:gp-metric}
For each $\a \in [0,\infty)$, $\dgp$ is a pseudometric on $\Nm$.
\end{theorem}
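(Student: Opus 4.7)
The plan is to verify in turn: non-negativity, reflexivity, symmetry, and the triangle inequality. The first three are quick, while the triangle inequality is where the real work lies.

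Non-negativity is immediate from the definition. For reflexivity, I would take $\mu$ to be the diagonal coupling $\Delta \in \coup(\mu_X,\mu_X)$ from Example \ref{ex:diag-coupling}. Under $\Delta \otimes \Delta$, the set $\{(x,x,x',x') : |\w_X(x,x') - \w_X(x,x')| \geq \e\}$ is empty for every $\e > 0$, so every $\e > 0$ satisfies the inner condition, giving $\dgp(X,X) = 0$. For symmetry, I would use the swap map $f:X\times Y\to Y\times X$, $(x,y)\mapsto(y,x)$: if $\mu \in \coup(\mu_X,\mu_Y)$, then $\widetilde{\mu}:=f_*\mu \in \coup(\mu_Y,\mu_X)$, and the pushforward identity $(f\times f)_*(\mu\otimes\mu) = \widetilde{\mu}\otimes\widetilde{\mu}$ together with $|\w_X - \w_Y| = |\w_Y - \w_X|$ shows that $\mu$ and $\widetilde\mu$ realize the same inner infimum, yielding $\dgp(X,Y) = \dgp(Y,X)$.

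For the triangle inequality, take $X,Y,Z \in \Nm$ and fix any $\e_1 > 2\dgp(X,Y)$ and $\e_2 > 2\dgp(Y,Z)$. By the definition of the infimum, I can choose $\mu_{12} \in \coup(\mu_X,\mu_Y)$ and $\mu_{23} \in \coup(\mu_Y,\mu_Z)$ such that
\[\mu_{12}\otimes\mu_{12}\lp \{|\w_X - \w_Y| \geq \e_1\}\rp \leq \a\e_1, \qquad \mu_{23}\otimes\mu_{23}\lp\{|\w_Y - \w_Z| \geq \e_2\}\rp \leq \a\e_2.\]
I would then apply the gluing lemma (Lemma 1.4 in \cite{sturm2012space}, already used in the proof of Theorem \ref{thm:dnp}) to obtain $\mu \in \prob(X\times Y\times Z)$ whose $(X,Y)$- and $(Y,Z)$-marginals are $\mu_{12}$ and $\mu_{23}$; let $\mu_{13}$ denote its $(X,Z)$-marginal, which lies in $\coup(\mu_X,\mu_Z)$.

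The key observation is the elementary triangle inequality: if $|\w_X(x,x') - \w_Z(z,z')| \geq \e_1 + \e_2$, then for any $y,y'\in Y$ either $|\w_X(x,x') - \w_Y(y,y')| \geq \e_1$ or $|\w_Y(y,y') - \w_Z(z,z')| \geq \e_2$. Translating this into a set-inclusion on $(X\times Y\times Z)^2$ and integrating against $\mu \otimes \mu$ gives
\[\mu_{13}\otimes\mu_{13}\lp\{|\w_X - \w_Z| \geq \e_1+\e_2\}\rp \leq \mu_{12}\otimes\mu_{12}\lp\{|\w_X - \w_Y| \geq \e_1\}\rp + \mu_{23}\otimes\mu_{23}\lp\{|\w_Y - \w_Z| \geq \e_2\}\rp \leq \a(\e_1+\e_2),\]
where the first step uses that $\mu_{13}$ is the pushforward of $\mu$ under the projection $(x,y,z)\mapsto(x,z)$. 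Hence $\e_1+\e_2$ lies in the feasible set defining $2\dgp(X,Z)$, so $2\dgp(X,Z) \leq \e_1+\e_2$. Letting $\e_1 \downarrow 2\dgp(X,Y)$ and $\e_2 \downarrow 2\dgp(Y,Z)$ yields the triangle inequality.

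The main obstacle is precisely the triangle inequality, specifically getting the right ``$\e_1+\e_2$'' combination out of marginally-chosen couplings; the gluing lemma plus the elementary pointwise triangle inequality for $|\w_X - \w_Z|$ handles this cleanly. Non-emptiness of the inner feasible set (so that the inner infimum is finite) follows from boundedness of $\w_X, \w_Y$, since for sufficiently large $\e$ the set $\{|\w_X - \w_Y|\geq \e\}$ is empty.
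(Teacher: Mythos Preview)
Your proposal is correct and follows essentially the same approach as the paper: both glue the two chosen couplings to a measure on $X\times Y\times Z$, use the pointwise triangle inequality $|\w_X-\w_Z|\leq|\w_X-\w_Y|+|\w_Y-\w_Z|$ to obtain the set inclusion $C\subseteq A\cup B$ on $(X\times Y\times Z)^2$, apply subadditivity of $\mu\otimes\mu$, and then let $\e_1,\e_2$ decrease to the respective infima. The only cosmetic difference is that you spell out the easy axioms explicitly while the paper defers them to the analogous arguments for $\dnp$.
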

\begin{proof}
Let $(X,\w_X,\mu_X),(Y,\w_Y,\mu_Y), (Z,\w_Z,\mu_Z) \in \Nm$. The proofs that $\dgp(X,Y) \geq 0$, $\dgp(X,X) = 0$, and that $\dgp(X,Y) = \dgp(Y,X)$ are analogous to those used in Theorem \ref{thm:dnp}. Hence we only check the triangle inequality. Let $\e_{XY} > 2\dgp(X,Y)$, $\e_{YZ} > 2\dgp(Y,Z)$, and let $\mu_{XY}, \mu_{YZ}$ be couplings such that 
\begin{align*}
\mu_{XY}^{\otimes 2} \lp \{(x,y,x',y') : \lbar \w_X(x,x') - \w_Y(y,y') \rbar \geq \e_{XY} \}\rp &\leq \a\e_{XY},\\
\mu_{YZ}^{\otimes 2} \lp \{(y,z,y',z') : \lbar \w_Y(y,y') - \w_Z(z,z') \rbar \geq \e_{YZ} \}\rp &\leq \a\e_{YZ}.
\end{align*}
For convenience, define:
\begin{align*}
A&:= \{((x,y,z),(x',y',z')) \in (X\times Y \times Z)^2 : \lbar \w_X(x,x') - \w_Y(y,y') \rbar \geq \e_{XY} \}\\
B&:= \{((x,y,z),(x',y',z')) \in (X\times Y \times Z)^2 : \lbar \w_Y(y,y') - \w_Z(z,z') \rbar \geq \e_{YZ} \}\\
C&:=\{((x,y,z),(x',y',z')) \in (X\times Y \times Z)^2 : \lbar \w_X(x,x') - \w_Z(z,z') \rbar \geq \e_{XY} + \e_{YZ}\}.
\end{align*}

Next let $\mu$ denote the probability measure with marginals $\mu_{XY}, \mu_{YZ}$, and a marginal $\mu_{XZ}\in \coup(\mu_X,\mu_Z)$ obtained from gluing $\mu_{XY}$ and $\mu_{YZ}$ (cf. Lemma 7.6 in \cite{villani2003topics}). We need to show: 
\[\mu_{XZ}^{\otimes 2}\lp (\pi_X,\pi_Z)(C) \rp \leq \a(\e_{XY} + \e_{YZ}).\]
To show this, it suffices to show $C \subseteq A \cup B$, because then we have $\mu^{\otimes 2}(C) \leq \mu^{\otimes 2}(A) + \mu^{\otimes 2}(B)$ and consequently 
\begin{align*}
\mu_{XZ}^{\otimes 2}\lp (\pi_X,\pi_Z)(C) \rp = \mu^{\otimes 2}(C) \leq \mu^{\otimes 2}(A) + \mu^{\otimes 2}(B) &= 
\mu_{XY}^{\otimes 2}\lp (\pi_X,\pi_Y)(A) \rp  + 
\mu_{YZ}^{\otimes 2}\lp (\pi_Y,\pi_Z)(B) \rp\\ 
&\leq \a(\e_{XY} + \e_{YZ}).
\end{align*}

Let $((x,y,z),(x',y',z')) \in (X\times Y\times Z)^2 \setminus (A \cup B)$. Then we have 
\[\lbar \w_X(x,x') - \w_Y(y,y') \rbar < \e_{XY} \text{ and }
\lbar \w_Y(y,y') - \w_Z(z,z') \rbar < \e_{YZ}.\]
By the triangle inequality, we then have:
\[ \lbar \w_X(x,x') - \w_Z(z,z') \rbar \leq 
\lbar \w_X(x,x') - \w_Y(y,y') \rbar + 
\lbar \w_Y(y,y') - \w_Z(z,z') \rbar < \e_{XY} + \e_{YZ}.\]
Thus $((x,y,z),(x',y',z')) \in (X\times Y\times Z)^2 \setminus C$. This shows $C \subseteq A \cup B$.

The preceding work shows that $2\dgp(X,Z) \leq \e_{XY} + \e_{YZ}$. Since $\e_{XY} > 2\dgp(X,Y)$ and $\e_{YZ} > 2\dgp(Y,Z)$ were arbitrary, it follows that $\dgp(X,Z) \leq \dgp(X,Y) + \dgp(Y,Z)$. \end{proof}

The next lemma follows by unpacking the definitions of the GP and GW distances.
\begin{lemma}[Relation between GP and GW]
\label{lem:gp-gw}
Let $(X,\w_X,\mu_X), (Y,\w_Y,\mu_Y) \in \Nm$. We always have:
\[\dgpzero(X,Y) = \dninf(X,Y).\]
\end{lemma}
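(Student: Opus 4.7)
The plan is essentially to unwind definitions. Setting $\a = 0$ in the definition of $\dgp$, the condition $\mu\otimes\mu(\{|\w_X - \w_Y| \geq \e\}) \leq \a\e$ becomes $\mu\otimes\mu(\{|\w_X - \w_Y| \geq \e\}) = 0$. So I would first write
\[
\dgpzero(X,Y) = \tfrac{1}{2}\inf_{\mu \in \coup}\inf\set{\e > 0 : \mu\otimes\mu(\set{|\w_X - \w_Y|\geq \e}) = 0},
\]
and the goal reduces to showing that the inner infimum equals $\esssup_{\mu\otimes\mu} |\w_X - \w_Y| = \dis_\infty(\mu)$, after which the conclusion is immediate from the definition of $\dninf$ (together with Theorem \ref{thm:opt-coup} to replace the outer $\inf$ by $\min$).

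The key lemma I would isolate is the following purely measure-theoretic fact: for any measurable $f \geq 0$ on a probability space $(W,\nu)$,
\[
\inf\set{\e > 0 : \nu(f \geq \e) = 0} = \inf\set{\e \geq 0 : \nu(f > \e) = 0} = \esssup f.
\]
I would prove this by two inclusions. For $\leq$: if $\e > \esssup f =: M$, then $\set{f \geq \e}\subseteq \set{f > M}$, which has $\nu$-measure zero, so $\e$ lies in the left-hand set; taking $\e \downarrow M$ gives the bound. For $\geq$: if $\nu(f \geq \e) = 0$, then also $\nu(f > \e) = 0$, so $\e$ lies in the set defining $M$, hence $\e \geq M$; taking the infimum over all such $\e$ gives the other bound.

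Applying this lemma with $f = |\w_X - \w_Y|$ and $\nu = \mu\otimes\mu$ gives
\[
\inf\set{\e > 0 : \mu\otimes\mu(\set{|\w_X - \w_Y|\geq \e}) = 0} = \dis_\infty(\mu),
\]
and taking the infimum over $\mu \in \coup(\mu_X,\mu_Y)$ and multiplying by $\tfrac{1}{2}$ yields $\dgpzero(X,Y) = \dninf(X,Y)$ as desired.

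I don't anticipate a real obstacle here: the only subtle point is the distinction between $\set{f \geq \e}$ and $\set{f > \e}$, which matters because $\esssup f$ is customarily defined using the strict inequality while $\dgp$ uses the non-strict one. The lemma above shows that this distinction disappears once one passes to the infimum over $\e$, so no delicate argument involving continuity from above or inner/outer approximation is needed.
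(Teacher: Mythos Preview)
Your proposal is correct and matches the paper's approach exactly: the paper simply states that the lemma ``follows by unpacking the definitions of the GP and GW distances,'' and your argument is precisely that unpacking, done carefully. The measure-theoretic sublemma you isolate (equating the two infima despite the $\geq$ versus $>$ discrepancy) is the only point requiring any thought, and you handle it correctly.
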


\section{Invariants and lower bounds}
\label{sec:lbounds}

As already remarked, finite networks can be regarded as square matrices equipped with a probability measure on the columns (equivalently, the rows). This was the setting of \cite{pcs16}. 
Theorem \ref{thm:dnp} completes the theoretical justification behind using the GW distance to compare matrices, as carried out (at least for symmetric matrices) in \cite{pcs16}. 

We now study a variety of network invariants, which can also be thought of as network features. 
Informally, a network invariant is a compressed representation of the network satisfying the following compatibility property: if two networks are the same in the sense of $\dnp$, then their invariants should also be ``the same". 
The invariants we consider are functions $\iota: (\Ngen,\dnp) \r (\mc{I},d_{\mc{I}})$, where $(\mc{I},d_{\mc{I}})$ is some pseudometric space.  
Such invariants translate the original problem of computing GW over the ``space of networks" to computing $d_{\mc{I}}$ over spaces $\mc{I}$ with more regular geometry, e.g. the real line.
Translating the problem to a simpler space is done in a controlled manner. One such form of control is \emph{Lipschitz stability}: an invariant $\iota$ is \emph{Lipschitz-stable} if there exists a Lipschitz constant $L_{\iota}$ such that 
\[ d_{\mc{I}}(\iota(X),\iota(Y)) \leq L_{\iota} \cdot \dnp(X,Y), \tag*{for all $X,Y \in \Ngen.$}\]
In Section \ref{sec:lip-stable}, we present Lipschitz-stable invariants. In Section \ref{sec:int-stable}, we present a different notion of control that we refer to as \emph{interleaving stability} as well as associated invariants.

\begin{remark}
Asymmetry arises in a significant way in this section: for most of our network invariants, we obtain ``outgoing" and ``incoming" versions, based on our choice of functions $\w_X(x,\cdot)$ or $\w_X(\cdot,x)$. The network interpretation can be framed in terms of hubs (nodes with high outgoing edge weights) and authorities (nodes with high incoming edge weights) \cite{kleinberg1999authoritative}.
\end{remark}

\subsection{A hierarchy of lower bounds for $\dnp$}
\label{sec:lip-stable}

Following \cite{dgh-sm,dghlp}, we now produce a hierarchy of lower bounds for $\dnp$, namely the First Lower Bound (FLB), Second Lower Bound (SLB), and Third Lower Bound (TLB). These are obtained by linearizing the GW objective and/or pushing forward the problem into the real line. Each of these bounds itself has an associated pushforward into the real line, which we denote by adding a prefix $\R$-. 
Because we are in the asymmetric setting, the FLB, TLB, and their $\R$-versions decouple into ``incoming" and ``outgoing" versions. 
In Remark \ref{rem:conn-lb-invar}, we will show (using Theorem \ref{thm:stab-push}) that these lower bounds are in fact obtained as Lipschitz stability conditions on certain network invariants.

The hierarchy is illustrated in the following diagram, where the arrows indicate (possibly non-strictly) decreasing complexity. 

\begin{center}
\begin{tikzpicture}
\node (size) at (-3,0){$\size$};
\node (FLB) at (0,0){FLB};
\node (TLB) at (3,0){TLB};
\node (GW) at (6,0){GW};
\node (SLB) at (9,0){SLB};
\node (rFLB) at (0,-1){$\R$-FLB};
\node (rTLB) at (3,-1){$\R$-TLB};
\node (rSLB) at (9,-1){$\R$-SLB};

\draw (GW) edge[->] (SLB);
\draw (GW) edge[->] (TLB);
\draw (TLB) edge[->] (rTLB);
\draw (FLB) edge[->] (rFLB);
\draw (SLB) edge[->] (rSLB);
\draw (TLB) edge[->] (FLB);
\draw (FLB) edge[->] (size);
\end{tikzpicture}
\end{center}

We now derive these relationships. Let $p\in [1,\infty]$. Start by fixing one pair of coordinates in the $\dnp$ integrand. Then we obtain the \emph{(outgoing) joint eccentricity function} $\eccout_{p,X,Y}:X\times Y \r \R_+$ defined by
\begin{align}
\eccout_{p,X,Y}(s,t):= \inf_{\mu \in \coup(\mu_X,\mu_Y)}\lnorm \w_X(s,\cdot) - \w_Y(t,\cdot)\rnorm_{L^p(\mu)}, \qquad (s,t) \in X\times Y.
\tag{JE}
\label{eq:JE}
\end{align}
Note that $\w_X(s,\cdot)$ and $\w_Y(t,\cdot)$ are both measurable \cite[Proposition 2.34]{folland1999real}. Switching the arguments above produces the \emph{incoming joint eccentricity function}.
Taking the norm of an individual term gives the \emph{(outgoing) eccentricity function} $\eccout_{p,X}:X \r \R_+$:
\begin{align}
\eccout_{p,X}(s):= \lnorm \w_X(s,\cdot) \rnorm_{L^p(\mu_X)}, \qquad s \in X.
\tag{E}
\label{eq:E}
\end{align}
Flipping the arguments above produces the \emph{incoming eccentricity function}.
The norm of the preceding function is the \emph{$p$th $\size$ function} $\size_p : \Ngen \r \R_+$:
\begin{align}
\size_p(X):= \lnorm \eccout_{p,X} \rnorm_{L^p(\mu_X)} = \lnorm \w_X \rnorm_{L^p(\mu_X \otimes \mu_X)}. 
\tag{Sz}
\label{eq:Sz}
\end{align}

The size function is easily seen to be a network invariant: it compresses all the information in a network into a single real number. Theorem \ref{thm:stab-push} below shows that this compression occurs in a quantitatively stable manner. Notice that $\size_p$ can be computed exactly via a formula, and this computation is extremely cheap. Despite its simplicity, it can be very helpful as a first step in comparing networks. From a procedural perspective, given a network comparison task, one could compute $\size_p$ for different networks and compare these values to gain a coarse understanding of the discrepancies between the networks.

A priori, the connections between Equations (\ref{eq:E}) and (\ref{eq:JE}) to network invariants are somewhat unclear. We will use Theorem \ref{thm:stab-push} to clarify these connections in Remark \ref{rem:conn-lb-invar}, but we present the statements now for convenience. 
It will turn out that the invariant associated to Equation (\ref{eq:E}) is the map that takes a network $X$ to the distribution $( \eccout_{p,X})_*\mu_X$ over $\R$. The metric between distributions will be taken to be $W_p$, i.e. the codomain of this invariant is $(\prob(\R), W_p)$. 
Next, the invariant associated to Equation (\ref{eq:JE}) will turn out to be the map that takes $X$ to the distribution over distributions of $\w_X(x,\cdot)$. Specifically, it will be the pushforward of $\mu_X$ under the map $x \mapsto  \w_X(x,\cdot)_*\mu_X $. The codomain of this invariant will be $(\prob(\prob(\R)), W_p)$, where the ground metric on $\prob(\R)$ is also taken to be $W_p$. 

As a related construction, we note that given any $(X,\w_X,\mu_X)$, taking a pushforward of $\mu_X^{\otimes 2}$ via $\w_X$ yields a distribution over $\R$. This produces yet another invariant whose codomain is $(\prob(\R), W_p)$.

\begin{remark}[Local and global invariants]
Let $(X,\w_X,\mu_X \in \Ngen$. 
Both the $\size_p$ and $(\w_X)_*(\mu_X \otimes \mu_X)$ invariants are examples of \emph{global} invariants, in the sense that they incorporate data from the network without any reference to particular nodes in the network. 
In contrast, $(\eccout_{p,X})_*\mu_X$ and $(x \mapsto  \w_X(x,\cdot)_*\mu_X)_*\mu_X$ incorporate information at the level of individual nodes within the network, and constitute examples of \emph{local} invariants.
\end{remark}

We now state the main theorem of this section, which provides a hierarchy of lower bounds for $\dnp$.

\begin{theorem}[Hierarchy of lower bounds]
\label{thm:stab-push}
Let $(X,\w_X,\mu_X), (Y,\w_Y,\mu_Y) \in \Ngen$, and let $p\in [1,\infty]$. Let $C:X\times Y \r \R$ denote a cost matrix with entries 
$C(x,y) := W_p \lp \w_X(x,\cdot)_*\mu_X,   \w_Y(y,\cdot)_*\mu_Y \rp$. 
Then we have the following statements about Lipschitz stability, for $p \in [1,\infty]$:
\begin{align}
2\dnp(X,Y) &= \inf_{\mu \in \coup(\mu_X,\mu_Y)}\dis_p(\mu) = \inf_{\mu \in \coup(\mu_X,\mu_Y)} \lnorm \w_X - \w_Y \rnorm_{L^p(\mu^{\otimes 2})} \nonumber\\
&\geq  \inf_{\mu,\nu \in \coup(\mu_X,\mu_Y)} \lnorm \w_X - \w_Y \rnorm_{L^p(\nu \otimes \mu)} \nonumber\\
& \hspace{1.5in} = \inf_{\mu \in \coup(\mu_X,\mu_Y)} \lnorm \eccout_{p,X,Y} \rnorm_{L^p(\mu)}
\label{inv:TLB} \tag{TLB}\\ 
& \hspace{1.5in}=  \inf_{\mu \in \coup(\mu_X,\mu_Y)} 
\lnorm  C  \rnorm_{L^p(\mu)}
\label{inv:RTLB} \tag{$\R$-TLB}\\
& \geq \inf_{\mu \in \coup(\mu_X,\mu_Y)}\lnorm \eccout_{p,X} - \eccout_{p,Y} \rnorm_{L^p(\mu)} 
\label{inv:FLB}
\tag{FLB}\\ 
&  \hspace{1.5in}  =W_p\lp (\eccout_{p,X})_*\mu_X, (\eccout_{p,Y})_*\mu_Y  \rp.
\label{inv:RFLB}
\tag{$\R$-FLB}
\\
&\geq \lbar \size_p(X) - \size_p(Y) \rbar  \label{inv:SzLB} \tag{SzLB}\\
2\dnp(X,Y)  &\geq 
\inf \left \{\lnorm \w_X - \w_Y \rnorm_{L^p(\mu)} : \mu \in \coup(\mu_X^{\otimes 2}, \mu_Y^{\otimes 2}) \right \}  
\tag{SLB}
\label{inv:SLB} \\
&=
W_p \lp (\w_X)_*\mu_X^{\otimes 2}, (\w_Y)_*\mu_Y^{\otimes 2} \rp. 
\tag{$\R$-SLB}
\label{inv:RSLB}
\end{align}
Moreover, analogous bounds hold for the $\eccin$ variants as well. 
\end{theorem}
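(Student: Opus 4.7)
The approach is to establish each step of the hierarchy by one of three types of arguments: (i) relaxation of the class of couplings over which the infimum is taken; (ii) Minkowski's inequality in $L^p$ (typically in reverse-triangle form); or (iii) a \emph{pushforward identity} converting an optimization over couplings on $X\times Y$ into a Wasserstein problem on $\R$. I would begin by proving this identity as a lemma: for Borel measurable $f:X\to\R$ and $g:Y\to\R$,
\[
\inf_{\mu\in\coup(\mu_X,\mu_Y)} \lnorm f-g\rnorm_{L^p(\mu)} \;=\; W_p(f_*\mu_X,g_*\mu_Y).
\]
The ``$\geq$'' direction is the change-of-variables formula applied to $(f,g)_*\mu$; the ``$\leq$'' direction lifts a coupling on $\R^2$ back to $X\times Y$ via disintegration along the fibers of $f$ and $g$.

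I would then traverse the chain in the order presented. The opening bound $2\dnp(X,Y)\geq \inf_{\mu,\nu}\lnorm\w_X-\w_Y\rnorm_{L^p(\nu\otimes\mu)}$ is a pure relaxation, since $\{\mu\otimes\mu\}$ embeds into $\{\nu\otimes\mu\}$. For the (\ref{inv:TLB}) identification, Fubini rewrites the integrand as $\int h_\mu(x,y)\,d\nu(x,y)$ with $h_\mu(x,y):=\int|\w_X(x,x')-\w_Y(y,y')|^p\,d\mu(x',y')$, and one then swaps an infimum with an integral to recover $\int \eccout_{p,X,Y}(x,y)^p\,d\nu(x,y)$. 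The (\ref{inv:RTLB}) equality is the pushforward identity applied pointwise in $(s,t)$ to $\w_X(s,\cdot)$ and $\w_Y(t,\cdot)$, identifying $\eccout_{p,X,Y}(s,t)$ with $C(s,t)$.

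The (\ref{inv:FLB}) follows from the reverse triangle inequality in $L^p$: $\eccout_{p,X,Y}(s,t) \geq |\eccout_{p,X}(s) - \eccout_{p,Y}(t)|$ pointwise, then integrate against any coupling. The (\ref{inv:RFLB}) is the pushforward identity applied once more, this time to the real-valued eccentricities $\eccout_{p,X}$ and $\eccout_{p,Y}$. The (\ref{inv:SzLB}) bound is a final reverse-triangle step collapsing the eccentricities to their $L^p$ norms, together with $\size_p(X)=\lnorm\eccout_{p,X}\rnorm_{L^p(\mu_X)}$. For (\ref{inv:SLB}) one relaxes from tensored couplings $\mu\otimes\mu$ to the wider class $\coup(\mu_X^{\otimes 2},\mu_Y^{\otimes 2})$, and (\ref{inv:RSLB}) is the pushforward identity applied to $\w_X,\w_Y$ viewed as real-valued functions on $X^2$ and $Y^2$. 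The $\eccin$ analogues follow by transposing source and target in $\w_X,\w_Y$ throughout, using Fubini to swap the two $(X\times Y)$ factors of $(\nu\otimes\mu)$.

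The main obstacle is the (\ref{inv:TLB}) identification, where one must exchange an infimum with an integral. The direction $\inf_\mu\int h_\mu\,d\nu \geq \int\inf_\mu h_\mu\,d\nu$ is immediate from the pointwise bound $h_\mu\geq\eccout_{p,X,Y}^p$. The reverse requires producing, for each $\e>0$, a single coupling $\mu$ whose associated $h_\mu$ simultaneously approximates $\eccout_{p,X,Y}^p$ on $\nu$-almost every $(x,y)$. The plan is to invoke a measurable-selection theorem (Kuratowski--Ryll-Nardzewski) on the compact Polish space $\coup(\mu_X,\mu_Y)$ (compact by Lemma \ref{lem:compactness-couplings}) to obtain a $\nu$-measurable family $(x,y)\mapsto\mu^{x,y}$ of near-optimal couplings, and then to aggregate via $\tilde\mu:=\int\mu^{x,y}\,d\nu(x,y)$; integration of marginals confirms $\tilde\mu\in\coup(\mu_X,\mu_Y)$. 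Verifying that this aggregation yields the desired $L^p(\nu)$ control, rather than merely a Jensen-type lower bound, is the delicate point and is likely where an appeal to a disintegration of $\mu$ with respect to $\w_X,\w_Y$ will be required.
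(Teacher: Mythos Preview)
Your approach matches the paper's: both hinge on a pushforward identity (Lemma~\ref{lem:distrib-invar} in the paper, proved there by lifting couplings via a result on analytic sets rather than by disintegration, but either route works) and then traverse the chain via relaxation, pointwise reverse-Minkowski, and repeated application of that identity. The steps for (\ref{inv:RTLB}), (\ref{inv:FLB}), (\ref{inv:RFLB}), (\ref{inv:SzLB}), (\ref{inv:SLB}), and (\ref{inv:RSLB}) are handled identically in both.

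Where you diverge is in worrying about the equality $\inf_{\mu,\nu}\lnorm\w_X-\w_Y\rnorm_{L^p(\nu\otimes\mu)} = \inf_\nu\lnorm\eccout_{p,X,Y}\rnorm_{L^p(\nu)}$. The paper's proof does not actually establish this as a two-sided equality; it only argues the $\geq$ direction (immediate from $h_\mu(x,y)\geq\inf_{\mu'}h_{\mu'}(x,y)=\eccout_{p,X,Y}(x,y)^p$ pointwise) and then continues the chain from the (\ref{inv:TLB}) expression directly. Your concern about the reverse direction is legitimate, and your proposed fix does not work, for precisely the reason you anticipate: once you aggregate $\tilde\mu=\int\mu^{x,y}\,d\nu(x,y)$, one gets $h_{\tilde\mu}(x,y)=\int h_{\mu^{s,t}}(x,y)\,d\nu(s,t)$, which mixes the couplings belonging to \emph{other} base points $(s,t)$ and has no reason to be close to $\eccout_{p,X,Y}(x,y)^p$. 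Since the inner optimizer genuinely depends on $(x,y)$, a single $\mu$ need not realize the pointwise infimum simultaneously, so the equality is suspect in general. Fortunately it is also unnecessary: every subsequent step, and every application in the paper, only uses (\ref{inv:TLB}) as a lower bound on $2\dnp$, for which the easy direction suffices. Simply present that step as an inequality and move on.
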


\begin{remark} The inequalities in Theorem \ref{thm:stab-push} appeared in the context of metric measure spaces as the First, Second, and Third Lower Bounds and their pushforwards in \cite{dgh-sm}. In the asymmetric context of the current paper, we obtain outgoing/incoming versions of the (\ref{inv:TLB}) and  (\ref{inv:FLB}) inequalities. The main development of the current paper is that we have equalities (\ref{inv:FLB})=(\ref{inv:RFLB}), (\ref{inv:SLB})=(\ref{inv:RSLB}), and (\ref{inv:TLB})=(\ref{inv:RTLB}). The equality (\ref{inv:TLB})=(\ref{inv:RTLB}) is especially important. A priori, each computation of $\eccout_{p,X,Y}(x,y)$ involves an OT problem that can be solved via linear programming methods. The equality (\ref{inv:TLB})=(\ref{inv:RTLB}) shows that this quantity is actually equal to the solution of an OT problem over the real line, which has a closed form solution. Finally we note that in the discrete case, all of the aforementioned equalities follow from \cite[Proposition 4.5]{schmitzer2013modelling}. The current theorem proves the equalities in the general setting. 

\end{remark}

\begin{remark}[Connecting lower bounds to network invariants]
\label{rem:conn-lb-invar}
The (\ref{inv:TLB}) lower bound arises by solving an OT problem with Equation (\ref{eq:JE}) as a cost matrix, and the (\ref{inv:FLB}) lower bound arises by solving an OT problem with a difference of terms described by Equation (\ref{eq:E}) as a cost matrix. By virtue of the equalities (\ref{inv:FLB})=(\ref{inv:RFLB}), (\ref{inv:SLB})=(\ref{inv:RSLB}), and (\ref{inv:TLB})=(\ref{inv:RTLB}), these lower bounds arise precisely as Lipschitz stability conditions on the network invariants described prior to the statement of Theorem \ref{thm:stab-push}.
\end{remark}

Before proving Theorem \ref{thm:stab-push}, we introduce some terminology from \cite[\S14A]{kechris1995classical}. A subset $A$ of a Polish space $X$ is \emph{analytic} if it is the continuous image of a Polish space $Y$. Equivalently, $A$ is analytic if there exists a Polish space $Y$ and a Borel subset $B\subseteq X\times Y$ such that $A = \pi_X(B)$, where $\pi_X$ is the canonical projection. Any Borel measurable map $f:X \r Y$, where $Y$ is Polish, maps analytic sets to analytic sets \cite[Proposition 14.4]{kechris1995classical}.

\begin{lemma}[Lemma 2.2, \cite{varadarajan1963groups}]
\label{lem:pushforward-analytic}
Let $X,Y$ be analytic subsets of Polish spaces equipped with the relative Borel $\s$-fields. Let $f:X \r Y$ be a surjective, Borel-measurable map. Then for any $\nu \in \prob(Y)$, there exists $\mu \in \prob(X)$ such that $\nu = f_*\mu$. 
\end{lemma}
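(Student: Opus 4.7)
The plan is to construct a (universally) measurable section $s: Y \r X$ of $f$ and then define $\mu$ on $\borel(X)$ by $\mu(A) := \bar\nu(s\inv(A))$, where $\bar\nu$ denotes the completion of $\nu$, so that $f_*\mu$ recovers $\nu$. This reduces the existence problem for $\mu$ to the existence problem for a ``right inverse'' of $f$ that is at least measurable enough to push $\nu$ through.

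First I would view the graph $\Gamma := \{(y,x) \in Y \times X : f(x) = y\}$ inside the ambient Polish product containing $Y \times X$. Since $Y \times X$ is analytic and $\Gamma$ arises as the intersection of $Y \times X$ with the Borel preimage of the diagonal of $Y$ under the map $(y,x) \mapsto (y,f(x))$ (Borel by hypothesis on $f$), $\Gamma$ is itself analytic. Surjectivity of $f$ gives $\pi_Y(\Gamma) = Y$.

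Second, I would invoke the Jankov--von Neumann uniformization theorem (e.g. \cite[Theorem 18.1]{kechris1995classical}) applied to $\Gamma$: there exists a map $s: Y \r X$ with $f \circ s = \id_Y$ that is measurable with respect to the $\s$-algebra generated by the analytic subsets of $Y$. In particular $s$ is universally measurable, so for every Borel probability measure $\nu$ on $Y$ and every $A \in \borel(X)$, the preimage $s\inv(A)$ lies in the $\nu$-completion $\overline{\borel(Y)}^{\nu}$.

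Third, I would define $\mu: \borel(X) \r [0,1]$ by $\mu(A) := \bar\nu(s\inv(A))$. Countable additivity of $\mu$ follows from that of $\bar\nu$ together with the fact that $s\inv$ preserves disjoint unions, and $\mu(X) = \bar\nu(Y) = 1$, so $\mu \in \prob(X)$. The pushforward identity is then a direct computation: for any $B \in \borel(Y)$,
\[
f_*\mu(B) = \mu(f\inv(B)) = \bar\nu(s\inv(f\inv(B))) = \bar\nu((f\circ s)\inv(B)) = \bar\nu(B) = \nu(B),
\]
which is exactly what we want.

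The main obstacle is the measurable-selection step: the existence of the section $s$ is a nontrivial descriptive-set-theoretic fact requiring Jankov--von Neumann (or an equivalent uniformization theorem for analytic sets), which I would cite rather than reprove. A minor bookkeeping subtlety is that $s$ is only universally measurable and not necessarily Borel, so $\mu$ must be defined through the completion $\bar\nu$; this yields a well-defined Borel probability measure on $X$ without further difficulty, since the value $\bar\nu(s\inv(A))$ depends only on $A \in \borel(X)$.
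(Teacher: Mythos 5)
The paper contains no proof of this lemma---it is quoted directly from Varadarajan \cite{varadarajan1963groups}---so there is nothing internal to compare against; your argument is the standard measurable-selection proof of the cited fact, and it is correct. In particular, you handle the two genuine subtleties properly: the graph $\Gamma$ is analytic in the ambient Polish product because it is a relatively Borel subset of the analytic set $Y \times X$, and since the Jankov--von Neumann section $s$ from \cite[Theorem 18.1]{kechris1995classical} is only $\sigma(\Sigma^1_1)$-measurable (hence universally, but not necessarily Borel, measurable), the measure $\mu$ must indeed be defined through the completion $\bar\nu$, exactly as you do.
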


The next lemma states that pushforwards of couplings are exactly the couplings between the pushforwards. This was shown in the special case of discrete spaces in \cite[Proposition 4.5]{schmitzer2013modelling}.

\begin{lemma}
\label{lem:distrib-invar}
Let $X,Y$ be Polish, and let $f:X \r \R$ and $g: Y \r \R$ be measurable. Let $T:X\times Y \r \R^2$ be the map $(x,y) \mapsto (f(x),g(y))$. 
Then we have:
\begin{align}
T_*\coup(\mu_X,\mu_Y) &= \coup(f_*\mu_X,g_*\mu_Y).
\label{eq:push-coupling}
\intertext{
Consequently, we have:}
W_p( f_*\mu_X, g_*\mu_Y) 
&= \inf_{\mu \in \coup(\mu_X,\mu_Y)} \lnorm f - g \rnorm_{L^p(\mu)}.
\label{eq:push-coupling-eq}
\end{align}
\end{lemma}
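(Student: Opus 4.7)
The plan is to first establish the set equality \eqref{eq:push-coupling}, then derive \eqref{eq:push-coupling-eq} by a change of variables. The forward inclusion $T_*\coup(\mu_X,\mu_Y) \subseteq \coup(f_*\mu_X,g_*\mu_Y)$ is a direct marginal computation: for $\mu\in \coup(\mu_X,\mu_Y)$ and a Borel $A\subseteq \R$,
\[T_*\mu(A\times \R) = \mu(f\inv(A)\times Y) = \mu_X(f\inv(A)) = f_*\mu_X(A),\]
and symmetrically for the second marginal. So every $T_*\mu$ is a coupling of the pushforwards.

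For the reverse inclusion, I would fix $\nu \in \coup(f_*\mu_X, g_*\mu_Y)$ and build an explicit preimage via disintegration. Because $X$ is Polish and $f:X\r \R$ is Borel, the disintegration theorem yields a Borel family $\{\mu_X^r\}_{r\in\R}$ with $\mu_X^r$ concentrated on $f\inv(r)$ for $f_*\mu_X$-a.e.\ $r$ and $\mu_X = \int_\R \mu_X^r\, d(f_*\mu_X)(r)$. Define $\{\mu_Y^s\}_{s\in\R}$ analogously via $g$, and set
\[\mu(E) := \int_{\R^2} (\mu_X^r\otimes \mu_Y^s)(E)\, d\nu(r,s) \qquad \text{for Borel } E\subseteq X\times Y.\]
Testing against rectangles, $\mu(A\times Y) = \int \mu_X^r(A)\, d(f_*\mu_X)(r) = \mu_X(A)$ since $\nu$ has $f_*\mu_X$ as its first marginal; the $Y$ marginal is analogous, so $\mu \in \coup(\mu_X,\mu_Y)$. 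To verify $T_*\mu = \nu$, I would compute on a rectangle $A'\times B'\subseteq \R^2$: using $\mu_X^r(f\inv(A')) = \mathbf{1}_{A'}(r)$ for a.e.\ $r$, and analogously for $\mu_Y^s$,
\[T_*\mu(A'\times B') = \mu(f\inv(A')\times g\inv(B')) = \int_{\R^2} \mathbf{1}_{A'}(r)\mathbf{1}_{B'}(s)\, d\nu(r,s) = \nu(A'\times B').\]
Uniqueness of measures on $\R^2$ gives $T_*\mu = \nu$, proving the reverse inclusion.

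For the consequence \eqref{eq:push-coupling-eq}, observe that for every $\mu \in \coup(\mu_X,\mu_Y)$ the change of variables formula gives, for $p\in [1,\infty)$,
\[\int_{\R^2} |r-s|^p\, d(T_*\mu)(r,s) = \int_{X\times Y} |f(x) - g(y)|^p\, d\mu(x,y) = \lnorm f - g\rnorm_{L^p(\mu)}^p,\]
and analogously $\lnorm \,\cdot\,\rnorm_{L^\infty(T_*\mu)} = \lnorm f-g\rnorm_{L^\infty(\mu)}$. Combined with \eqref{eq:push-coupling}, which says the infimum over $\coup(f_*\mu_X, g_*\mu_Y)$ equals the infimum over $T_*\coup(\mu_X,\mu_Y)$, the defining formula $W_p(f_*\mu_X,g_*\mu_Y) = \inf_{\nu\in\coup(f_*\mu_X,g_*\mu_Y)} \lnorm d_\R\rnorm_{L^p(\nu)}$ translates into $\inf_{\mu \in \coup(\mu_X,\mu_Y)} \lnorm f - g\rnorm_{L^p(\mu)}$, which is exactly \eqref{eq:push-coupling-eq}.

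The main obstacle is the reverse inclusion: one needs the existence of disintegrations and must verify that the glued measure $\mu$ is genuinely a coupling of $\mu_X,\mu_Y$ (rather than just some marginal-preserving construction on the auxiliary $\R^2$). Polishness of $X,Y$ and Borel measurability of $f,g$ into the standard Borel space $\R$ are precisely the hypotheses that make disintegration available; alternatively, one could invoke Lemma~\ref{lem:pushforward-analytic} applied to the Borel surjection from $\coup(\mu_X,\mu_Y)$ onto its image under $T_*$, but the disintegration route has the advantage of producing $\mu$ explicitly, matching the discrete construction of \cite[Proposition 4.5]{schmitzer2013modelling}.
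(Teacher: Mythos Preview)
Your argument is correct and takes a different route from the paper's for the reverse inclusion. The paper invokes Lemma~\ref{lem:pushforward-analytic}: since $T(X\times Y)$ is analytic and $T$ is Borel and surjective onto its image, every $\nu\in\coup(f_*\mu_X,g_*\mu_Y)$ (restricted to that image) arises as $T_*\sigma$ for some $\sigma\in\prob(X\times Y)$. You instead disintegrate $\mu_X$ and $\mu_Y$ along $f$ and $g$ and glue the resulting fibre measures against $\nu$. Your construction is explicit and makes the marginal conditions $\mu\in\coup(\mu_X,\mu_Y)$ transparent by design; by contrast, Lemma~\ref{lem:pushforward-analytic} as stated only furnishes $\sigma\in\prob(X\times Y)$, so the paper's assertion that $\sigma\in\coup(\mu_X,\mu_Y)$ requires an additional step---and one natural way to supply it is precisely your disintegrate-and-reglue manoeuvre. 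The paper's route has the virtue of staying within the analytic-set toolkit already introduced; yours is self-contained, needs only the standard disintegration theorem on Polish spaces, and directly generalises the finite construction of \cite[Proposition~4.5]{schmitzer2013modelling}. Your derivation of \eqref{eq:push-coupling-eq} from \eqref{eq:push-coupling} via change of variables matches the paper's.
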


\begin{proof}[Proof of Lemma \ref{lem:distrib-invar}]
Let $\mu \in \coup(\mu_X,\mu_Y)$. 
It is standard \cite[7.1.6]{ambrosio2008gradient} that $T_*\mu \in \coup(f_*\mu_X,g_*\mu_Y)$, and hence $W_p( f_*\mu_X, g_*\mu_Y) \leq \lnorm f - g \rnorm_{L^p(\mu)}.$

For the ``$\supseteq$" containment of Equation (\ref{eq:push-coupling}), let $\nu \in \coup(f_*\mu_X, g_*\mu_Y)$. 
The map $T = (f,g)$ is measurable because $f$, $g$ are measurable. 
Next note that $X\times Y$ is Polish and hence analytic. 
Because $X\times Y$ is analytic and $T:X\times Y \r \R^2$ is a measurable map between Polish spaces, the image $T(X\times Y)$ is analytic \cite[Proposition 14.4]{kechris1995classical}.  
The map $T:X\times Y \r T(X\times Y)$ is surjective by construction. 
Then Lemma \ref{lem:pushforward-analytic} applies to the map $T:X\times Y \r T(X\times Y)$ and the restriction $\nu |_{T(X\times Y)} \in \prob(T(X\times Y))$.
Thus we obtain $\s \in \coup(\mu_X,\mu_Y)$ such that $T_*\s = \nu  |_{T(X\times Y)}$.
Finally note that $\nu$ is completely determined by its restriction to $T(X\times Y)$: for any $Z \in \borel(\R^2)$, we have $\nu(Z) = \nu(Z\cap T(X\times Y))$. 
Since $\nu  |_{T(X\times Y)}$ determines $\nu$, the existence of $\s$ such that $T_*\s = \nu  |_{T(X\times Y)}$ suffices to show the $\supseteq$ containment. The equality $W_p( f_*\mu_X, g_*\mu_Y) = \lnorm f - g \rnorm_{L^p(\mu)}$ follows immediately.

For Equation (\ref{eq:push-coupling-eq}), note that by a change of variables we have ($d_\R$ is just the standard distance on $\R$):
\[ \lnorm d_\R \rnorm_{L^p(T_*\mu)} = \lnorm f - g \rnorm_{L^p(\mu)}\]
Let $\nu \in \coup(f_*\mu_X, g_*\mu_Y)$. By the preceding work, $\nu = T_*\s$ for some $\s \in \coup(\mu_X,\mu_Y)$. Hence we have:
\[W_p( f_*\mu_X, g_*\mu_Y) = \inf_{\nu \in \coup(f_*\mu_X, g_*\mu_Y)} \lnorm d_\R \rnorm_{L^p(\nu)} = \inf_{\mu \in \coup(\mu_X,\mu_Y)} \lnorm f - g \rnorm_{L^p(\mu)}. \qedhere\]
\end{proof}

\begin{proof}[Proof of Theorem \ref{thm:stab-push}]
Inequality (\ref{inv:TLB}) holds because $\nu$ is allowed to vary and thus we infimize over a larger set. Next fix $(x,y) \in X\times Y$. Applying Lemma \ref{lem:distrib-invar} Equation (\ref{eq:push-coupling-eq}), we have 
\[ C(x,y) = W_p\lp \w_X(x,\cdot)_*\mu_X, \w_Y(y,\cdot)_*\mu_Y \rp = \inf_{\nu \in \coup(\mu_X,\mu_Y)} \lnorm \w_X(x,\cdot) - \w_Y(y,\cdot) \rnorm_{L^p(\nu)} = \eccout_{p,X,Y}(x,y).\]
This proves (\ref{inv:TLB})=(\ref{inv:RTLB}). Next, for any $\nu \in \coup(\mu_X,\mu_Y)$, we have by Minkowski's inequality: 
\[\lnorm \w_X(x,\cdot) - \w_Y(y,\cdot) \rnorm_{L^p(\nu)} \geq \lbar \lnorm \w_X(x,\cdot) \rnorm_{L^p(\nu)} - \lnorm \w_Y(y,\cdot) \rnorm_{L^p(\nu)} \rbar = \lbar \eccout_{p,X}(x) - \eccout_{p,Y}(y) \rbar\]
This shows (\ref{inv:TLB})$\geq$(\ref{inv:FLB}). The equality (\ref{inv:FLB})=(\ref{inv:RFLB}) follows by another application of Lemma \ref{lem:distrib-invar}. Next, for any $\mu \in \coup(\mu_X,\mu_Y)$, another application of Minkowski's inequality yields:
\[ \lnorm \eccout_{p,X} - \eccout_{p,Y} \rnorm_{L^p(\mu)} \geq 
\lbar \lnorm \eccout_{p,X} \rnorm_{L^p(\mu)} - \lnorm \eccout_{p,Y} \rnorm_{L^p(\mu)} \rbar
 = \lbar \size_p(X) - \size_p(Y) \rbar.\]
This shows (\ref{inv:FLB})$\geq$(\ref{inv:SzLB}).

Finally for (\ref{inv:SLB}), let $\mu$ denote the minimizer of $\dis_p$ (invoking Theorem \ref{thm:opt-coup}) and define $\s:= \mu \otimes \mu$. 
Then $\s \in \coup(\mu_X^{\otimes 2}, \mu_Y^{\otimes 2})$. Hence
\[ 2\dnp(X,Y) = \lnorm \w_X - \w_Y \rnorm_{L^p(\mu\otimes \mu)} 
=  \lnorm \w_X - \w_Y \rnorm_{L^p(\s)} \geq \inf_{\nu \in \coup(\mu_X^{\otimes 2}, \mu_Y^{\otimes 2})} \lnorm \w_X - \w_Y \rnorm_{L^p(\nu)} .\]
This shows (\ref{inv:SLB}). The equality (\ref{inv:SLB})=(\ref{inv:RSLB}) follows by applying Lemma \ref{lem:distrib-invar} Equation (\ref{eq:push-coupling-eq}) with $f = \w_X$ and $g = \w_Y$. \qedhere
     
\end{proof}

\subsection{Interleaving stable invariants}
\label{sec:int-stable}
We now present a novel family of invariants that satisfies a different type of stability. Let $(X,\w_X,\mu_X) \in \Nm$, and let $p \in [1,\infty]$. For each $t \in \R$ and $x \in X$, define the quantity 
\[ \eccout_{p,X}(x,t) := \lnorm \w_X(x,\cdot) \mathbf{1}_{ \{\w_X(x,\cdot) \leq t\}} \rnorm_{L^p(\mu_X)}.\]
This is an overload of notation, but the meaning should be clear from the presence of the second parameter. Note that $\{\w_X(x,\cdot) \leq t \}$ is measurable, and so $\mathbf{1}_{ \{\w_X(x,\cdot) \leq t\}}$ is measurable. Hence the integral is well-defined.

\begin{remark} For a metric space $(X,d_X,\mu_X)$, the quantity $\lnorm \mathbf{1}_{ \{d_X(x,\cdot) \leq t\}} \rnorm^p_{L^p(\mu_X)}$ is just the measure of the ball of radius $t$ centered at $x$. 

\end{remark}

Next, the $p$th \emph{sublevel} size function is defined for each $(X,\w_X,\mu_X) \in \Nm$ and $t \in \R$ by writing
\begin{align}
\subw_{p,t}(X) = \lnorm \eccout_{p,X}(\cdot,t) \rnorm_{L^p(\mu_X)}. 
\tag{subSz}
\label{eq:subSz}
\end{align}
This function is a network invariant. Note that by the Fubini-Tonelli theorem, we can also write $\subw_{p,t}(X) = \lnorm \w_X \mathbf{1}_{ \{\w_X \leq t \} } \rnorm_{L^p(\mu_X \otimes \mu_X)}$. Both formulations are used below.
\begin{example}
In \cite[Example 5.7]{dghlp}, it was shown that the \emph{$1$-diameter} invariant (referred to as $\size_1$ in this paper) does not discriminate between spheres of different dimensions. Specifically, it was shown that
\[ \size_1(\sph^n) =  \frac{\pi}{2} \tag*{for any $n \in \N.$}\]
 We now show via explicit computations that the map $t \mapsto \subw_{1,t}$ does distinguish between spheres.
For each $n \in \N$, let $\sph^n$ denote the $n$-sphere with the geodesic metric and normalized volume measure. 
For each $n \in \N$, let $S_n$ denote the surface area of $\sph^n$. We have:
\[S_1 = 2\pi, \, S_2 = 4\pi,\, S_3 = 2\pi^2, \, S_4 = \frac{8}{3}\pi^2.\]
The following formula gives $\subw$ for $\sph^n$, $n \in \N$.

\begin{proposition} 
\label{prop:p-subdiam}
Fix $p \in [1,\infty)$. Let $n \in \N$, $n \geq 2$, and $0 \leq t \leq \pi$. Then,

\[\subw_{p,t}(\sph^n)^p = \frac{S_{n-1}}{S_n}\int_0^t \ph^p \sin^{n-1}(\ph)\, d\ph.  \]

For $n = 1$, we have:

\[\subw_{p,t}(\sph^1)^p = \frac{t^{p+1}}{(p+1)\pi}. \]

\end{proposition}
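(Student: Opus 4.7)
The plan is to reduce both cases to a one-dimensional integral via the rotational symmetry of the sphere, and then read off the resulting densities.

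First I would rewrite the definition using the Fubini-Tonelli formulation noted just before the proposition:
\[
\subw_{p,t}(\sph^n)^p = \int_{\sph^n} \int_{\sph^n} d(x,y)^p \, \mathbf{1}_{\{d(x,y) \leq t\}} \, d\mu(y)\, d\mu(x),
\]
where $d$ is the geodesic distance and $\mu$ is the normalized volume measure. Since $\mathrm{SO}(n+1)$ acts transitively on $\sph^n$ by isometries that preserve $\mu$, the inner integral is independent of $x$. I would fix $x$ to be the north pole and write $d(x,y) = \ph(y)$, where $\ph(y) \in [0,\pi]$ is the polar angle.

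The key step is computing the pushforward $\ph_*\mu$ on $[0,\pi]$. For $n \geq 2$, this is a standard calculation in spherical coordinates: decomposing $\sph^n$ as a union of parallels (each a copy of $\sph^{n-1}$ of radius $\sin\ph$), the surface area element becomes $S_{n-1}\sin^{n-1}(\ph)\, d\ph$, so normalizing gives the density $\frac{S_{n-1}}{S_n}\sin^{n-1}(\ph)\, d\ph$. Substituting then yields
\[
\subw_{p,t}(\sph^n)^p = \int_0^t \ph^p \cdot \frac{S_{n-1}}{S_n}\sin^{n-1}(\ph)\, d\ph,
\]
which is the claimed formula. For $n=1$, the analogous computation is even simpler: identifying $\sph^1$ with $[0,2\pi)$ and sending $\theta \mapsto \min(\theta, 2\pi-\theta)$, the pushforward of the uniform measure is $\frac{1}{\pi}\,d\ph$ on $[0,\pi]$, which gives
\[
\subw_{p,t}(\sph^1)^p = \int_0^t \frac{\ph^p}{\pi}\, d\ph = \frac{t^{p+1}}{(p+1)\pi}.
\]

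There is no real obstacle here; the proof is a straightforward application of the transitivity of the isometry group together with the standard spherical volume element. The only small bookkeeping point is justifying that the inner integral is independent of $x$, which follows because for any $x,x' \in \sph^n$ there is $R \in \mathrm{SO}(n+1)$ with $Rx = x'$, and $R$ preserves both $d$ and $\mu$, so changing variables $y \mapsto Ry$ converts the integral at $x$ into the integral at $x'$.
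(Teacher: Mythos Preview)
Your proof is correct and follows essentially the same approach as the paper: both reduce the double integral to a one-dimensional integral in the polar angle via the rotational symmetry of $\sph^n$, then read off the density $\frac{S_{n-1}}{S_n}\sin^{n-1}(\ph)$ (resp.\ $\frac{1}{\pi}$ for $n=1$). The only cosmetic difference is that the paper writes out the full hyperspherical area element and integrates out the remaining angles explicitly, whereas you invoke the transitivity of $\mathrm{SO}(n+1)$ and phrase the reduction in terms of the pushforward $\ph_*\mu$; the computational content is identical.
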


By applying this result, we obtain $\subw_{1,t}(\sph^1) = \frac{t^2}{2\pi}$ and $\subw_{1,t}(\sph^2) = \frac{\sin(t) - t\cos(t)}{2}$, where $0 \leq t \leq \pi$. 
Plots of these functions are provided in Figure \ref{fig:s1-s2}. Note that by having access to the functions, instead of just the function values at $t = \pi$ (which corresponds to the prior $\size_1$ result of \cite[Example 5.7]{dghlp}), we are able to distinguish between spheres of different dimensions. 

An interesting consequence of the preceding result, along with the result that $\size_1(\sph^n) =  \frac{\pi}{2}$ for all $n \in \N$, is the following identity for $n \geq 2$:
\begin{align}
\frac{S_{n-1}}{S_n} \int_{0}^\pi \ph \sin^{n-1}(\ph) \,d\ph = \frac{\pi}{2}.
\end{align}
In particular, this identity and the formula in Proposition \ref{prop:p-subdiam} explain why the 1-diameter (i.e. $\size_1$) cannot distinguish between spheres, and why $\subw_{1,t}$ is able to do so.

\begin{proof}[Proof of Proposition \ref{prop:p-subdiam}]

Let $n = 1$. We obtain the formula as a line integral. Let $r(\theta) = (\cos \theta, \sin \theta)$ be a parametrization of the circle, where $\theta \in [0,2\pi)$. Using symmetry, we have the following for $0 \leq t \leq \pi$: 

\begin{align*}
\subw_{p,t}(\sph^1)^p = \frac{2}{S_1} \int_0^t \theta^p \| r'(\theta)\| \, d\theta =  \frac{2}{S_1} \left. \frac{\theta^{p+1}}{p+1}\right\rvert_0^t = \frac{t^{p+1}}{(p+1)\pi}.
\end{align*}

Next let $n \geq 2$. In hyperspherical coordinates, the area element of $\sph^n$ is given by 
\[ \sin^{n-1}(\ph_1) \sin^{n-2}(\ph_2)\cdots \sin(\ph_{n-1}) \, d\ph_1 d\ph_2 \cdots d\ph_{n},\]
where the limits of integration are $[0,\pi]$ for $\ph_1,\ldots, \ph_{n-1}$, and $[0,2\pi]$ for $\ph_n$.
As an example, we have:
\[ \subw_{p,t}(\sph^2)^p = \frac{1}{S_2} \int_0^{2\pi} \int_0^t \ph_1^p\sin \ph_1 \, d\ph_1 d\ph_2.\]
Generalizing to larger values of $n$, we have:
\begin{align*}
 \subw_{p,t}(\sph^n)^p &= \frac{1}{S_n} \int_{\ph_n = 0}^{2\pi} \int_{\ph_{n-1} = 0}^\pi \cdots \int_{\ph_1 = 0}^t \ph_1^p \, \sin^{n-1}(\ph_1) \sin^{n-2}(\ph_2) \cdots \sin (\ph_{n-1}) \, d\ph_1 d\ph_2 \cdots d\ph_n\\
&= \frac{S_{n-1}}{S_n} \int_{\ph_1 = 0}^t \ph_1^p \, \sin^{n-1}(\ph_1) \, d\ph_1. \qedhere
\end{align*}
\end{proof}

\end{example}

Having motivated $\subw$ by at least a theoretical application, we now proceed to its stability.

\begin{theorem}[Interleaving stability of $\subw$]
\label{thm:subsz-stab}
Let $p\in [1,\infty]$, $t\in \R$ and let $(X,\w_X,\mu_X)$, $(Y,\w_Y,\mu_Y) \in \Nm$. Define $\e:= \dgpzero(X,Y)$. Then we have the following interleaving stability:
\begin{align*}
\subw_{p,t}(X) & \leq \e + \subw_{p,t+\e}(Y),\\
\subw_{p,t}(Y) & \leq \e + \subw_{p,t+\e}(X).
\end{align*}
\end{theorem}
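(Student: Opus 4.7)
The plan is to prove the first inequality; the second follows by symmetrically exchanging the roles of $X$ and $Y$. The core idea is to realize the $\dgpzero$-distance by a near-optimal coupling $\mu \in \coup(\mu_X,\mu_Y)$, apply Minkowski's inequality in $L^p(\mu\otimes\mu)$, and then use an almost-sure set inclusion to pass from a $\w_X$-sublevel indicator to a $\w_Y$-sublevel indicator.

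First I would invoke Lemma \ref{lem:gp-gw} to write $\e = \dgpzero(X,Y) = \dninf(X,Y)$. By Theorem \ref{thm:opt-coup} an optimal coupling $\mu$ exists, and unpacking the definition of $\dgpzero$ we obtain that $\lbar \w_X(x,x') - \w_Y(y,y')\rbar \leq \e$ (up to the constant built into the $\tfrac{1}{2}$-normalization of $\dgpzero$) on a $\mu\otimes\mu$-full-measure subset of $(X\times Y)^2$. Using the marginal property of $\mu$ to lift the $L^p(\mu_X \otimes \mu_X)$-norm to an $L^p(\mu \otimes \mu)$-norm, and adding and subtracting $\w_Y \mathbf{1}_{\{\w_X \leq t\}}$, Minkowski's inequality yields
\[ \subw_{p,t}(X) \leq \lnorm (\w_X - \w_Y) \mathbf{1}_{\{\w_X \leq t\}}\rnorm_{L^p(\mu\otimes\mu)} + \lnorm \w_Y \mathbf{1}_{\{\w_X \leq t\}}\rnorm_{L^p(\mu\otimes\mu)}. \]
The first summand is bounded by the essential supremum of $\lbar \w_X - \w_Y\rbar$ (since we work over a probability space), which contributes the additive $\e$ error. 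For the second summand, on the good a.s.\ set the condition $\w_X(x,x') \leq t$ forces $\w_Y(y,y') \leq t + \e$, so pointwise almost surely $\lbar \w_Y\rbar^p \mathbf{1}_{\{\w_X \leq t\}} \leq \lbar \w_Y\rbar^p \mathbf{1}_{\{\w_Y \leq t+\e\}}$; integrating over $\mu\otimes\mu$ and collapsing to the $Y^2$-marginal produces exactly $\subw_{p,t+\e}(Y)$.

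The main obstacle I anticipate is the $p=\infty$ case, which requires the $L^\infty$ form of Minkowski's inequality and a simultaneous handling of the two almost-sure statements (the pointwise distortion bound and the ensuing sublevel inclusion) on one joint null set; this is technically routine but worth spelling out. A secondary issue is cleanly tracking the $\tfrac{1}{2}$-normalization constants that propagate from the definitions of $\dgp$ and $\dninf$, so that the final bound matches the stated form $\e + \subw_{p,t+\e}(Y)$. Finally, note that both sides of the eventual inequality vary monotonically with $t$, so any $\delta$-perturbation incurred by passing to near-optimal couplings can be absorbed by taking $\delta \to 0^+$ at the end.
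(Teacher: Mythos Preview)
Your proposal is correct and follows essentially the same route as the paper: invoke Lemma~\ref{lem:gp-gw} to identify $\e=\dgpzero(X,Y)$ with $\dninf(X,Y)$, take an optimal coupling $\mu$ (Theorem~\ref{thm:opt-coup}), lift the $L^p(\mu_X^{\otimes 2})$-norm to $L^p(\mu^{\otimes 2})$, add and subtract $\w_Y$ and apply Minkowski, bound the difference term by the essential supremum, and use the a.s.\ implication $\w_X\leq t \Rightarrow \w_Y\leq t+\e$ for the remaining term. Your $\delta$-perturbation remark is unnecessary since optimal couplings exist, and the $p=\infty$ case needs no separate treatment (the paper does not single it out either).
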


\begin{proof} 
We show the first statement.
Invoking Lemma \ref{lem:gp-gw}, we write $\e=\dninf(X,Y)$. Using Theorem \ref{thm:opt-coup}, let $\mu\in \coup(\mu_X,\mu_Y)$ be an optimal coupling for which $\dninf(X,Y)=\e$ is achieved.
Let $B:= \{(x,y,x',y') \in (X\times Y)^2 : \lbar \w_X(x,x') - \w_Y(y,y') \rbar \geq \e\}$. 
Let $G$ denote the complement of $B$, i.e. $G :=  \{(x,y,x',y') \in (X\times Y)^2 : \lbar \w_X(x,x') - \w_Y(y,y') \rbar < \e\}$. 
By the definition of $\e$, we have $\mu^{\otimes 2}(B) = 0$, and hence $\mu^{\otimes 2}(G) = 1$. 
Also define $H:= G \cap \lp \{\w_X \leq t\} \times Y^2 \rp$. Then we have:
\begin{align}
\subw_{p,t}(X) &= \lnorm \eccout_{p,X}(\cdot,t) \rnorm_{L^p(\mu_X)}
= \lnorm  \w_X \mathbf{1}_{ \{\w_X \leq t \} } \rnorm_{L^p(\mu_X^{\otimes 2})}
= \lnorm  \w_X \mathbf{1}_{ \{\w_X \leq t \} \times Y^2 } \rnorm_{L^p(\mu^{\otimes 2})}
\nonumber\\
&= \lnorm  \w_X \mathbf{1}_{H} \rnorm_{L^p(\mu^{\otimes 2})}
= \lnorm  \lp \w_X  - \w_Y + \w_Y \rp \mathbf{1}_{H} \rnorm_{L^p(\mu^{\otimes 2})}
\nonumber \\
&\leq \lnorm  \lp \w_X  - \w_Y \rp \mathbf{1}_H \rnorm_{L^p(\mu^{\otimes 2})} + \lnorm \w_Y \mathbf{1}_{H} \rnorm_{L^p(\mu^{\otimes 2})}
\nonumber \\
& < \e + \lnorm \w_Y \mathbf{1}_{ \{ \w_Y \leq t+\e \}} \rnorm_{L^p(\mu_Y^{\otimes 2})}
= \e + \subw_{p,t+\e}(Y).
\label{eq:subsz-1}
\end{align}
Here the third equality holds because $\mu$ is a coupling measure, and the fourth equality holds because $\mu^{\otimes 2}(G) = 1$. The first inequality holds by Minkowski's inequality. The first part of the second inequality holds because $|\w_X(x,x') - \w_Y(y,y)| < \e$ on $H$, and the second part holds because $\w_Y(y,y') \leq \w_X(x,x') + \e \leq t+\e$ on $H$. Finally note that repeating the argument with the roles of $X$ and $Y$ switched completes the proof.
\end{proof}

\begin{remark} While not applied in the current paper, we may also consider a \emph{superlevel} size function $\supw_{p,t}(X):= \lnorm \w_X \mathbf{1}_{\{\w_X \geq t \}} \rnorm_{L^p(\mu_X\otimes \mu_X)}$. In the setup of Theorem \ref{thm:subsz-stab}, this invariant satisfies the following interleaving stability:
\begin{align*}
\supw_{p,t}(X) &\leq \e + \supw_{p,t-\e}(Y) \\
\supw_{p,t}(Y) &\leq \e + \supw_{p,t-\e}(X). 
\end{align*}
To see this, note that the proof of Theorem \ref{thm:subsz-stab} carries through until the  step in Inequality (\ref{eq:subsz-1}). In this case, for any $(x,y,x',y') \in H$ we have $\w_Y(y,y') > \w_X(x,x') - \e \geq t - \e$, thus $\mathbf{1}_H$ reduces to $\mathbf{1}_{ \{ \w_Y \geq t- \e\}}$.
\end{remark}

\subsubsection{Lower bounds for spheres}
\label{sec:sphere-lbounds}

Fix $n, m \in \N$. We now invoke Theorem \ref{thm:subsz-stab} to obtain lower bounds on $\dgpzero(\sph^n,\sph^m)$. The explicit value of $\dgpzero(\sph^n,\sph^m)$ is unknown in the existing literature, even for $n=1,\, m=2$. 

Consider the family $\mf{F}:= \{ f:[0,\pi] \r \R_+ : f \text { increasing} \}.$ For each $f \in \mf{F}$ and $\e \in [0,\pi]$, define $f^\e$ by writing, for each $t \in [0,\pi]$,
\[f^\e(t) := 
\begin{cases}
f(t+\e) + \e &: t+\e \in [0,\pi]\\
f(\pi) + \e &: \text{ otherwise.}
\end{cases}
\]
Next define the \emph{interleaving distance} $\di$ on $\mf{F}$ by writing, for each $f,g \in \mf{F}$, 
\[ \di(f,g) := \inf\{ \e \geq 0 : f \leq g^\e \text{ and } g \leq f^\e\}.\]
This $\di$ is a pseudometric on $\mf{F}$. 
Next, for $p \in [1,\infty)$, Define $f_p,g_p:[0,\pi] \r \R$ by writing:
\[ f_p(t):= \subw_{p,t}(\sph^n), \qquad g_p(t):= \subw_{p,t}(\sph^m) \tag*{for all $t \in [0,\pi].$}\]
Define $\eta:= \dgpzero(\sph^n, \sph^m)$. Applying Theorem \ref{thm:subsz-stab}, we have $f_p \leq g_p^\eta$ and $g_p \leq f_p^\eta$. Thus $\dgpzero(\sph^n,\sph^m) \geq \di(f_p,g_p)$. Moreover, by the triangle inequality of $\di$, we have
\[\dgpzero(\sph^n,\sph^m) \geq \di(f_p,g_p) \geq \lbar \di(f_p,h) - \di(h,g_p) \rbar,\]
for arbitrary $h \in \mf{F}$. In particular, setting $h \equiv 0$, we have $\di(f_p,h) = \subw_{p,\pi}(\sph^n) = \size_p(\sph^n)$ and $\di(g_p,h) = \subw_{p,\pi}(\sph^m) = \size_p(\sph^m)$. Thus we obtain a $\size_p$ bound:
\[\dgpzero(\sph^n, \sph^m) \geq \lbar \size_p(\sph^n) - \size_p(\sph^m) \rbar.\]
This bound can be easily improved using different choices of $h \in \mf{F}$. 

Using the explicit formula of Proposition \ref{prop:p-subdiam}, we are able to computationally obtain lower bounds on $\dgpzero(\sph^n, \sph^m)$. Set $p=1$, $n=1$, and $m=2$. Then $f_1(t) = \frac{t^2}{2\pi}$ and $g_1(t) = \frac{\sin(t) - t\cos(t)}{2}$. Plots of $f_1$ and $g_1$ are shown in Figure \ref{fig:s1-s2}. 

Through Matlab simulations, we find $\dgpzero(\sph^1, \sph^2) = \dninf(\sph^1,\sph^2) \geq \di(f_1,g_1) \geq \mathbf{0.17}$. To contrast this with a previously known lower bound, we refer to \cite[Remark 5.16]{dghlp}, where the lower bound $\dnt(\sph^1,\sph^2) \geq 0.0503$ was obtained. Because $\dninf \geq \dnt$, this previously known lower bound yields $\dninf(\sph^1,\sph^2) \geq 0.0503$. Our new lower bound of $0.17$ improves this threefold.

\begin{figure}
\centering
\begin{minipage}[t]{0.35\textwidth}
\includegraphics[width=\textwidth]{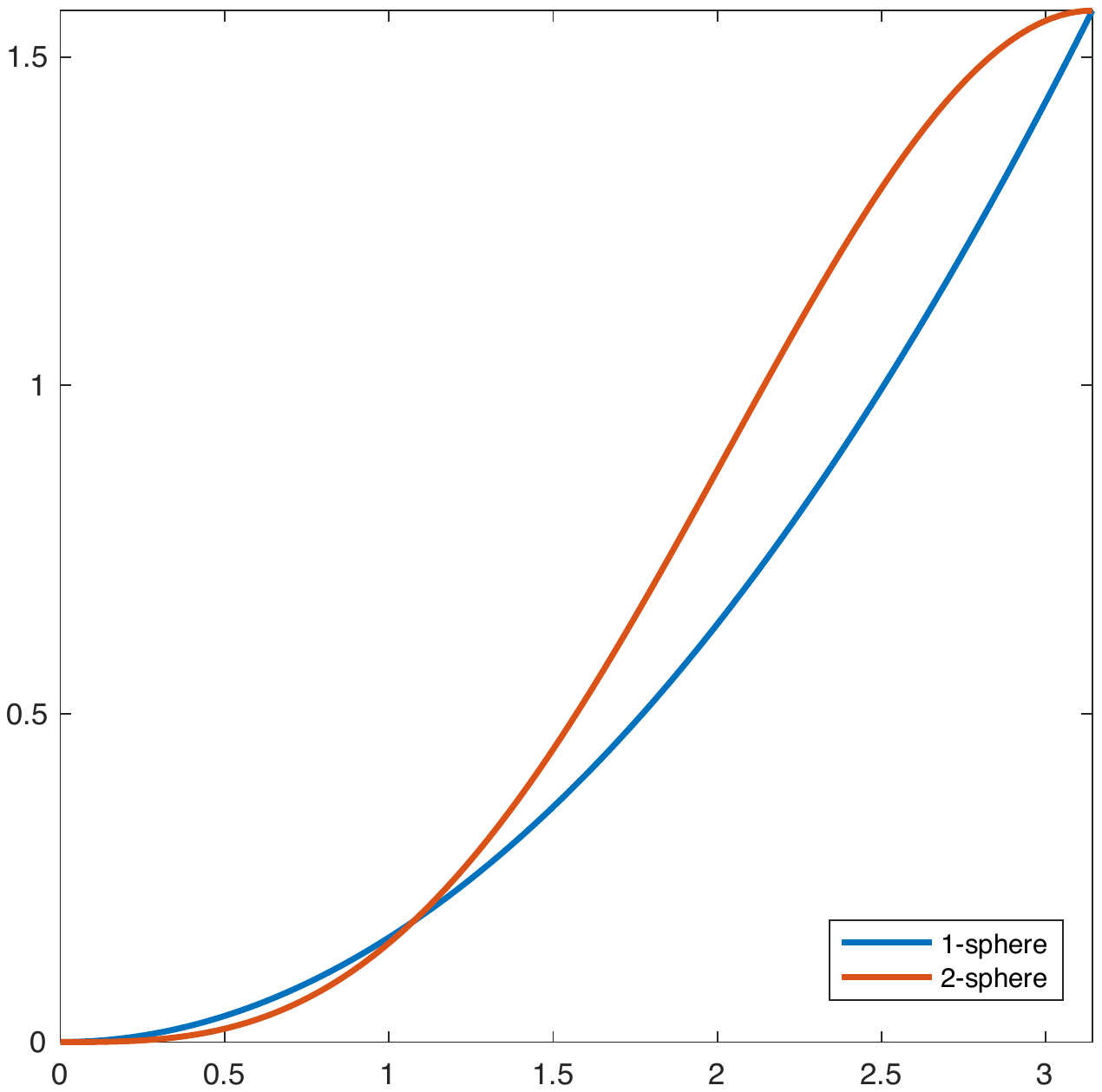}
\end{minipage} 
\begin{minipage}[t]{0.35\textwidth}
\includegraphics[width=\textwidth]{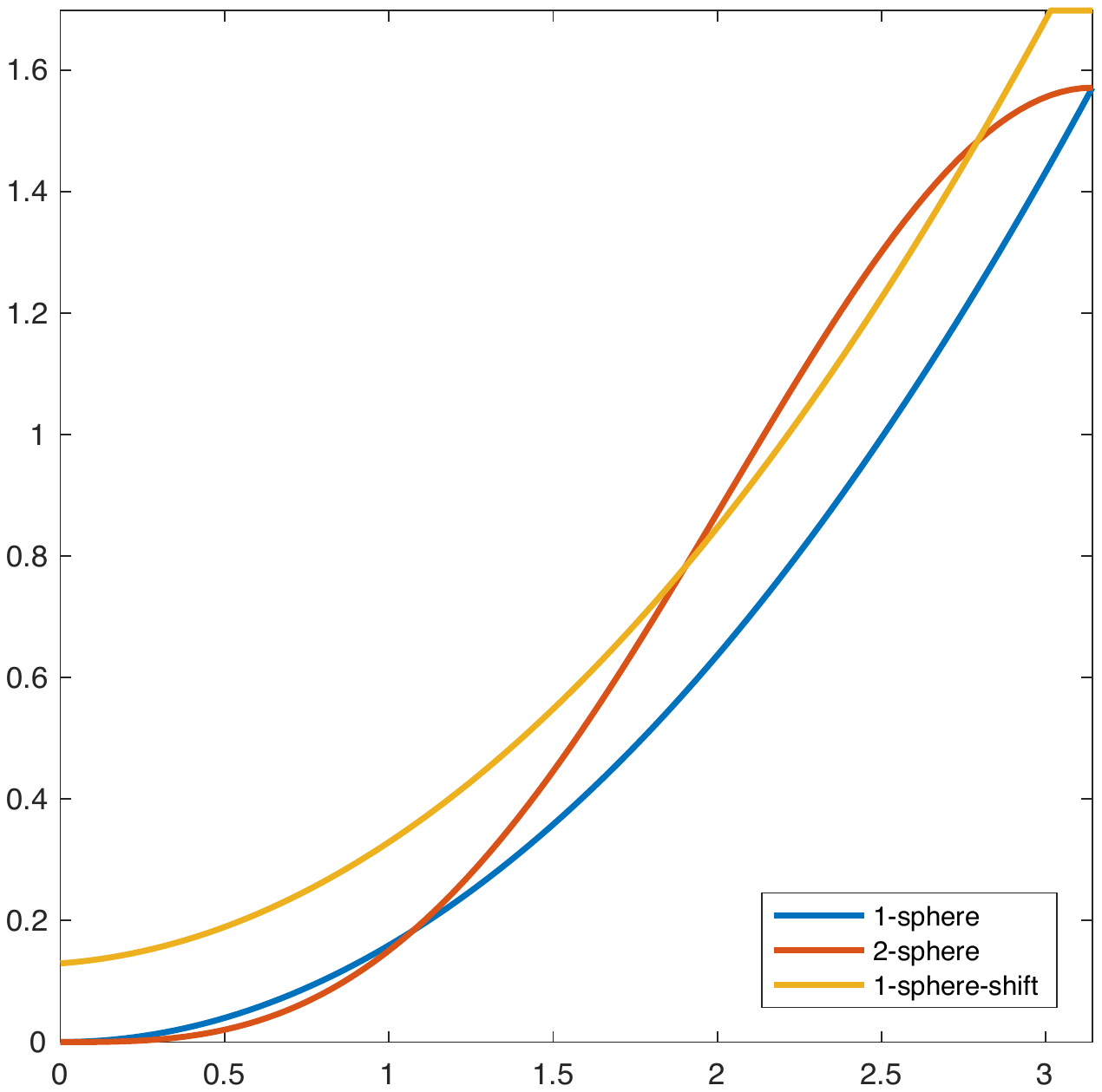}
\end{minipage}
\caption{\textbf{Left:} Plots of $f_1$ and $g_1$ as described in \S\ref{sec:sphere-lbounds}. \textbf{Right:} Plots of $f_1,\, g_1,$ and a shifted version of $f_1$.}
\label{fig:s1-s2}
\end{figure}

\section{Experiments}
\label{sec:exp}

\subsection{Computational aspects}

Numerical experiments in \cite{dgh-sm, hendrikson} involved using an alternate optimization procedure to estimate a local minimum of the GW objective. The methods in \cite{s16, pcs16} used an entropically regularized GW objective (ERGW) which led to fast algorithms. These methods remain valid in the setting of (possibly asymmetric) networks. To complement the existing literature, in this section we present the use of the (\ref{inv:TLB}) lower bound to compute dissimilarities between asymmetric networks. By virtue of the equality (\ref{inv:TLB})=(\ref{inv:RTLB}), this lower bound can be computed by solving a single general OT problem over a cost matrix obtained by solving OT problems over the real line. This is practical because OT problems over $\R$ have closed form solutions, with the caveat that computing all these OT problems is still the main bottleneck in computations. In comparable demonstrations, the ERGW of \cite{pcs16} is orders of magnitude faster, but a standard warning about ERGW is that it is prone to numerical infeasibility issues (see Appendix \ref{sec:computation}). For networks of several hundred nodes, the  (\ref{inv:RTLB}) can be computed exactly at reasonable speed, i.e in less than a minute in Matlab on a 2.3 GHz Intel i5 CPU with 8 GB memory. Our experiments show that (\ref{inv:RTLB}) works well in discriminating networks. 

Next we review the formula for computing OT over $\R$ (see \cite[Remark 2.19]{villani2003topics}) . Let networks $(X,\w_X,\mu_X)$, $(Y,\w_Y,\mu_Y)$ and measurable functions $f:X \r \R$, $g:Y \r \R$ be given. In the $\eccout$ setting, $f = \w_X(x,\cdot)_*\mu_X$ and $g = \w_Y(y,\cdot)_*\mu_Y$. Then let $F, G: \R \r [0,1]$ denote the cumulative distribution functions of $f$ and $g$:
\begin{align*}
F(t) := \mu_X( \{x \in X: f(x) \leq t\}), \qquad G(t):= \mu_Y( \{y \in Y : g(y) \leq t \}).
\end{align*}
The generalized inverses $F\inv:[0,1] \r \R$, $G\inv:[0,1] \r \R$ are given as:
\begin{align*}
F\inv(t) := \inf\{ u \in \R : F(u) \geq t\}, \qquad G\inv(t) := \inf\{ u \in \R : G(u) \geq t\}.
\end{align*}
Then for $p \geq 1$, one has:
\begin{align}
\inf_{\mu \in \coup(f_*\mu_X, g_*\mu_Y)} \int_{\R \times \R} |a - b|^p \diff\mu(a,b) = \int_0^1 | F\inv(t) - G\inv(t) |^p \diff t. \label{eq:OT-R}
\end{align}
For $p = 1$, one obtains a reformulation that incurs lower computational cost, at least in a naive implementation:
\begin{align}
\inf_{\mu \in \coup(f_*\mu_X, g_*\mu_Y)} \int_{\R \times \R} |a - b| \diff\mu(a,b) = \int_{\R} | F(t) - G(t) | \diff t. \label{eq:OT-R-p1}
\end{align}

In our experiments, we computed both the $\eccout$ and $\eccin$ versions of (\ref{inv:RTLB}) and take their maximum as the lower bound. All computations were done for $p=2$. For Wasserstein distance computations, we used the \texttt{mexEMD} code accompanying \cite{pcs16}. Our code and data are available on \url{https://github.com/samirchowdhury/GWnets}.

In a prior version of this paper, before the equality (\ref{inv:TLB})=(\ref{inv:RTLB}) was proved in full generality, we were faced with the problem of solving an ensemble of OT problems over the space $X\times Y$. At the time, we resorted to using entropic regularization to compute the (\ref{inv:TLB}) in a reasonable amount of time. A priori this could also have been done by directly solving the associated linear programs, using e.g. \texttt{mexEMD}. 
While entropic regularization is not used in the current paper, we briefly report on these prior approaches in Appendix \ref{sec:computation}.

\subsection{The network stochastic block model}
\label{sec:sbm-net}

We now describe a generative model for random networks, based on the popular stochastic block model for sampling random graphs \cite{abbe2017community}. The current network SBM model we describe is a composition of Gaussian distributions. However, the construction can be adjusted easily to work with other distributions.

Fix a number of \emph{communities} $N \in \N$. For $1\leq i,j \leq N$, fix a mean $\mu_{ij}$ and a variance $\s^2_{ij}$. This collection $\mc{G}:= \{\mc{N}(\mu_{ij},\s^2_{ij}) : 1\leq i,j\leq N \}$ of $N^2$ independent Gaussian distributions comprise the network SBM. 

To sample a random network $(X,\w_X)$ of $n$ nodes from this SBM, start by fixing $n_i \in \N, 1\leq i \leq N$ such that $\sum_i n_i = n$. For $1\leq i \leq N$, let $X_i$ be a set with $n_i$ points. Define $X:= \cup_{i=1}^n X_i$. Next sample each node weight as $\w_X(x,x') \sim \mc{N}(\mu_{ij},\s^2_{ij})$, where $x \in X_i$ and $x' \in X_j$. Finally, the pair $(X,\w_X)$ is equipped with the uniform measure $\mu_X$ that assigns a mass of $1/n$ to each point.

We now describe the specifics of two experiments on clustering a collection of network SBMs.

\subsection{Experiment: SBMs from cycle networks.}
\label{sec:exp-sbm-community}

Let $N \in \N$, and let $v=[v_1,\ldots, v_N]$ be an $N\times 1$ vector. Define the right-shift operator $\rho$ by $\rho([v_1,\ldots,v_N]) = [v_N,v_1,\ldots, v_{N-1}]$. The \emph{cycle network} $G_N(v)$ is defined to be the $N$-node network whose weight matrix is given by $[v^T, \rho(v)^T, (\rho^2(v))^T,\ldots, (\rho^{N-1}(v))^T]$. The cycle network definition appears elsewhere in the literature, see e.g. \cite{pph}. An illustration is provided in Figure \ref{fig:cyc6}.

\begin{figure}
\begin{center}
\begin{tikzpicture}[every node/.style={font=\footnotesize}]
\begin{scope}[draw]
\node[circle,draw](1) at (-1,1.5){$x_1$};
\node[circle,draw](2) at (1,1.5){$x_2$};
\node[circle,draw](3) at (1.75,0){$x_3$};
\node[circle,draw](4) at (1,-1.5){$x_4$};
\node[circle,draw](5) at (-1,-1.5){$x_5$};
\node[circle,draw](6) at (-1.75,0){$x_6$};
\end{scope}

\begin{scope}[draw]
\path[->] (1) edge [bend left=20] node[above,pos=0.5]{$1$} (2);
\path[->] (2) edge [bend left=20] node[above right,pos=0.5]{$1$} (3);
\path[->] (3) edge [bend left=20] node[below right,pos=0.5]{$1$} (4);
\path[->] (4) edge [bend left=20] node[above,pos=0.5]{$1$} (5);
\path[->] (5) edge [bend left=20] node[below left,pos=0.5]{$1$} (6);
\path[->] (6) edge [bend left=20] node[above left,pos=0.5]{$1$} (1);
\end{scope}

\begin{scope}[xshift=6cm]
\draw[step=0.5cm,color=gray] (-1.5,-1.5) grid (1.5,1.5);

\node at (-1.25,+1.75) {$x_1$};
\node at (-0.75,+1.75) {$x_2$};
\node at (-0.25,+1.75) {$x_3$};
\node at (+0.25,+1.75) {$x_4$};
\node at (+0.75,+1.75) {$x_5$};
\node at (+1.25,+1.75) {$x_6$};

\node at (-1.75,+1.25) {$x_1$};
\node at (-1.75,+0.75) {$x_2$};
\node at (-1.75,+0.25) {$x_3$};
\node at (-1.75,-0.25) {$x_4$};
\node at (-1.75,-0.75) {$x_5$};
\node at (-1.75,-1.25) {$x_6$};

\node at (-1.25,+1.25) {$0$};
\node at (-0.75,+1.25) {$1$};
\node at (-0.25,+1.25) {$2$};
\node at (+0.25,+1.25) {$3$};
\node at (+0.75,+1.25) {$4$};
\node at (+1.25,+1.25) {$5$};

\node at (-1.25,+0.75) {$5$};
\node at (-0.75,+0.75) {$0$};
\node at (-0.25,+0.75) {$1$};
\node at (+0.25,+0.75) {$2$};
\node at (+0.75,+0.75) {$3$};
\node at (+1.25,+0.75) {$4$};

\node at (-1.25,+0.25) {$4$};
\node at (-0.75,+0.25) {$5$};
\node at (-0.25,+0.25) {$0$};
\node at (+0.25,+0.25) {$1$};
\node at (+0.75,+0.25) {$2$};
\node at (+1.25,+0.25) {$3$};

\node at (-1.25,-0.25) {$3$};
\node at (-0.75,-0.25) {$4$};
\node at (-0.25,-0.25) {$5$};
\node at (+0.25,-0.25) {$0$};
\node at (+0.75,-0.25) {$1$};
\node at (+1.25,-0.25) {$2$};

\node at (-1.25,-0.75) {$2$};
\node at (-0.75,-0.75) {$3$};
\node at (-0.25,-0.75) {$4$};
\node at (+0.25,-0.75) {$5$};
\node at (+0.75,-0.75) {$0$};
\node at (+1.25,-0.75) {$1$};

\node at (-1.25,-1.25) {$1$};
\node at (-0.75,-1.25) {$2$};
\node at (-0.25,-1.25) {$3$};
\node at (+0.25,-1.25) {$4$};
\node at (+0.75,-1.25) {$5$};
\node at (+1.25,-1.25) {$0$};
\end{scope}

\end{tikzpicture}
\end{center}
\caption{A cycle network on 6 nodes corresponding to the weight matrix obtained by right-shifting the vector $[0,1,2,3,4,5]^T$. Note that the weights are highly asymmetric.}
\label{fig:cyc6}
\end{figure}
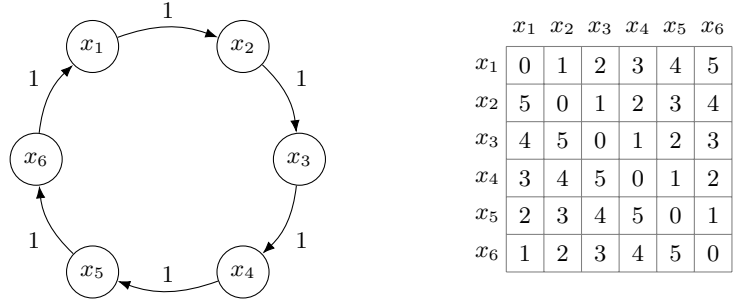

In our first experiment on network SBMs, we started with an $N\times 1$ vector of means $v$ and used this to generate $G_N(v)$. We then used $G_N(v)$ as the matrix of means. To keep the experiment simple, we fixed the matrix of variances to be the $N\times N$ matrix whose entries are all $5$s. We made 5 choices of $v$, and sampled 10 networks for each choice. The objective was then to see how well the (\ref{inv:RTLB}) could split the collection of 50 networks into 5 classes corresponding to the 5 different community structures. The different parameters used in our experiments are listed in Table \ref{tab:sbm-means}.

\begin{table}
\centering
\begin{minipage}{0.45\linewidth}
\tabcolsep=0.2cm
\begin{tabular}{|c||c|c|c|}
\hline 
Class \# & $N$ & $v$ & $n_i$  \\
\hline 
1 & 5 & [0,25,50,75,100] & 10 \\ 
\hline 
2 & 5 & [0,50,100,150, 200]  & 10 \\ 
\hline 
3 & 5 & [0,25,50,75,100]  & 20 \\ 
\hline 
4 & 2 & [0,100] & 25 \\ 
\hline
5 & 5 & [-100,-50,0,50,100] & 10\\ 
\hline 
\end{tabular} 
\end{minipage}
\begin{minipage}{0.2\linewidth}
\tabcolsep=0.15cm
\begin{tabular}{c}
Sample cycle network of means
\end{tabular}
\begin{tabular}{|c|c|c|c|c|}
\hline 
0 & 25 & 50 & 75 & 100 \\ 
\hline 
100 & 0 & 25 & 50 & 75 \\ 
\hline 
75 & 100 & 0 & 25 & 50\\ 
\hline 
50 & 75 & 100 & 0 & 25 \\ 
\hline 
25 & 50 & 75 & 100 & 0\\ 
\hline 
\end{tabular} 
\end{minipage}
\medskip
\caption{\textbf{Left:} The five classes of SBM networks corresponding to the experiment in \S\ref{sec:exp-sbm-community}. $N$ refers to the number of communities, $v$ refers to the vector that was used to compute a table of means via $G_5(v)$, and $n_i$ is the number of nodes in each community. \textbf{Right:} $G_5(v)$ for $v = [0,25,50,75,100]$.} 
\label{tab:sbm-means}
\end{table}

Class 1 is our reference; compared to this reference, class 2 differs in its edge weights, class 3 differs in the number of nodes in each community, class 4 differs in the number of communities, and class 5 differs by having a larger proportion of negative edge weights. The (\ref{inv:RTLB}) results in Figure \ref{fig:tlb-sbm-dissim} show that classes 1 and 3 are treated as being very similar, whereas the other classes are all mutually well-separated. This is consistent, because $\dn$ is not sensitive to the size of the networks (cf. Theorem \ref{thm:dnp-pseudo}). One interesting suggestion arising from this experiment is that the (\ref{inv:RTLB}) can be used for network simplification: given a family of networks which are all at low (\ref{inv:RTLB}) distance to each other, it may be reasonable to retain only the smallest network in the family as the ``minimal representative" network.

\begin{figure}
\centering
\begin{minipage}[t]{0.45\textwidth}
\includegraphics[width=\textwidth]{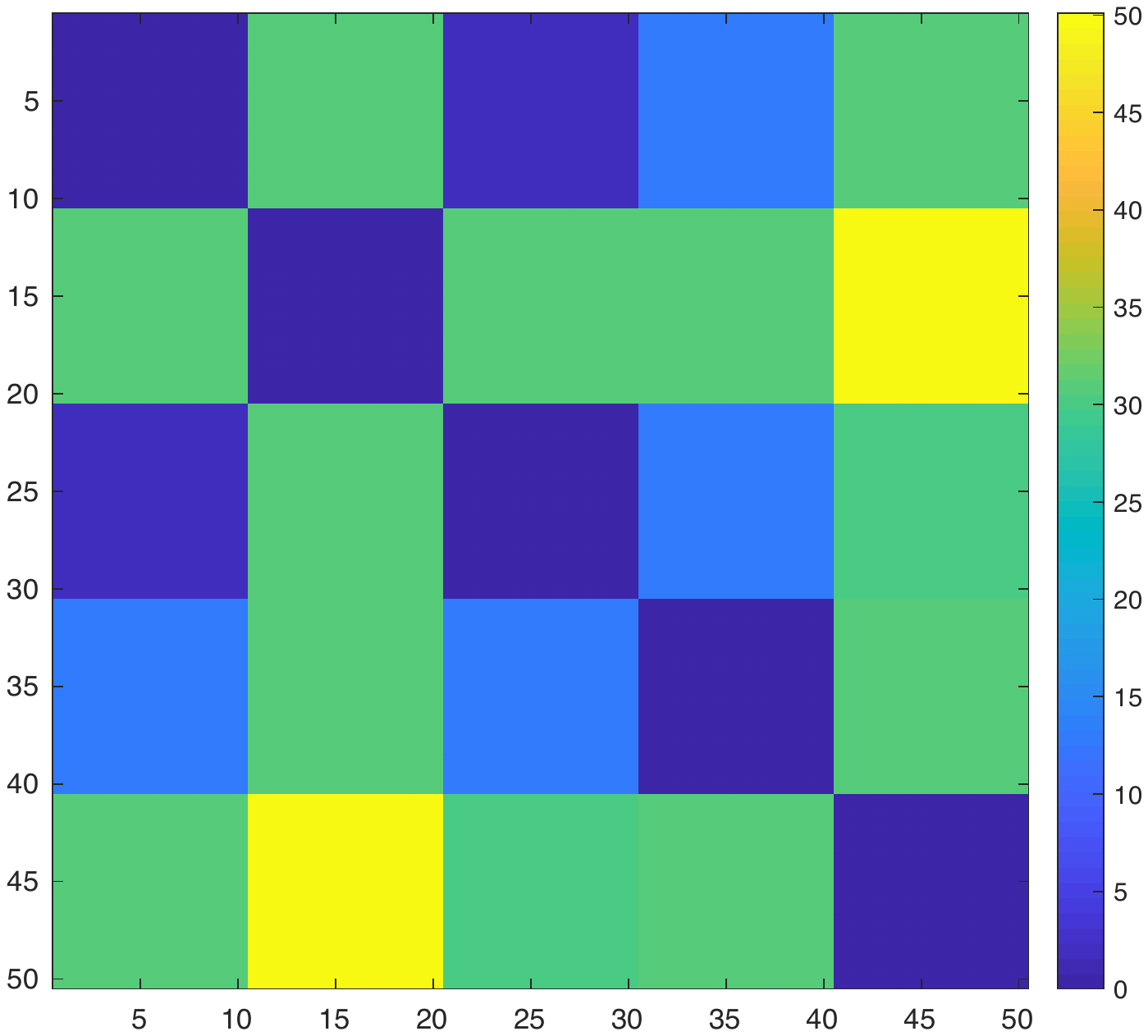}
\end{minipage}
\begin{minipage}[t]{0.45\textwidth}
\includegraphics[width=\textwidth]{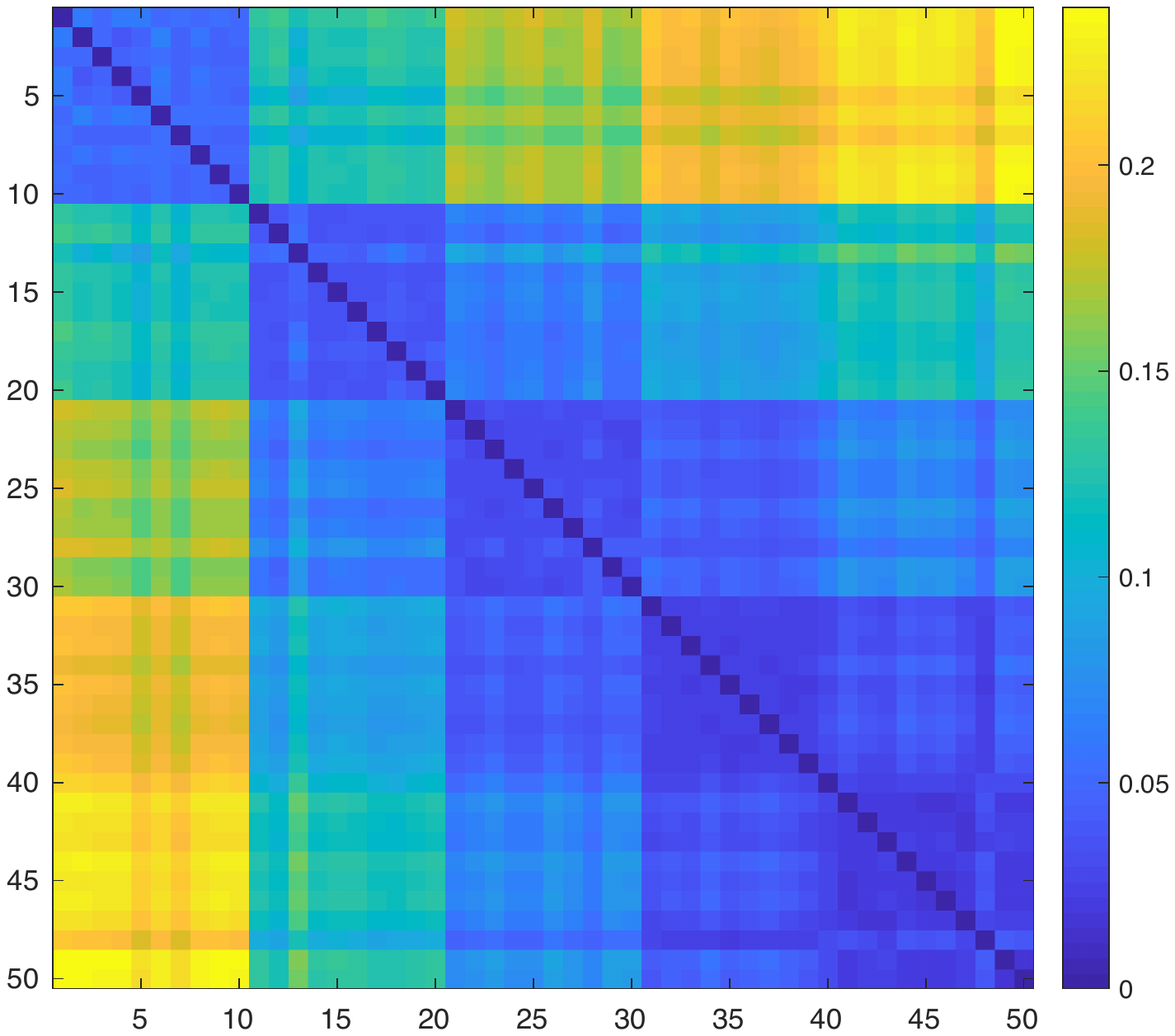}
\end{minipage}
\caption{\textbf{Left:} TLB dissimilarity matrix for SBM community networks in \S\ref{sec:exp-sbm-community}. Classes 1 and 3 are similar, even though networks in Class 3 have twice as many nodes as those in Class 1. Classes 2 and 5 are most dissimilar because of the large difference in their edge weights. Class 4 has a different number of communities than the others, and is dissimilar to Classes 1 and 3 even though all their edge weights are in comparable ranges. \textbf{Right:} (\ref{inv:RTLB}) dissimilarity matrix for two-community SBM networks in \S\ref{sec:exp-sbm-two}.}
\label{fig:tlb-sbm-dissim}
\end{figure}

\subsection{Experiment: Two-community SBMs with sliding means}
\label{sec:exp-sbm-two}

\begin{table}
\centering
\begin{tabular}{|c||c|c|c|}
\hline 
Class \# & $N$ & $v$ & $n_i$  \\
\hline 
1 & 2 & [0,0] & 10 \\ 
\hline 
2 & 2 & [0,5]  & 10 \\ 
\hline 
3 & 2 & [0,10]  & 10 \\ 
\hline 
4 & 2 & [0,15] & 10 \\ 
\hline
5 & 2 & [0,20] & 10\\ 
\hline 
\end{tabular} 
\medskip
\caption{Two-community SBM networks as described in \S\ref{sec:exp-sbm-two}.}
\label{tab:sbm-two}
\end{table}

Having understood the interaction of the (\ref{inv:RTLB}) with network community structure, we next investigated how the (\ref{inv:RTLB}) behaves with respect to edge weights. In our second experiment, we used a $2\times 1$ means vector $v$, and varied $v$ as $[0,0],[0,5],\ldots,[0,20]$ (see Table \ref{tab:sbm-two}). The SBM means were then given by $G_2(v)$ for the various choices of $v$. The variances were fixed to be the all 5s matrix. The edge weight histograms of the resulting SBM networks then looked like  samples from two Gaussian distributions, with one of the Gaussians sliding away from the other. Finally, we normalized each network by its largest weight in absolute value, so that its normalized edge weights were in $[-1,1]$. 

The purpose of this experiment was to test the performance of (\ref{inv:RTLB}) on SBMs coming from a mixture of Gaussians. Note that normalization ensures that simpler invariants such as the $\size$ invariant would likely fail in this setting. The (\ref{inv:RTLB}) still performs reasonably well in this setting, as illustrated by the dissimilarity matrix in Figure \ref{fig:tlb-sbm-dissim}. The linear color gradient is consistent with the ``sliding means" network structure. 

\subsection{Experiment: Real migration networks}
\label{sec:exp-migration}

For an experiment involving real-world networks, we compared global bilateral migration networks produced by the World Bank \cite{worldbank, ozden2011earth}. The data consists of 10 networks, each having 225 nodes corresponding to countries/administrative regions. The $(i,j)$-th entry in each network is the number of people living in region $i$ who were born in region $j$. The 10 networks comprise such data for male and female populations in 1960, 1970, 1980, 1990, and 2000. When extracting the data, we removed the entries corresponding to refugee populations, the Channel Islands, the Isle of Man, Serbia, Montenegro, and Kosovo, because the data corresponding to these regions was incomplete/inconsistent across the database. We assigned uniform mass to the nodes.

The result of applying the (\ref{inv:RTLB}) to this dataset is illustrated in Figure \ref{fig:tlb-migration}. To better understand the dissimilarity matrix, we also computed its single linkage dendrogram. The dendrogram suggests that between 1960 and 1970, both male and female populations had quite similar migration patterns. Within these years, however, migration patterns were more closely tied to gender. This effect is also seen between 1980 and 1990, although male migration in 1990 is more divergent. Finally, migration rates are similar for both male and female populations in 2000, and they are different from migration patterns in prior years. 

The labels in the dissimilarity matrix are as follows: 1-5 correspond to ``f-1960'' through ``f-2000'', and 6-10 correspond to ``m-1960'' through ``m-2000''. The color gradient in the dissimilarity matrix suggests that within each gender, migration patterns change in a way that is parametrized by time. This reflects the shifts in global technological and economical forces which make migration attractive and/or necessary with time.

\begin{figure}
\centering
\begin{minipage}[t]{0.45\textwidth}
\includegraphics[width=\textwidth]{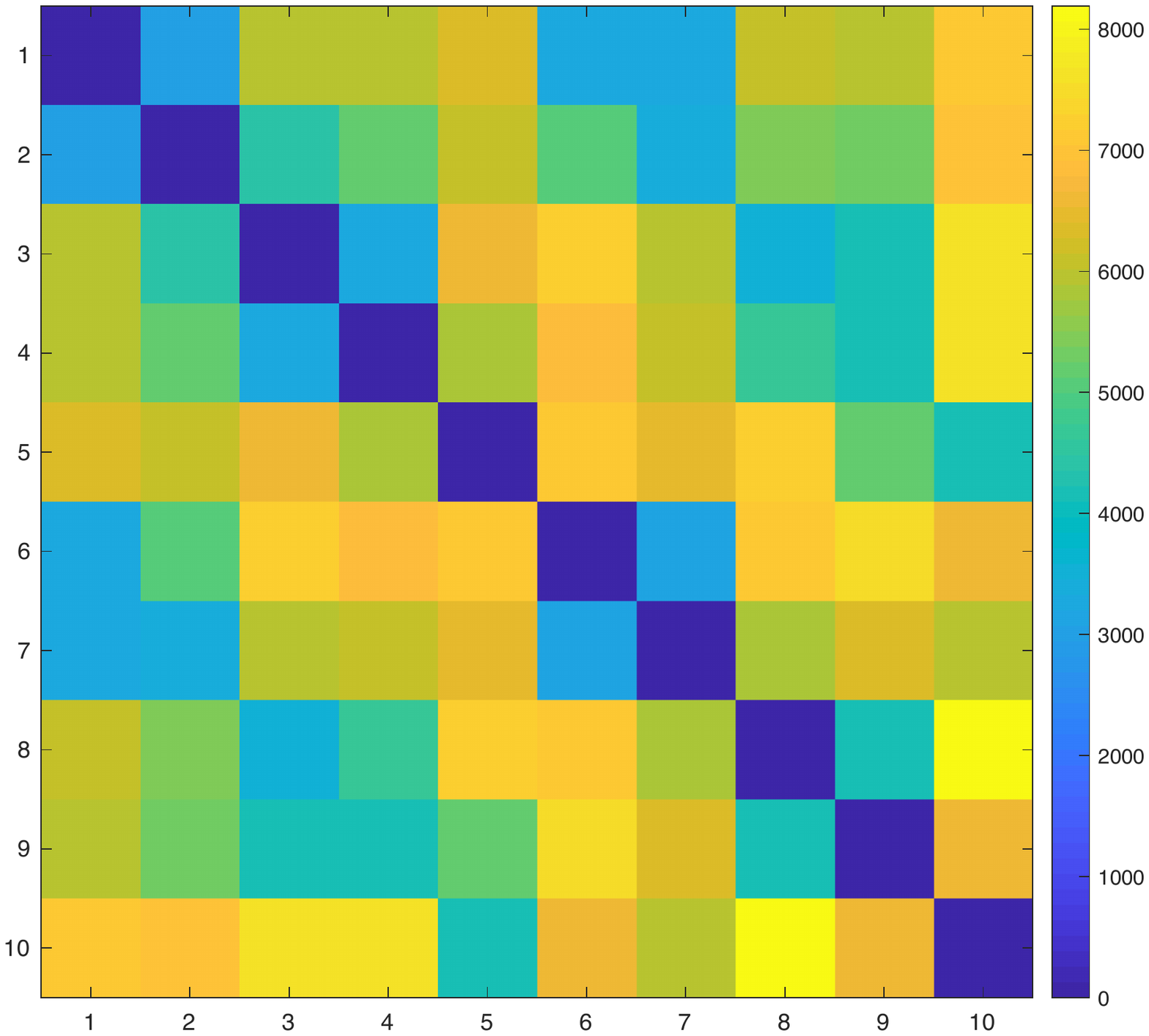}
\end{minipage}
\begin{minipage}[t]{0.45\textwidth}
\includegraphics[width=\textwidth]{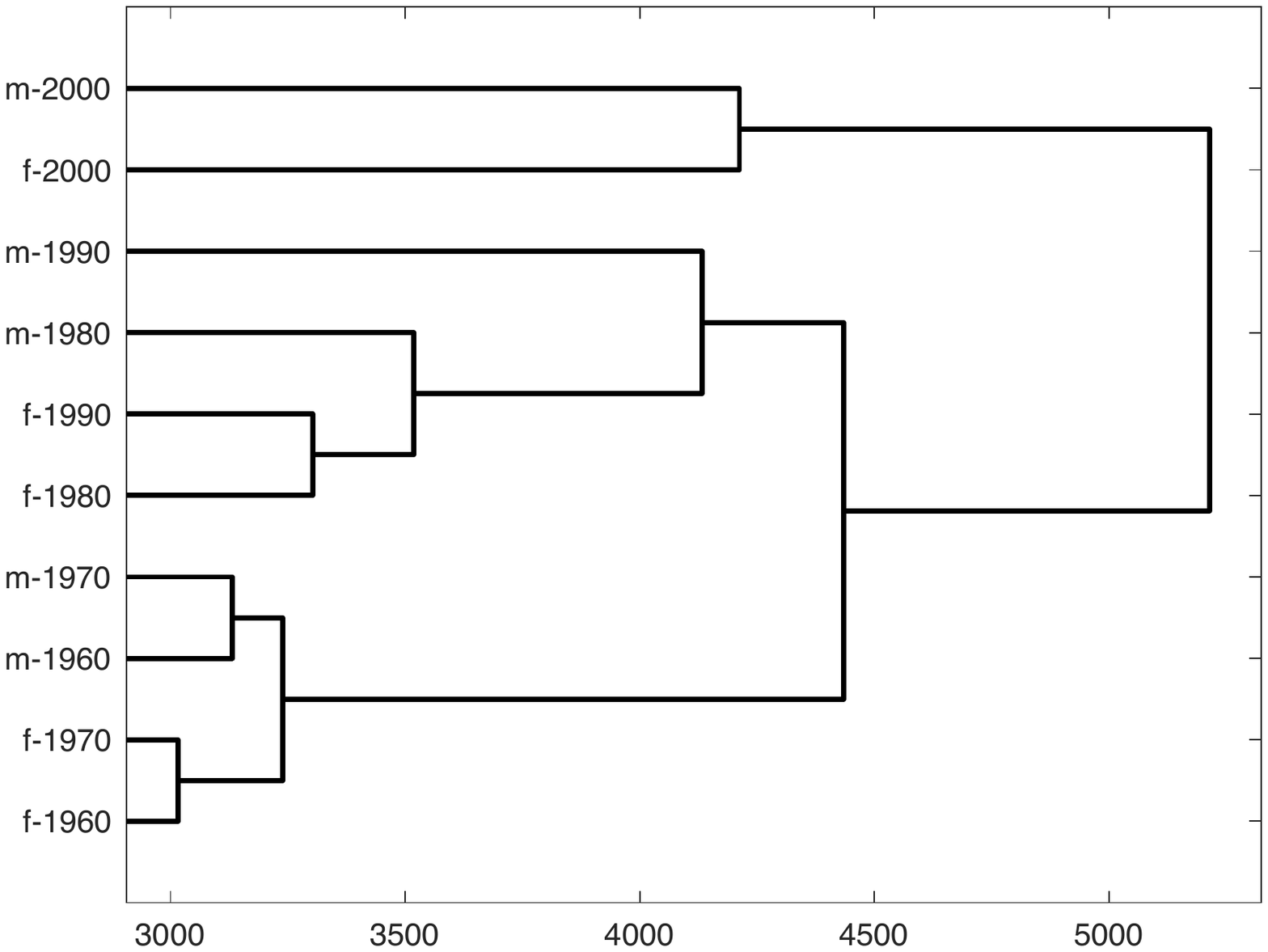}
\end{minipage}
\caption{Result of applying the (\ref{inv:RTLB}) to the migration networks in \S\ref{sec:exp-migration}. \textbf{Left:} Dissimilarity matrix. Nodes 1-5 correspond to female migration from 1960-2000, and nodes 6-10 correspond to male migration from 1960-2000. \textbf{Right:} Single linkage dendrogram. Notice that overall migration patterns change in time, but within a time period, migration patterns are grouped according to gender.}
\label{fig:tlb-migration}
\end{figure}

\section{Discussion} \label{sec:discussion}
We have presented the GW distance as a valid pseudometric on the space of all directed, weighted networks. The crux of this approach is that even though the GW distance was originally formulated for metric measure spaces, the structure of the GW distance automatically forces a metric structure on networks. This yields the insight that the metric structure on the ``space of spaces" is not inherited from the metric on the ground spaces. In particular, while there are several metrics on networks that are combinatorial in nature and hence hard to compute/sensitive to outliers, this GW metric is considerably more relaxed.  The OT-based network invariants that we have presented yield lower bounds on the GW distance which at most involve linear programming, and hence are readily computable. 
Finally, we tested our methods on a range of network datasets. The SBM network model that we defined for these tests will likely yield useful benchmarks for such network methods in future applications.

\medskip

\noindent
\textbf{Acknowledgments} This project was supported by NSF grants IIS-1422400, DMS-1723003, and TRIPODS-1740761. We are especially grateful to the anonymous reviewers for their detailed feedback and comments, and also to Justin Solomon for useful insights regarding computation.



\bibliographystyle{alpha}
\bibliography{biblio}

\newpage
\appendix 
\section{Computation via entropic regularization}
\label{sec:computation}

Entropic regularization (ER), as used in \cite{cuturi2013sinkhorn} and further developed in \cite{benamou2015iterative, s16, pcs16, vayer2019optimal}, can be used in an iterative algorithm that approximates a local minimum of the GW objective \cite{s16,pcs16}. In this section, we describe some heuristics that we found useful when applying ER-based techniques on network data. The main issue that we deal with is the following: initializing an ER-objective for networks having very different edge weights may create cost matrices with values below machine precision, which causes computations to blow up. A related issue that we found was the problem of ``entropic bias", which can be dealt with using well-understood techniques \cite{feydy2019}.

We first explain the notion of entropic regularization and associated difficulties with numerical stability. Throughout this section, we write $M$ to denote a cost matrix depending on the edge weights of networks $(X,\w_X,\mu_X)$, $(Y,\w_Y,\mu_Y)$. This $M$ could be the GW objective, as in \cite{pcs16}, or one of the lower bound matrices from Theorem \ref{thm:stab-push}.

\subsection{Numerical stability of entropic regularization}
\label{sec:entropy-geometry}

Let $(X,\w_X,\mu_X)$, $(Y,\w_Y,\mu_Y)$ be networks with $|X| = m$, $|Y| = n$. For a general $m\times n$ cost matrix $M$, one may consider the entropically regularized optimal transport problem below, where $\lambda \geq 0 $ is a regularization parameter and $H$ denotes entropy:
\begin{align*}
\inf_{p \in \coup(\mu_X,\mu_Y)}\sum_{i,j}M_{ij}p_{ij} - \frac{1}{\lambda}H(p), \qquad
H(m) = - \sum_{i,j}p_{ij}\log p_{ij}.
\end{align*}

As shown in \cite{cuturi2013sinkhorn}, the solution to this problem has the form $\diag(a)*K*\diag(b)$, where  $K:= e^{ - \lambda M}$ is a kernel matrix and $a,b$ are nonnegative scaling vectors in $\R^m, \R^n$, respectively. Here $*$ denotes matrix multiplication, and exponentiation is performed elementwise. An approximation to this solution can be obtained by iteratively scaling $K$ to have row and column sums equal to $\mu_X$ and $\mu_Y$, respectively, and iterating until convergence. This is described in Algorithm \ref{alg:sinkhorn}.

\begin{algorithm}
\caption{Sinkhorn algorithm \cite{cuturi2013sinkhorn}}
\label{alg:sinkhorn}
\begin{algorithmic}
\Procedure{sinkhorn}{$M,\lambda, mA,mB$} \Comment{$M$ an $m\times n$ cost matrix, $mA,mB$ prob. measures}
	\State $a \gets 1_m$, $b \gets 1_n$ \Comment{scaling updates, initialize as all-ones vectors}
	\State $K_{ij} \gets {\exp}(-\lambda M_{ij})$ \Comment{initialize kernel}
	\Repeat
		\State $b\gets mB./(K'a)$,  $a \gets mA./(Kb)$
	\Until{convergence}
	\State \Return{$\diag(a)K\diag(b)$}
\EndProcedure
\end{algorithmic}
\end{algorithm}

As pointed out in \cite{schmitzer2019stabilized, chizat2018scaling, chizat-thesis}, using a large value of $\lambda$ (corresponding to a small regularization) leads to numerical instability, where values of $K$ can go below machine precision and entries of the scaling udpates $a,b$ can also blow up. For example, Matlab will interpret $e^{-1000}$ as 0, which is a problem with even a moderate choice of $\lambda = 200$ and $M_{ij} = 50$. Theoretically, it is necessary to have $K$ be a positive matrix for the Sinkhorn algorithm to converge to the correct output \cite{sinkhorn1964relationship, sinkhorn1967diagonal}. Practitioners use a range of techniques to deal with the numerical instability, e.g. occasionally ``absorbing" extreme values of $a,b$ into the kernel $K$ (log-domain absorption), or gradually updating $\lambda$ after starting with a conservative value (see \cite{chizat-thesis} for more details). 

Specifically in the network setting, initializing the kernel matrix $K$ can be tricky due to the wide range of edge weights in the dataset: both within a network and between different networks. For example, in the migration network database, the migration into a large country like the USA is separated by several orders of magnitude from that of a smaller country, such as Austria. Furthermore, migration values differ significantly between years, e.g. between 1960 and 2000.

As discussed in \cite{chizat-thesis}, many entries of the stabilized kernel obtained as above could be below machine precision, but the entries corresponding to those on which the optimal plan is supported are likely to be above the machine limit. Indeed, this sparsity may even be leveraged for additional computational tricks. 

The techniques for stabilizing the entropy regularized OT problem are not the focus of our work, but because these considerations naturally arose in our computational experiments, we describe some strategies we undertook that are complementary to the techniques available in the current literature. In order to provide a perspective complementary to that presented in \cite{chizat-thesis}, we impose the requirement that \emph{all} entries of the kernel matrix remain above machine precision.  

\medskip
\noindent
\textbf{Initializing in the log domain.} A simple adaptation of the ``log domain absorption" step referred to above yields a ``log initialization" method that works well in most cases for initializing $K$ to have values above machine precision. To explain this method, we first present an algorithm (Algorithm \ref{alg:sinkhorn-log}) for the log domain absorption method. We follow the presentation provided in \cite{chizat-thesis}, making notational changes as necessary.

\begin{algorithm}
\caption{Sinkhorn with partial log domain steps}
\label{alg:sinkhorn-log}
\begin{algorithmic}
\Procedure{sinkhornLog}{$M,\lambda, mA,mB$} \Comment{$M$ an $m\times n$ cost matrix, $mA,mB$ prob. measures}
	\State $a \gets 1_m$, $b \gets 1_n$ \Comment{scaling updates}
	\State $u \gets 0_m$, $v \gets 0_n$ \Comment{log domain storage of large $a,b$}
	\State $K_{ij} \gets {\exp}(\lambda(-M_{ij} + u_i + v_j))$ \Comment{initialize kernel}
	\While{stopping criterion not met} 
		\State $b\gets mB./(K'a)$
		\State $a \gets mA./(Kb)$
		\If{$\max(\max(a),\max(b)) > \texttt{threshold}$}
			\State $u \gets u + (1/\lambda)\log(a)$ \Comment{store $a,b$ in $u,v$}
			\State $v \gets v + (1/\lambda)\log(b)$
			\State $K_{ij} \gets {\exp}(\lambda(-M_{ij} + u_i + v_j))$ \Comment{absorb $a,b$ into $K$}
			\State $a \gets 1_m$, $b \gets 1_n$ \Comment{after absorption, reset $a,b$}
		\EndIf
	\EndWhile
	\State \Return{$\diag(a)K\diag(b)$}
\EndProcedure
\end{algorithmic}
\end{algorithm}

Notice that in Algorithm \ref{alg:sinkhorn-log}, $K$ might already have values below machine precision at initialization. To circumvent this, we can add a preprocessing step that yields a stable initialization of $K$. This is outlined in Algorithm \ref{alg:log-initial}. An important point to note about Algorithm \ref{alg:log-initial} is that the user needs to choose a function \texttt{decideParam($\a,\b$)} which returns a ``translation factor" $\g$, where $\a$ and $\b$ are as stated in the algorithm. This number $\g$ should be such that $\exp(-\lambda\b + 2\lambda\g)$ is above machine precision, but $\exp(-\lambda\a + 2\lambda\g)$ is not too large. The crux of Algorithm \ref{alg:log-initial} is that by choosing large initial scaling vectors $a,b$ and immediately absorbing them into the log domain, the extreme values of $M$ are canceled out before exponentiation.

\begin{algorithm}
\caption{Log domain initialization of $K$}
\label{alg:log-initial}
\begin{algorithmic}
\Procedure{logInitialize}{$M,\lambda$} \Comment{$M$ an $m\times n$ cost matrix}
	\State $\alpha \gets \min(M)$, $\beta \gets \max(M)$ \Comment{scan $M$ for max and min values}
	\State $\gamma \gets \texttt{decideParam($\a,\b$)}$ \Comment{\texttt{decideParam} is an independent 	function}
	\State $a \gets {\exp}(-\lambda\gamma) 1_m$, $b \gets \texttt{exp}(-\lambda\gamma)  1_n$
	\State $u \gets  0_m$, $v \gets 0_n$
	\State $K_{ij} \gets {\exp}(\lambda(-M_{ij} + \gamma+ \gamma))$ \Comment{$K$ is stably initialized}
	\State perform rest of \Call{sinkhornLog}{} as usual	
\EndProcedure
\end{algorithmic}
\end{algorithm}

\medskip
\noindent
\textbf{A geometric interpretation in the $p=2$ case.} The preceding initialization method has its limitations: depending on how far $\min(M),\max(M)$ are spread apart, the log initialization step might not be able to yield an initial kernel $K$ that has all entries above machine precision and below the machine limit. In such a case, one recourse is to choose a different value of $\lambda$. Thus when given a database of networks $X_1,\ldots, X_n$ and cost matrices arising from comparing these networks, one may need to choose $\lambda_{ij} = \lambda_{ji}$ for each pair $\{X_i,X_j\}$. It turns out that these potentially different $\lambda$ values can be related to a global $\lambda^*$ value by rescaling the networks in a geometric manner, using observations from \cite{sturm2012space}. This is described below. In what follows, we always have $p=2$.

\subsubsection{Sturm's cosine rule construction}
Let $(X,\w_X,\mu_X), (Y,\w_Y,\mu_Y) \in \Nm$. 
Recall from Example \ref{ex:1-pt-dnp} that $\dnt(X,N_1(0)) = \tfrac{1}{2}\size_2(X)$. 
Define $s:=\tfrac{1}{2}\size_2(X,\w_X,\mu_X)$, $t:=\tfrac{1}{2}\size_2(Y,\w_Y,\mu_Y)$. 
For an optimal coupling $\mu \in \coup(\mu_X,\mu_Y)$, we have:
\begin{align}
\dnt(X,Y)^2 &= \frac{1}{4} \int\int  \w_X(x,x')^2 +  \w_Y(y,y')^2 - 2\w_X(x,x')\w_Y(y,y') \diff\mu(x,y)\diff\mu(x',y') \nonumber \\
&= s^2 + t^2 - \frac{1}{2}\int\int  \w_X(x,x') \w_Y(y,y')\diff\mu(x,y)\diff\mu(x',y'), \label{eq:dn-quadratic-1}
\end{align}
where the first equality holds because $|a-b|^2 = \la a-b, a-b \ra = |a|^2 + |b|^2 - 2ab$ for all $a,b\in \R$, and the last equality holds because $\w_X(x,x'), \w_Y(y,y')$ do not depend on $\mu_Y,\mu_X$, respectively.
Sturm \cite[Lemma 4.2]{sturm2012space} observed the following ``cosine rule" structure. Define 
\begin{align}
\w_X' := \frac{\w_X}{2s},
\qquad
\w_Y' := \frac{\w_Y}{2t}. \label{eq:scaling-reference}
\end{align}
Then $\size_2(X,\w'_X) = \frac{1}{2s}\size_2(X,\w_X) = 1 = \frac{1}{2t}\size_2(Y,\w_Y) = \size_2(Y,\w'_Y)$. 
A geometric fact about this construction is that $(X,\w_X,\mu_X)$, $(Y,\w_Y,\mu_Y)$ lie on geodesic rays connecting $\mc{X}:= (X,\w_X',\mu_X)$ and $\mc{Y}:= (Y,\w_Y',\mu_Y)$ respectively to $N_1(0)$. Actually, once $(X,\w_X,\mu_X)$ and $(Y,\w_Y,\mu_Y)$ are chosen, the geodesic rays are automatically defined to be given by the scalar multiples of $\w_X,\w_Y$. Then we independently define $\mc{X}$ and $\mc{Y}$ to be representatives of the weak isomorphism class of networks at $\dnt$ distance $1/2$ from $N_1(0)$ that lie on these geodesics. We illustrate a related situation in Figure \ref{fig:scaling}, and refer the reader to \cite{sturm2012space} for further details. Implicitly using this geometric fact, we fix $\mc{X}, \mc{Y}$ as above and treat $(X,\w_X,\mu_X)$, $(Y,\w_Y,\mu_Y)$ as $2s$ and $2t$-scalings of $\mc{X}$ and $\mc{Y}$, respectively (i.e. such that Equation \ref{eq:scaling-reference} is satisfied). Then we have:
\begin{align}
&4\dnt((X,\w_X,\mu_Y),(Y,\w_X,\mu_Y))^2 - 4s^2 - 4t^2 \nonumber \\
&= \int\int 4s^2 \w'_X(x,x')^2 + 4t^2 \w'_Y(y,y')^2 - 8st \w'_X(x,x')\w'_Y(y,y') \diff\mu(x,y)\diff\mu(x',y') - 4s^2 - 4t^2 \nonumber \\
&= 4s^2 \size_2(X,\w'_X)^2+ 4t^2 \size_2(Y,\w'_Y)^2 - 4s^2 - 4t^2 - 8st \int\int  \w'_X(x,x')\w'_Y(y,y') \diff\mu(x,y)\diff\mu(x',y') \nonumber \\
&= -8st\int\int\w_X'(x,x')\w_Y'(y,y')\diff\mu(x,y)\diff\mu(x',y'), \label{eq:dn-quadratic-2}
\end{align}
where the last equality holds because $\size_2(X,\w'_X) = 1 = \size_2(Y,\w'_Y)$. Since $(X,\w_X,\mu_X), (Y,\w_Y,\mu_Y)$ were $2s,2t$-scalings of $\mc{X}$ and $\mc{Y}$ for arbitrary $s,t > 0$, this shows in particular that the quantity 
\[(1/2st)\lp \dnt((X,\w_X,\mu_Y),(Y,\w_Y,\mu_Y))^2 - s^2 -t^2\rp \tag{cosine rule}\] 
depends only on the reference networks $\mc{X}$ and $\mc{Y}$, and is independent of $s$ and $t$. 

\subsubsection{Interpretation of $\lambda$ and rescaling}
\label{sec:lambda-rescaling}

Suppose now that we are in a setting where $\lambda^* > 0$, $(X,\w_X,\mu_X)$, $(Y,\w_Y,\mu_Y)$, and a cost matrix $M$ depending on $\w_X,\w_Y$ are all fixed. Suppose also that $e^{-\lambda^*M}$ contains values below machine precision, and $\lambda_{XY} > 0$ is such that $e^{-\lambda_{XY}M}$ has all values above machine precision. Then one may define $M^* := \frac{\lambda_{XY}}{\lambda^*}M$, so that $e^{-\lambda^*M^*} = e^{-\lambda_{XY}M}$. Here $M^*$ is a rescaled cost matrix, and in typical use cases, it is the cost matrix obtained from rescaled weights $\w_X', \w_Y'$. For example, if $M = \w_X\w_Y$ (as in the integrand of Equation \ref{eq:dn-quadratic-1}, also see \cite{pcs16}), then $M^* = \w_X'\w_Y'$, where $\w_X'$, $\w_Y'$ are rescaled from $\w_X$, $\w_Y$ by $\sqrt{\lambda_{XY}/\lambda^*}$. By the observations from \cite{sturm2012space} presented above, these rescalings are compatible with the geometry of $(\Nm,\dnt)$, in the sense that the rescaled networks lie on geodesics connecting the original networks to the basepoint $N_1(0)$. This is illustrated in Figure \ref{fig:scaling}. See \cite{sturm2012space} for more details about the geodesic structure of gauged measure spaces; the analogous results hold for $(\Nm,\dnt)$.

\begin{figure}[h]
\centering
\def\svgwidth{0.4\textwidth}
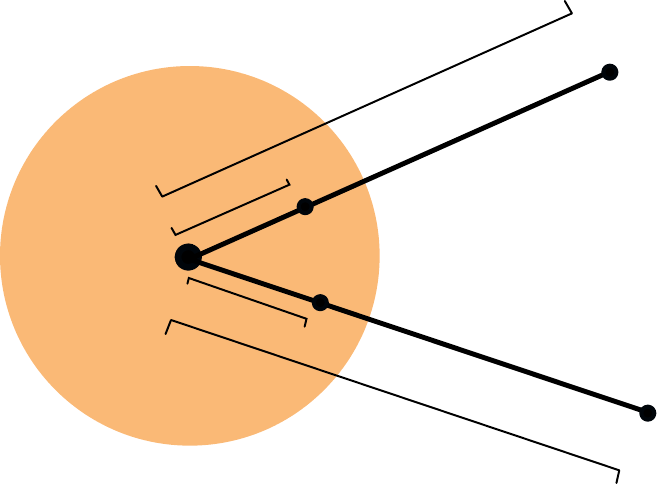
\caption{Interaction between the entropic regularization parameter and rescalings, cf. \S\ref{sec:lambda-rescaling}. Choosing a regularization parameter $\lambda_{XY}$ depending on the edge weights of $X$ and $Y$ is essentially the same as using a fixed parameter $\lambda^*$ with (edge-weight) rescaled versions $\mc{X}, \mc{Y}$ of $X$ and $Y$. Here $\mc{X}$, $\mc{Y}$ live on geodesic rays connecting $N_1(0)$ to $X$ and $Y$. The letters $s,\s,t,\t$ represent $\dnt$-distances.}
\label{fig:scaling}
\end{figure}

\end{document}